\documentclass[11pt]{article}
\usepackage[T1]{fontenc}
\usepackage[top=1in, bottom=1in, left=1in, right=1in]{geometry}

% Language setting
% Replace `english' with e.g. `spanish' to change the document language
\usepackage[english]{babel}
% \usepackage{amsthm}
% Set page size and margins
% Replace `letterpaper' with `a4paper' for UK/EU standard size
% Useful packages
% \usepackage{dsfont}
\usepackage{bbold}
\usepackage{tikz}
\usepackage[colorlinks=true, allcolors=blue]{hyperref}
\usepackage[linesnumbered, ruled, vlined]{algorithm2e}
\usepackage{comment}
\usepackage{subcaption}
\usepackage{thm-restate}
\usepackage{amsthm,amsmath,amssymb}
\usepackage{cleveref}
\usepackage{graphicx}
\usepackage{enumitem}
\usepackage{xcolor}
\usepackage{bm}
\usepackage{natbib}
\usetikzlibrary{decorations.pathmorphing}
\usetikzlibrary{decorations.pathreplacing}
\usetikzlibrary{matrix,shapes,arrows,positioning,chains, calc}

%% Jiaxin: font, feel free to change if you don't like!
% \usepackage[varqu]{zi4}
\RequirePackage[type1,tt=false]{libertine}
\newcommand{\myangle}[1]{\left\langle #1\right\rangle}

\allowdisplaybreaks

% \newcommand{\todo}[1]{{\color{red}(TBD: #1)}}
% \newcommand{\alg}{\mathsf{ALG}}
% \newcommand{\mms}{\mathsf{MMS}}
% \newcommand{\fa}{f_{t-1}} %{f_{before}}
% \newcommand{\fm}{g} %{f_{mid}}
% \newcommand{\fb}{f_{t}} %{f_{after}}
% \newcommand{\Fa}{F_{t-1}} %{F_{before}}
% \newcommand{\Fm}{G} %{F_{mid}}
% \newcommand{\Fb}{F_{t}} %{F_{after}}
% \newcommand{\ein}{e_{in}}
% \newcommand{\eout}{e_{out}}
% \newcommand{\emin}{e_{min}}
% \newcommand{\emax}{e_{max}}
% \Crefname{equation}{Equation}{Equations}
% \allowdisplaybreaks[4]

\title{Online MMS Allocation for Chores}
\author{
Jiaxin Song$^{2}$ \\
\texttt{jiaxins8@illinois.edu}
\and
Biaoshuai Tao$^{1}$ \\
\texttt{bstao@sjtu.edu.cn}
\and
Wenqian Wang$^{1}$ \\
\texttt{wangwenqian@sjtu.edu.cn}
\and
Yuhao Zhang$^{1}$ \\
\texttt{zhang\_yuhao@sjtu.edu.cn}
\\[1ex]
\begin{tabular}{cc}
$^1$Shanghai Jiao Tong University &
$^2$University of Illinois Urbana-Champaign
\end{tabular}
}

\date{}

\crefname{algocf}{Algorithm}{Algorithms}

\newcommand{\A}{\mathcal{A}}

\newcommand{\MMS}{{\rm MMS}\xspace}
\newcommand{\game}{{\sc Stacking game}\xspace}
\newcommand{\abs}[1]{\left\vert#1\right\vert}

% --- Old Command ---

% \newcommand{\mms}{\mathsf{MMS}}
% \newcommand{\fs}{f^{t}} %{f_{start}}
% \newcommand{\fm}{g^t} %{f_{mid}}
% \newcommand{\fe}{f^{t+1}} %{f_{end}}
% \newcommand{\tfs}{\tilde{\fs}}
% \newcommand{\tfm}{\tilde{\fm}}
% \newcommand{\tfe}{\tilde{\fe}}
% \newcommand{\Fs}{F^t} %{F_{start}}
% \newcommand{\Fm}{G^t} %{F_{mid}}
% \newcommand{\Fe}{F^{t+1}} %{F_{end}}
% \newcommand{\tFs}{\tilde{\Fs}}
% \newcommand{\tFm}{\tilde{\Fm}}
% \newcommand{\tFe}{\tilde{\Fe}}

\newcommand{\x}{\phi}
\newcommand{\bc}{\bm{c}}
\newcommand{\bd}{\bm{d}}
\renewcommand{\A}{\mathcal{A}}
\newcommand{\B}{\mathcal{B}}
\newcommand{\X}{\mathcal{X}}

\newcommand{\tA}{\tilde{A}}
\newcommand{\tB}{\tilde{B}}

\renewcommand{\u}{\theta}

\newcommand{\jx}[1]{{\color{violet}[Jiaxin: #1]}}
\newcommand{\TBD}[1]{{\color{blue}(TBD: #1)}}
\definecolor{biaoshuaiFavourate}{rgb}{0.50, 0.30, 0.07}

\newcommand{\wq}[1]{{\color{blue}[wwq: #1]}}

% --- Theorem Environment ---
\theoremstyle{plain}
\newtheorem{theorem}{Theorem}
\newtheorem{lemma}{Lemma}

\newtheorem{proposition}{Proposition}

\theoremstyle{definition} % Biaoshuai: I added this to let "definitions" in their correct style
\newtheorem{definition}{Definition}

\newtheorem{example}[theorem]{Example}

% --- --- --- --- --- ---

% --- Cref ---
\Crefname{equation}{Equation}{Equations}
% \allowdisplaybreaks[4]
% --- --- --- --- --- ---

\begin{document}
\maketitle

\begin{abstract}
We study the problem of fair division of indivisible chores among $n$ agents in an online setting, where items arrive sequentially and must be allocated irrevocably upon arrival. The goal is to produce an $\alpha$-MMS allocation at the end. Several recent works have investigated this model, but have only succeeded in obtaining non-trivial algorithms under restrictive assumptions, such as the two-agent bi-valued special case (Wang and Wei, 2025), or by assuming knowledge of the total disutility of each agent (Zhou, Bai, and Wu, 2023). For the general case, the trivial $n$-MMS guarantee remains the best known, while the strongest lower bound is still only $2$.

We close this gap on the negative side by proving that for any fixed $n$ and $\varepsilon$, no algorithm can guarantee an $(n - \varepsilon)$-MMS allocation. Notably, this lower bound holds precisely for every $n$, without hiding constants in big-$O$ notation, thereby exactly matching the trivial upper bound.

Despite this strong impossibility result, we also present positive results. We provide an online algorithm that applies in the general case, guaranteeing a $\min\{n, O(k), O(\log D)\}$-MMS allocation, where $k$ is the maximum number of distinct disutilities across all agents and $D$ is the maximum ratio between the largest and smallest disutilities for any agent. This bound is reasonable across a broad range of scenarios and, for example, implies that we can achieve an $O(1)$-MMS allocation whenever $k$ is constant. Moreover, to optimize the constant in the important personalized bi-valued case, we show that if each agent has at most two distinct disutilities, our algorithm guarantees a $(2 + \sqrt{3}) \approx 3.7$-MMS allocation.

\end{abstract}

\newpage
\section{Introduction}

We study the problem of fair division for indivisible chores (items with negative utilities). We have $n$ agents and $m$ chores, and our goal is to assign the chores to the agents. Each chore $j$ has a personalized disutility $d_i(j)$ for each agent $i$. The goal is to make every agent feel fair. 
There are various notions of capturing fairness.
These fairness notions can be divided into two categories: \emph{envy-based} and \emph{share-based} notions.
Notions falling into the former categories are comparison-based, with the typical example of \emph{envy-freeness}~\citep{gamow1958puzzle,Foley67,varian1973equity}, which requires that every agent weakly prefers her own allocated share to any other agent's.
This is not always possible if indivisible items are considered (e.g., $m<n$), and many relaxations such as \emph{envy-freeness up to one/any item} and their approximation variants are then studied~\citep{lipton2004approximately,budish2011combinatorial,caragiannis2019unreasonable,plaut2020almost,chaudhury2020efx,berger2022almost,amanatidis2024pushing}.
Share-based notions, on the other hand, typically set a threshold for each agent, and they require that each agent's utility is above this threshold.
Typical share-based notions include \emph{proportionality}~\citep{Steinhaus48}, which requires that each agent's utility (disutility, resp.) for her received allocation is at least (at most, resp.) her valuation of the entire item set divided by $n$ for allocating goods (chores, resp.).
That is, each agent's satisfaction must at least meet the average.
Same as envy-freeness, a proportional allocation is not guaranteed to exist when items are indivisible.
Two different types of relaxation have been studied in the past.
One type focuses on ``almost proportionality'': proportionality can be guaranteed by adding/removing one item (or a small number of items) to/from an agent's allocated bundle~\citep{DBLP:conf/sigecom/ConitzerF017,DBLP:conf/aaai/BarmanK19,DBLP:journals/orl/AzizMS20}.
The other type focuses on relaxing the ``average threshold'': instead of setting the threshold to be the average value, we consider the thresholds depending on the ``most balanced'' $n$-partition of the item set.
Typical notions of this type include \emph{max-min share} (for goods allocation) and \emph{min-max share} (for chores allocation), and their approximation versions~\citep{DBLP:conf/bqgt/Budish10}.

%\paragraph{Envy-Based} xxx
%\paragraph{Share-Based} xx 

Traditional studies of these problems assume prior knowledge of all agents and items, and focus on discussing the existence of fair allocations or how efficiently such allocations can be constructed. This paper shifts the focus to the \emph{online} scenario. In many real-world settings, division processes occur over time in an online manner. For instance, when assigning jobs to employees on an online service platform, jobs typically arrive over time, and we must make irrevocable assignments at their arrival. This motivates the online fair division problem. We assume that the jobs arrive in a sequence, and we must irrevocably allocate each job to an agent according to this arrival order, and we aim to achieve fairness at the end. Note that the algorithm does not know any information about the future, so the only thing it can do is to guarantee fairness at every time. 
There are already several studies that focus on maintaining fairness in online settings~\citep{benade2018make,he2019achieving,zeng2020fairness,aleksandrov2020online,zhou2023multi,wang2024improved,neoh2025online,kulkarni2025online,amanatidis2025online,DBLP:journals/corr/abs-2505-24321}.

In our paper, we adopt the share-based notion of fairness \emph{the min-max share} (\MMS). 
Informally, a threshold $\MMS_i$ is defined for each agent $i$, which is the value/disutility of agent $i$'s least preferred bundle in the most balanced $n$-partition of the item set, and \MMS requires each agent's disutility for her allocated bundle is at most this threshold $\MMS_i$.
It is known that \MMS is not always achievable even in the offline version, so its approximation version has been studied in the past: an allocation is $\alpha$-approximate \MMS if each agent $i$'s disutility for her allocated bundle is at most $\alpha\cdot\MMS_i$.
In the offline setting, the best upper bound known is $15/13$~\citep{DBLP:conf/aaai/GargHS25}, while the best lower bound known is $44/43$~\citep{DBLP:conf/wine/FeigeST21}.

Our goal is to design an algorithm that maintains an approximate \MMS allocation throughout the \emph{online} arrival of items. We say that an algorithm achieves a competitive ratio of $\alpha$ of MMS if it guarantees an $\alpha$-MMS allocation in the online setting.
In \citet{zhou2023multi}, it is already shown that it is impossible to maintain an $\alpha$-MMS allocation for goods for any positive $\alpha$, even in a very special case where $n =2$. For chores, it is easy to see that any algorithm can trivially maintain an $n$-MMS allocation: just allocate all items to a single agent. In \citet{zhou2023multi}, it is shown that no algorithm can maintain an $\alpha$-MMS allocation for $\alpha < 2$. However, there is a lack of positive results in this setting. \citet{zhou2023multi} achieves an $O(1)$-MMS allocation only under a strong assumption that the total disutility (i.e., $\sum_{j=1}^m d_i(j)$) of each agent is known to the algorithm in advance. The result in \citet{DBLP:journals/corr/abs-2505-24321} achieves a $5/3$-MMS allocation online, but only in a very limited case where $n=2$ and each agent has only two types of values. In the general case, there remains a significant gap between $2$ and $n$. In this paper, we focus on the following main open question in this line of research.

\begin{center}
\emph{Can we improve upon the trivial bound of $n$ in the online MMS allocation problem for chores?}
\end{center}

%Due to the irrevocable decision-making feature of online settings, techniques that are handy for finding approximate \MMS allocation in the offline setting do not extend to our setting.

\subsection{Our Contribution}

First, we address this main open question on the negative side. We construct a hard instance showing that for any fixed $n$ and $\varepsilon$, no algorithm can maintain an $(n - \varepsilon)$-MMS allocation in this problem; see \Cref{thm:lowerbound}. The proof is presented in \Cref{sec:negative}. This complements the trivial upper bound of $n$ and thus completely resolves the open question. 

However, we observe that this hard instance is arguably impractical for two main reasons.
\begin{itemize}
    \item The disutilities of different items can vary extremely widely. For example, we repeatedly construct new items whose disutility is roughly $1/\varepsilon$ times the total disutility of all previously arrived items in the instance.
    \item The number of distinct disutility values for each agent is extremely large. In the hard instance, the competitive ratio of MMS gradually increases to $n$ as more items arrive. Moreover, to achieve this, we need to introduce new disutility values repeatedly. As described earlier, each such value is roughly $1/\varepsilon$ times the total disutility of all previously arrived items.
\end{itemize}

Thus, while our hardness result closes the theoretical gap, it may not pose a significant practical concern. This leaves open the possibility of designing algorithms that maintain fair allocations in more realistic scenarios. In particular, can we achieve a small competitive ratio of MMS when each agent has only a limited number of distinct disutility values? For example, in the personalized bi-value case (where each agent has only two types of disutilities), is it possible to extend the $5/3$-MMS allocation from \cite{DBLP:journals/corr/abs-2505-24321} beyond the case of $n=2$? Moreover, let $d_i^{\max}$ and $d_i^{\min}$ denote the maximum and minimum disutilities of agent~$i$, respectively, and let $D$ be the maximum ratio $d_i^{\max} / d_i^{\min}$ across all agents. Can we achieve a good competitive ratio of MMS when $D$ is small? This is analogous to the familiar $p_{\max} / p_{\min}$ ratio in scheduling problems.

Our positive result is as follows. We propose an algorithm, which can be viewed as a generalized round-robin approach, that maintains a $\min\{n, O(k), O(\log D)\}$-MMS allocation in the online setting, where $k$ is the maximum number of value types (i.e., distinct disutilities) across all agents, and $D$ is the maximum ratio $d_i^{\max} / d_i^{\min}$; see \Cref{thm:Kvalued}. It means that, for every personalized $k$-value instance, assuming $k$ is a constant, our algorithm can maintain an $O(1)$-MMS allocation, regardless of $n$. For the special case of the personalized bi-value setting, we further optimize the constant and achieve an $2 + \sqrt{3} \approx 3.7$-MMS allocation, as shown in \Cref{thm:bi-value}.

Notably, our general algorithm does not require any prior knowledge of parameters related to the online arrival of items, such as $k$ or $D$. However, to achieve a better constant in the personalized bi-value case, the algorithm must be informed in advance that each agent has at most two types of disutility, for simplicity. We remark that this requirement of explicitly announcing the bi-value assumption can be removed by initially proceeding under the bi-value assumption and then reverting to the general algorithm if it is later observed that the number of distinct disutilities exceeds two. 

\subsection{Our Techniques}
Our algorithm design stems from the basic round-robin algorithm, which guarantees a $1$-MMS solution when all items have only one disutility per agent. It generalizes the core idea of round robin—ensuring that each agent receives at most one item in every $n$ rounds, to more general settings. In particular, we aim to preserve this ideal property for each agent with respect to each distinct type of item. Although it is difficult to achieve this exactly in the general case, we can enforce it approximately. To this end, we introduce a parameter called \emph{pressure}, denoted by $H_i^\u$, which quantifies whether agent $i$ has received too many type-$\u$ items relative to the ideal benchmark of receiving $1$ out of every $n$. We then apply a greedy allocation rule: each new item is assigned to the agent with the lowest pressure for its type.

\paragraph{Stacking Game: A Special Discrepency Minimization Problem.} Interestingly, the analysis of this process can be framed as a special discrepancy minimization problem. Concretely, we maintain an $nk$-dimensional vector initialized to zero. In each round, an adversary selects $n$ dimensions, and the algorithm responds greedily by increasing the smallest of these coordinates and decreasing the others, ensuring that the total change sums to zero and that the magnitude of change in each step is a constant. The central question is whether the algorithm can keep all coordinates small in the worst case across all adversarial choices.
To address this, we formulate an intriguing continuous and symmetric version of the problem, which we call \game, that captures all possible choices of $n$. This game may be of independent interest. We show that in this setting, the maximum coordinate value achieved remains $O(k)$. Translating this result back to the original MMS allocation problem, we conclude an $O(k)$ competitive ratio for MMS, combined with additional techniques such as value rounding.

\paragraph{Negative Result.}
First, we introduce a relatively simple idea to show that no algorithm can achieve a better-than-$2$-MMS guarantee in the case of $n=2$. The adversary begins by releasing an item with disutility $1$ for both agents. Without loss of generality, suppose agent~1 takes this item. The adversary then repeatedly releases items whose disutility is $\epsilon$ for agent~1 and is equal to $\frac{1}{\epsilon}$ times the total disutility of agent~2 over all previous rounds. By choosing $\epsilon$ to be a sufficiently small constant, there must eventually be a round in which agent~2 takes an item; otherwise, agent~1's allocation would approach $2$-MMS. At this point, the adversary releases a new item with disutility for agent~1 equal to $1$ plus the sum of all small items already taken by agent~1, and with disutility for agent~2 set equal to the disutility (s)he just took in the last round. As a result, whether agent~1 or agent~2 takes this item, their allocation is forced to be close to $2$-MMS.

There are two key ideas underlying this construction: 
(1) We create a simple disutility sequence for agent~1, which prevents agent~1 from taking all items indefinitely. 
(2) We generate a sharply growing sequence of disutilities for agent~2, where each new item effectively \emph{cleans up} all previously released items by making their disutilities negligible. However, these two simple ideas do not easily extend to the general case of $n$ agents for proving an $n$-MMS lower bound. We highlight some reasons for this difficulty, even when $n=3$. First, if we simultaneously apply the simple identical $\epsilon$-size disutility sequence to two different agents, these two agents can keep taking such items in a round-robin way forever, without ever forcing the third agent to take any new item. Second, since our goal is to establish a lower bound for $n \ge 3$, the clean-up idea would reduce the current allocation back down to a $1$-MMS level, even if we have already accumulated a $2$-MMS allocation for one agent. This makes it challenging to push the allocation beyond even the threshold of $2$-MMS in the multi-agent setting.

To address these issues without resorting to discussing an unmanageable number of algorithms' actions for $n$ agents, we propose a recursively constructed hard instance. We begin by presenting an alternative approach to reproduce the $2$-MMS hardness for $n=2$ in a more extensible way. Next, to prove the $n$-MMS hardness, we define a function $T(n')$ to denote the number of rounds required such that, if all items are taken by a set of $n' \leq n$ agents, then at least one of these $n'\leq n$ agents reaches an allocation of $n$-MMS, measured against the MMS benchmark for $n$ agents. For example, we have $T(1) = n$ if we repeatedly release items with disutility $1$. We then recursively construct an instance for $T(n'+1)$ using $T(n')$, effectively treating $T(n')$ as a black box. Specifically, we repeatedly apply $T(n')$ on agents~$1$ through $n'$, thereby forcing agent~$n'+1$ to take new items. Note that we also need to design a way to \emph{clean up} all agents $1$ to $n'$ before each invocation of $T(n')$. Finally, we carefully design the disutility sequence for agent~$n'+1$ to ensure not only that the sum of the disutilities of the missed items remains negligible, but also that the total disutility of the items taken becomes large enough to approach an $n$-MMS allocation, all within finitely many invocations of $T(n')$.

\subsection{Other Related Work}
The online MMS allocation problem for indivisible chores is closely related to the online load balancing problem. The online load balancing problem can be viewed as a special case of the online MMS allocation problem on chores, where all agents have identical disutility valuations for each chore. The study of online load balancing dates back to the 1960s~\citep{DBLP:journals/siamam/Graham69}. A substantial body of work has focused on determining the optimal competitive ratio for this clean setting~\citep{DBLP:journals/siamcomp/Albers99,DBLP:journals/jcss/BartalFKV95,DBLP:journals/jal/KargerPT96,DBLP:journals/ipl/BartalKR94,DBLP:journals/actaC/FaigleKT89,DBLP:conf/soda/GormleyRTW00,DBLP:conf/esa/FleischerW00,DBLP:journals/siamcomp/RudinC03}. Currently, the best known bounds place the competitive ratio in the range $[1.88, 1.92]$, where the lower bound is due to~\cite{DBLP:conf/esa/FleischerW00} and the upper bound is established by~\cite{DBLP:journals/siamcomp/RudinC03}.

Several works have also focused on maintaining an indivisible fair allocation in online settings, but from different perspectives. Some studies, like ours, examine the item arrival model, but concentrate on envy-based fairness notions. Most of these assume that the items are goods; we list them as follows. \citet{benade2018make} studies the notion of envy, aiming to minimize the total envy over a time horizon. \citet{he2019achieving} seeks to minimize the number of reallocations needed to maintain an EF1 allocation online. \citet{zeng2020fairness} and \citet{DBLP:journals/ior/BenadeKPPZ24} explore the trade-off between envy-based fairness and efficiency, with respect to Pareto-optimality. \citet{DBLP:conf/nips/ProcacciaS024} and \citet{DBLP:conf/aistats/YamadaKAI24} study the problem in a bandit setting. On the other hand, \citet{kulkarni2025online} also investigates MMS allocations but in a different online model: instead of items arriving online, they provide a comprehensive study of the case where agents arrive online. For broader overviews of results on fair allocation of items (such as divisible items), we refer readers to the surveys by \citet{DBLP:conf/aaai/AleksandrovW20} and \citet{DBLP:journals/ai/AmanatidisABFLMVW23}.

MMS allocation for chores is also a significant topic in the offline setting. \citet{DBLP:conf/aaai/AzizRSW17} first observes that exact MMS allocations do not always exist, motivating the study of approximate MMS allocations. Subsequent works have aimed to tighten these bounds, either by improving the best known approximation ratios \citep{DBLP:journals/teco/BarmanK20,DBLP:conf/sigecom/HuangL21,DBLP:conf/aaai/GargHS25,DBLP:conf/wine/FeigeST21} or by strengthening lower bounds on the achievable ratio. Currently, the best upper bound is $15/13$~\citep{DBLP:conf/aaai/GargHS25}, while the best lower bound is $44/43$~\citep{DBLP:conf/wine/FeigeST21}.
\section{Preliminaries}

\subsection{Model}
We study the problem of fair division of indivisible chores.
There is a set of $n$ agents and $m$ items (or chores).
Denote by $N=[n]$ and $M=[m]$ the set of agents and items, respectively. 
An \emph{allocation} $\A = (A_1, \ldots, A_n)$ is a partition of the item set, where $A_i$ is the bundle of items allocated to agent $i$.
Let $\Pi$ be the entire space of all allocations.
Each agent $i$ is associated with a function $d_i:\{0,1\}^{[m]}\rightarrow \mathbb{R}_{\ge 0}$, which captures the disutility that agent $i$ incurs from receiving chores. For simplicity, we write $d_i(\{j\})$ as $d_i(j)$.
Assume the disutility functions are \emph{additive}: $d_i(\{S\}\cup \{j\}) = d_i(S) + d_i(j)$ for any $j\notin S$.

A disutility function is said to be \emph{personalized $k$-valued} if there are at most $k$ possible values of $d_i(j)$ for every $i \in N$ among different $j \in M$. Let $V_i$ be the set of possible values of $d_i(j)$. Note that $V_i$ is allowed to be different for different agents.

Denote by $V_i^\u$ the $\u$-th value in the set $V_i$ and $M_i^\u$ be the set of items that agent $i$ has value $V_i^\u$, i.e., $M_i^\u = \{j\in M: d_i(j) = V_i^\u\}$.
We also refer to an item in $M_i^\u$ as \emph{an item of type-$\u$ for agent~$i$}.
Let $N_i^\u = |M_i^\u|$ be the number of type-$\u$ item for agent $i$. 

We consider an online setting where items arrive \emph{online} in an arbitrary order: in each iteration, an item arrives, and the allocation algorithm immediately assigns it to an agent. For convenience, we index the items in $[m]$ according to their arrival order. In particular, we use item~$j$ to refer to the $j$-th arriving item, and write $[j]$ for the set of the first $j$ arriving items. Note that any allocation made is irrevocable and cannot be changed in subsequent rounds. Below, we introduce the fairness notion considered in this paper.
% \bt{I think we should mention here that $j$ refers to the item allocated at the $j$-th iteration, and $[j]$ is then the set of the items allocated in the first $j$ iterations.}
% 
\subsection{Fairness Notions}
\begin{restatable}[Minmax share (MMS)]{definition}{DefMMS}
The \emph{min-max share (MMS)} of agent $i$ is defined as 
$$
\MMS_i = \min_{\A\in \Pi}\max_{i'\in [n]} d_i(A_{i'})\,.
$$
We say that an allocation $\A$ is an MMS allocation for agent~$i$ if $d_i(A_i) \le \MMS_i$ holds. Moreover, an allocation is said to achieve $\alpha$-MMS for $\alpha \ge 1$ if every agent~$i$ receives a disutility of at most $\alpha \cdot \MMS_i$, that is, $\forall i \in [n]$$d_i(A_i) \le \alpha \cdot \MMS_i$. We also say that an algorithm achieves a competitive ratio of $\alpha$ with respect to MMS if it guarantees an $\alpha$-MMS allocation.
\end{restatable}

% \begin{restatable}[$\alpha$-MMS algorithm]{definition}{DefMMSalgo}
% An algorithm is \emph{$\alpha$-MMS} to the online chores allocation problem instance if the allocation $\A$ returned by the algorithm always achieves $\alpha$-MMS.  
% \end{restatable}

\begin{restatable}[MMS for type-$\u$ items]{definition}{DefiMMStype} 
We define the min-max share of items of type $\u$ for agent $i$, denoted by $\MMS_i^\u$, as the MMS value computed solely over the type-$\u$ items for agent $i$. Formally, let $\Pi(M_i^\u)$ denote the set of all possible allocations of the items $M_i^\u$. Then
\[
\MMS_i^\u = \min_{\A \in \Pi(M_i^\u)} \max_{i' \in [n]} d_i(A_{i'})~.
\]
In fact, in this setting we have the closed-form expression $\MMS_i^\u = \lceil N_i^\u / n \rceil$.
\end{restatable}

{
\begin{lemma}
\label{lem:mms_decomp}
For any agent $i\in [n]$, it holds that $\sum_{\u=1}^k (\MMS_i^\u - V_i^\u) \le \MMS_i \le \sum_{\u=1}^k \MMS_i^\u$, where $k$ is the number of distinct disutility of agent $i$. 
\end{lemma}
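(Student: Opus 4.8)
The plan is to establish the two inequalities separately, each relying on additivity of $d_i$ together with the fact (coming straight from the definition of $\MMS_i^\u$) that every item in $M_i^\u$ has the same disutility $V_i^\u$, so the most balanced $n$-split of $M_i^\u$ places $\lceil N_i^\u/n\rceil$ of these items on its heaviest bundle; that is, $\MMS_i^\u = \lceil N_i^\u/n\rceil\cdot V_i^\u$.

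For the upper bound $\MMS_i\le\sum_{\u=1}^k\MMS_i^\u$, I would simply \emph{glue together} the per-type optimal partitions. For each type $\u$, fix an allocation $(A_1^\u,\dots,A_n^\u)\in\Pi(M_i^\u)$ witnessing $\MMS_i^\u$, so $d_i(A_{i'}^\u)\le\MMS_i^\u$ for every $i'\in[n]$. Since $\{M_i^\u\}_{\u=1}^k$ partitions $M$, the bundles $A_{i'}:=\bigcup_{\u=1}^k A_{i'}^\u$ form a valid allocation of $M$, and by additivity $d_i(A_{i'})=\sum_{\u=1}^k d_i(A_{i'}^\u)\le\sum_{\u=1}^k\MMS_i^\u$ for every $i'$. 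Taking the maximum over $i'$ and then the minimum over allocations gives $\MMS_i\le\sum_{\u=1}^k\MMS_i^\u$.

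For the lower bound $\sum_{\u=1}^k(\MMS_i^\u-V_i^\u)\le\MMS_i$, the naive per-type estimate $\MMS_i\ge\MMS_i^\u$ is correct but too weak to add up over $\u$, since for different types the heaviest bundle may be realized by a different agent index. Instead I would route through the average disutility. Let $D_i:=\sum_{\u=1}^k N_i^\u V_i^\u$ denote agent $i$'s total disutility for $M$; since any allocation's heaviest bundle carries disutility at least the average, $\MMS_i\ge D_i/n$. Combining this with $\lceil N_i^\u/n\rceil\le N_i^\u/n+(n-1)/n$ yields $\MMS_i^\u=\lceil N_i^\u/n\rceil V_i^\u\le (N_i^\u/n)V_i^\u+V_i^\u$, and summing over $\u$ gives $\sum_{\u=1}^k\MMS_i^\u\le D_i/n+\sum_{\u=1}^k V_i^\u\le\MMS_i+\sum_{\u=1}^k V_i^\u$, which rearranges to the desired inequality.

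The only point that needs any care is exactly this last step: the lower bound does not follow from a term-by-term comparison, so one must use $\MMS_i\ge D_i/n$ to absorb all $k$ ceiling roundings at once into the single additive loss $\sum_{\u}V_i^\u$. Everything else is routine bookkeeping with ceilings and additivity, so I expect no real obstacle beyond spotting this averaging argument.
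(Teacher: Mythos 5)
Your proposal is correct and follows essentially the same route as the paper: the upper bound by gluing the per-type optimal partitions into one allocation, and the lower bound by writing $\MMS_i^\u=\lceil N_i^\u/n\rceil V_i^\u\le \tfrac{1}{n}d_i(M_i^\u)+V_i^\u$, summing over types, and absorbing all roundings at once via the average bound $\MMS_i\ge d_i(M)/n$. Your observation that a term-by-term comparison $\MMS_i\ge\MMS_i^\u$ would not suffice matches the paper's (implicit) reasoning, and your ceiling estimate is even marginally tighter than the paper's.
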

\begin{proof}
We first prove the right inequality.
For any $\u\in [k]$, let $\A^\u$ be the MMS allocation of type-$\u$ items for agent $i$.
Then we can construct a complete allocation by taking the union of the $k$ allocations: $\A= \bigcup_{\u \in [k]} \A^\u$, and $A_{i'}$ is the bundle of $i'$ in $\A$. 
It follows that $\forall i' \in [n]$, $d_{i'}(A_{i'}) \leq \sum_{\u=1}^k \MMS_i^\u$. Therefore $\MMS_i \le \MMS_i^\u$.

Next, we prove the left inequality. 
Since all the items of type $\u$ have the same disutility of $V_i^\u$, we have that the largest bundle in $\MMS_i^\u$ exactly have $\lceil N_i^\u / n\rceil \leq N_i^\u/ n + 1$ number of type-$\u$ items.
% 
% Let $n^\u$ be the number of items in bundle $A_i^\u$.
% As all the items of type $\u$ have the same disutility under $V_i^\u$, the MMS value is achieved when they are partitioned nearly equally.
% Hence, we can have $n^u = \lceil N_i^u / n\rceil$, which implies that $n^u \le N_i^u/ n + 1$.
Therefore, we can obtain that 
$$
\MMS_i^\u = \lceil N_i^\u / n\rceil \cdot V_i^{\u} \le \frac{N_i^\u\cdot V_i^\u}{n} + V_i^\u = \frac1{n}\cdot d_i(M_i^\u) +  V_i^\u\,.
$$ 
By summing them up for all $\u \in [k]$, it follows that 
$$
\sum_{\u=1}^k \MMS_i^\u - V_i^\u \le \sum_{\u=1}^k \frac{1}{n}\cdot d_i(M_i^\u) = \frac1n\cdot d_i(M) \le \MMS_i,
$$
which completes the other side.
\end{proof}

\section{Our Algorithm}

This section presents our main result, stated in the following theorem.

\begin{restatable}{theorem}{thmKvalued}
\label{thm:Kvalued}
\Cref{alg:kvalue} guarantees a $\min\{n, O(k), O(\log D)\}$-MMS allocation for the online fair division problem under indivisible chores, where $D = \max_{i\in [n]} \dfrac{\max_{j\in [m]} d_i(j)}{\min_{j \in [m]} d_i(j) } $ and $k$ denotes the maximum number of distinct disutility values across all agents.
\end{restatable}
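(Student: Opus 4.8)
The plan is to bound the competitive ratio by each of $n$, $O(k)$, and $O(\log D)$ in turn. The first is immediate: the proof of \Cref{lem:mms_decomp} already contains $\tfrac1n d_i(M)\le\MMS_i$, so $d_i(A_i)\le d_i(M)\le n\cdot\MMS_i$ for \emph{every} allocation, in particular the output of \Cref{alg:kvalue}. For the other two I would preprocess by replacing each disutility $d_i(j)$ with $\tilde d_i(j):=2^{\lceil\log_2 d_i(j)\rceil}$ — a step the algorithm can carry out online with no knowledge of $k$ or $D$. This loses only a factor of $2$ in both $\MMS_i$ and $d_i(A_i)$ (since $d_i(j)\le\tilde d_i(j)<2d_i(j)$ pointwise), and it has two useful effects. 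First, each agent is left with at most $\lceil\log_2 D\rceil+2$ distinct values, and never more than before, so if $\kappa$ denotes the number of distinct (rounded) values of the agent having the most, then $\kappa\le\min\{k,\,\lceil\log_2 D\rceil+2\}$. Second, each agent's rounded values are distinct powers of $2$ with largest term at most $2d_i^{\max}\le 2\MMS_i$ (here $\MMS_i\ge d_i^{\max}$, since the maximum-disutility item lies in some bundle of any partition), hence $\sum_{\u}V_i^\u<4\MMS_i$. It therefore suffices to show that on the rounded instance the algorithm achieves $O(\kappa)$-MMS.

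I would describe \Cref{alg:kvalue} — a generalized round-robin — through \emph{pressures}. For every agent $i$ and value class $\u$ keep a real number $H_i^\u$, all zero initially. When item $j$ arrives, let $\u_i$ be its class for agent $i$; assign $j$ to an agent $i^\star$ minimizing $H_i^{\u_i}$ over $i\in[n]$ (with fixed tie-breaking), then increase $H_{i^\star}^{\u_{i^\star}}$ by $1-\tfrac1n$ and decrease $H_i^{\u_i}$ by $\tfrac1n$ for each $i\ne i^\star$. A one-line induction shows $H_i^\u$ is always the number of class-$\u$ items agent $i$ has received minus $\tfrac1n$ times the number of class-$\u$-for-$i$ items released so far, and $\sum_{i,\u}H_i^\u\equiv 0$. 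In this form the run of the algorithm is precisely the discrepancy process described in the introduction, on the $n\kappa$-dimensional vector $(H_i^\u)$: each step the adversary (the item, together with how the $n$ agents classify it) exposes one coordinate per agent, and the greedy rule lifts the smallest exposed coordinate by $1-\tfrac1n$ and lowers the other $n-1$ by $\tfrac1n$, so each step changes the total by $0$ and the magnitude by a constant.

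The crux — and the step I expect to be the main obstacle — is to prove that this greedy strategy keeps every coordinate at most $O(\kappa)$ for all time; equivalently, that at the end agent $i$ has received at most $\tfrac1n N_i^\u+O(\kappa)$ items of each class $\u$. I would isolate this as a self-contained statement about the discrepancy process, with no reference to chores, and then — following the paper's outline — pass to the continuous symmetric relaxation (\game) that simultaneously dominates all choices of $n$, so that a single potential-function/amortized argument covers every $n$ at once. The delicate features are that the adversary is adaptive and the algorithm's assignments are irrevocable, so the potential must be designed to certify that no adversarial schedule of class exposures can drive a coordinate past the $O(\kappa)$ barrier; this is where the game's symmetry and the $\Theta(\kappa)$-type growth genuinely come into play.

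Granting the pressure bound, the translation is short. Fix agent $i$, let $c_i^\u$ be the number of class-$\u$ items it receives, and write $c_i^\u=\tfrac1n N_i^\u+H_i^\u$. Then
\[
d_i(A_i)=\sum_{\u}c_i^\u V_i^\u=\tfrac1n d_i(M)+\sum_{\u}H_i^\u V_i^\u\le\MMS_i+O(\kappa)\sum_{\u}V_i^\u\le\MMS_i+O(\kappa)\cdot 4\MMS_i=O(\kappa)\cdot\MMS_i,
\]
using $\tfrac1n d_i(M)\le\MMS_i$ (\Cref{lem:mms_decomp}), the pressure bound $H_i^\u\le O(\kappa)$, and $\sum_\u V_i^\u<4\MMS_i$ from the first paragraph. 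Undoing the factor-$2$ rounding loss gives $d_i(A_i)=O(\kappa)\cdot\MMS_i=\min\{O(k),O(\log D)\}\cdot\MMS_i$ for the original instance, and combining with the trivial $n$-MMS guarantee yields the claimed $\min\{n,O(k),O(\log D)\}$-MMS bound.
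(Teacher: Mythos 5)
Your treatment of the trivial $n$ bound, the power-of-two rounding, the pressure bookkeeping, and the final translation from a pressure bound to an $O(\kappa)$-MMS guarantee all track the paper's argument: \Cref{lem:stackonlinemms} and \Cref{prop:mms_of_rounded} do essentially what your last paragraph does, using the per-type decomposition of \Cref{lem:mms_decomp} where you use $\tfrac1n d_i(M)\le\MMS_i$ together with the geometric-sum bound $\sum_\u V_i^\u=O(\MMS_i)$ — a cosmetic difference (your $+\bigl(1-\tfrac1n\bigr)/-\tfrac1n$ pressure normalization is also just a rescaling of the paper's $+1/-\tfrac1{n-1}$ and induces the same greedy choices). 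The problem is that all of this is the routine half of the theorem.

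The genuine gap is exactly the step you flag as ``the crux'' and then defer: you never prove that the greedy process keeps every $H_i^\u$ at $O(\kappa)$; you only say you would pass to a continuous relaxation and find ``a potential-function/amortized argument.'' That bound is the main technical content of \Cref{thm:Kvalued}, and it is not a routine amortization: the paper devotes the entire \game machinery to it (\Cref{def:stacking_game} through \Cref{lem:ub_of_contiguous_interval}). Concretely, the paper encodes the sorted pressure profile as a non-decreasing step function $f^t$ on $(-1/2,1/2]$, works with its tail integral $F^t(x)=\int_x^{1/2}f^t(u)\,du$, and proves by induction the quadratic invariant $F^t(x)\le \tfrac{k}{2}-2kx^2$, which forces $f^t\le 2k$ (\Cref{lem:bound_F}) and hence an adversary gain of at most $2k$ (\Cref{thm:ratio_of_stacking_game}); the inductive step itself needs a nontrivial exchange argument reducing the adversary to contiguous update sets (\Cref{lem:contiguous_interval}) followed by a careful four-case estimate of the discrepancy term $\Gamma$ (\Cref{lem:ub_of_contiguous_interval}). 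Without supplying this invariant, or some equivalent certificate that the adaptive adversary cannot push a coordinate past $O(\kappa)$, your proposal establishes only the reduction and leaves the theorem unproved.
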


\begin{algorithm}[tb]
\caption{Greedy over Pressure}
\label{alg:kvalue}
\SetAlgoLined
\KwIn{$n$ agents and $m$ indivisible chores arriving online.
Each agent $i\in [n]$ has a disutility function $d_i(\cdot)$.
The value of $d_i(j)$ can only be known when item $j$ arrives\;}
\KwOut{an allocation $\A=(A_1, \dots, A_n)$\;}

Initialize the bundle $A_i$ as empty set, and $k_i \gets 0$ for each agent $i$\;

% \wq{change to array.}

\For{each arriving item $j=1,\cdots, m$}{
    \For{each agent $i \in [n]$}{
    Round up $d_i(j)$ to the nearest power of two \;
    % : if $d_i(j)\in (2^{k-1},2^k]$, then set $d'_i(j)=  2^k$\;
    \If{$d_i(j)$ is a new disutility value of $i$} {
        $k_i \gets k_i + 1$, $\u_i(j) \gets k_i$, $H_i^{k_i} \gets 0$ \;
        We call the item with value $d_i(j)$ as the type-$k_i$ item of $i$ \; 
    }
    \Else {
        $\u_i(j) \gets$ the type of item $j$ of agent $i$ \;
    }
    }
    Let agent $i \gets \arg\min_{i \in [n]} H_i^{\u_i(j)}$ \;
    Allocate item $j$ to agent $i$: $A_{i}\leftarrow A_{\iota}\cup \{j\}$\label{line:choose_argmin} \;
    Update $H_{i}^{\u_{i}(j)} \gets H_{i}^{\u_{i}(j)} + 1$ \;
    \For{each agent $i' \in [n] \setminus \{{i}\}$}{
        Update $H_{i'}^{\u_{i'}(j)} \gets H_{i'}^{\u_{i'}(j)} - \frac{1}{n-1}$ \;
    }
}
\Return{the final allocation $\A$}\;
\end{algorithm}

\subsection{The Algorithm and an Overview of the Analysis}
\paragraph{The Algorithm.}
First, we provide an overview of the algorithm. Our algorithm can be seen as an extension of the round-robin algorithm, which guarantees a $1$-MMS allocation in the personalized one-valued special case where each agent has only \emph{one type of item}. In the standard round-robin procedure, items are assigned to agents in a fixed, arbitrary order. To understand why this ensures a $1$-MMS allocation, consider a fixed agent $i$. Her MMS value increases after she receives the first item, then the $(n+1)$-th item, the $(2n+1)$-th item, and so on. In rounds $1$ to $n$, it is safe for her to receive one item, since her MMS benchmark is $1$ item; in rounds $n+1$ to $2n$, it is safe for her to receive two items, as her MMS benchmark becomes $2$ items; and so forth. Thus, the round-robin process naturally ensures that each agent receives no more than her MMS benchmark, thereby achieving a $1$-MMS allocation.

We can also view this process from another perspective by introducing a \emph{pressure} parameter for each agent $i$, denoted by $H_i$. When an item arrives and is allocated to agent~$i$, her pressure increases by $1$; if she does not receive the item, her pressure decreases by $\frac{1}{n-1}$. For example, if agent~$i$ receives the first item, her pressure becomes $1$, and if she does not receive any of the items from $2$ to $n$, her pressure gradually decreases back toward $0$. When she receives the $(n+1)$-th item, her pressure again becomes $1$. This pressure serves as a measure of the competitive ratio of MMS: as long as the pressure of each agent stays at most $1$, the allocation remains $1$-MMS. If an agent’s pressure reaches $2$, it indicates she has received one item more than her MMS allocation. By framing it this way, we can recover the round-robin algorithm through this notion of pressure: at each item’s arrival, we allocate it to the agent with the lowest pressure, and then update all agents’ pressures accordingly. It is easy to see that this maintains a $1$-MMS allocation, since the pressure of every agent remains at most $1$. Moreover, if we break ties using an arbitrary but fixed order, we precisely recover the classical round-robin procedure.

This new perspective allows us to extend the round-robin idea to the multi-value case. Assume that there are $k$ different value types for each agent; now we assign $k$ separate pressure variables to each agent~$i$, denoted by $H_i^\u$ for each $\u \in [k]$. When a new item $j$ arrives, it affects $n$ different pressures, each corresponding to the value type of a particular agent. We generalize the greedy approach from the single-value case by selecting the agent with the lowest pressure (with respect to the type of $j$ of the agent) to receive the new item. We then update the pressures as follows: if agent~$i$ receives item~$j$ and $j$ is of type~$\u$ for agent~$i$, we increase $H_i^\u$ by $1$. Conversely, if agent~$i$ does not receive item~$j$ and $j$ is of type~$\u$ for her, we decrease $H_i^\u$ by $\frac{1}{n-1}$. 

To complete the algorithm, we still need one additional component called \emph{value rounding}, a standard technique in algorithm design. For each arriving item $j$, we round up its disutility to the nearest power of $2$. This rounding step offers several advantages. First, since we treat all different value types separately, our competitive ratio of MMS could incur a multiplicative factor equal to the number of types when summing the competitive ratios across all types. However, by applying this rounding procedure and leveraging the properties of geometric sequences, we reduce this loss to a constant factor. In particular, with the rounding procedure, we can improve the ratio from $O(k^2)$ to $O(k)$. Second, value rounding naturally decreases the number of distinct types when the ratio between the maximum and minimum disutilities is small, which enables us to establish an $O(\log D)$ competitive ratio with respect to MMS, where $D$ denotes the maximum ratio $d_i^{\max} / d_i^{\min}$ over all agents $i$.

Finally, we refer to \Cref{alg:kvalue} for the detailed pseudocode of the complete algorithm.

\paragraph{The Analysis.}

To analyze whether the algorithm achieves a small competitive ratio of MMS, we first examine whether it is possible to keep every $H_i^\u$ small under the greedy approach in \Cref{alg:kvalue}. Note that difficulties arise when the number of distinct disutilities exceeds one (and not only under the greedy approach). Consider the following example: there are $2$ agents and $2$ types of disutility for each agent. The first item is of type $A$ for both agents, denoted by $(A,A)$, and we assume agent $1$ takes it. The second item is $(B,B)$, and we assume agent $2$ takes it. Then, a third item arrives, which is $(A,B)$. In this case, either agent $1$ ends up taking two consecutive items of type $A$, or agent $2$ takes two consecutive items of type $B$, due to the unavoidable mistakes inherent in online decision-making. Consequently, one of the $H_i^\u$ values grows to $2$. 
It is not hard to see that even if we had instead let agent $1$ take the second item, we could continue releasing items to force some $H_i^\u$ to approach $2$. It remains unclear whether this type of construction becomes worse as $n$ increases, or whether it can be upper bounded by a constant over all choices of $n$ under this special two-value case.

We observe that this corresponds to a special form of a discrepancy minimization problem. We can view the problem as maintaining an $nk$-dimensional vector initialized to zero. Note that we abuse the definition of $k$ to mean the number of types after rounding; it is at most the original value of $k$, which means the number of types before rounding, and also at most $O(\log D)$. At each time step, the adversary reveals $n$ special dimensions, and we must choose one of them to increase by $1$, while each of the remaining dimensions decreases by $\frac{1}{n-1}$. In response, our algorithm follows a greedy strategy: it always selects the dimension with the lowest current value to increase. We are interested in whether this approach can keep the values across all dimensions small. Specifically, our goal is to show that the maximum value attained depends only on $k$, so that for any fixed $k$, the values remain bounded regardless of whether $n$ is small, large, or even tends to infinity.

To simultaneously analyze all values of $n$, we abstract the problem into a continuous and symmetric formulation, which we call the \game. This formulation is analytically friendly and generalizes the discrete problem across all values of $n$. In this framework, we represent the $nk$-dimensional vector as a function plotted over the interval $(-0.5,0.5]$. The adversary's choice of $n$ dimensions then corresponds to selecting a union of intervals with total measure $1/k$. In response, our algorithm increases the lower part of the function and decreases the higher part according to a greedy rule. We will show that under this process, the adversary cannot force the values to grow arbitrarily large; instead, they remain upper bounded by $O(k)$.

Finally, we combine the contributions from all different value types to establish the overall competitive ratio of \MMS for our algorithm. This incurs only a constant multiplicative loss because of our rounding procedure, so the final competitive ratio remains $O(k)$.

\subsection{Stacking Game and the Competitive Ratio of MMS}

\begin{definition}[\game]
\label{def:stacking_game}
Let $I$ be the interval $I=(-1/2, 1/2]$ and $k\ge 1$ be a given number.
An adversary holds a non-decreasing function $f: I\rightarrow \mathbb{R}$ defined over the interval.
(S)he starts with $f = f^0$ with $f^0(x) = 0$ for any $x\in I$.
Next, at each round $t = 1, \dots$, the adversary is allowed to update the function $f^{t-1}$ by the following way:
\begin{enumerate}[itemsep=0pt]
    \item[(i)] (S)he specifies a four-tuple
    \[
    \x^t = (a^t, b^t, A^t, B^t),
    \]
    where $a^t$ and $b^t$ are positive rational numbers not exceeding $1$, and $A^t$ and $B^t$ are unions of finitely many mutually disjoint left-open, right-closed intervals (with rational endpoints) contained in $I$. These sets satisfy
    \[
    |A^t| = \frac{1}{k} \cdot \frac{b^t}{a^t + b^t}~, 
    \quad |B^t| = \frac{1}{k} \cdot \frac{a^t}{a^t + b^t} ~,
    \quad \text{so that} \quad |A^t| + |B^t| = \frac{1}{k}.
    \]
    Moreover, $A^t$ must lie entirely to the left of $B^t$. Formally, for any $x \in A^t$ and $y \in B^t$, we have $x \leq y$.
    \item[(ii)] Next, the value of every point in $A^t$ is increased by $a$, and the value of every point in $B^t$ is decreased by $b$ while the other values remain the same. 
    $$
        g^{t-1}(x) = 
        \begin{cases}
            f^{t-1}(x) + a & x \in A^t \\
            f^{t-1}(x) -b  & x \in B^t \\
            f^{t-1}(x) & \text{Otherwise}.
        \end{cases}
    $$
    % \wq{Denote the function value now $g^{t-1}$.}
    \item[(iii)]
    The function values of $g^{t-1}$ are permuted into non-decreasing order to obtain the next round non-decreasing function $f^{t}$. We will formally define this operation later.
    % \wq{Define the mapping here?}
    % the value of every point in $A^t$ is supposed to be increased by $a$ and the value of every point in $B^t$ is supposed to be decreased by $b$, i.e., the adversary updates the function values as follows:
    % \begin{align*}
    % h(x) = \begin{cases}
    % a^t, & \text{if $x\in A^t$}\\
    % - b^t, & \text{if $x\in B^t$}\\
    % 0, & \text{otherwise}
    % \end{cases}
    % \end{align*}
    % Lastly, she permutes all the function values in ascending order and obtains the new function $f= f^t(\cdot)$.
\end{enumerate}
The \emph{objective} of the adversary is to maximize the maximum of $f$, $\max_{x\in I}f(x)$.
The adversary can terminate the game at the end of any round, and the objective value at the termination is called the \emph{gain} of the adversary.
\end{definition}
We illustrate the stacking game through a simple example as follows.
\begin{example}[Running example for Stacking game]
Consider an example of $k=1$ as shown in \Cref{fig:stacking-example1}.
At each round, the adversary can choose two unions of subintervals with a total length of $1/k = 1$.
Fig.~\ref{fig:stacking-example1} illustrates the game through an example operation.
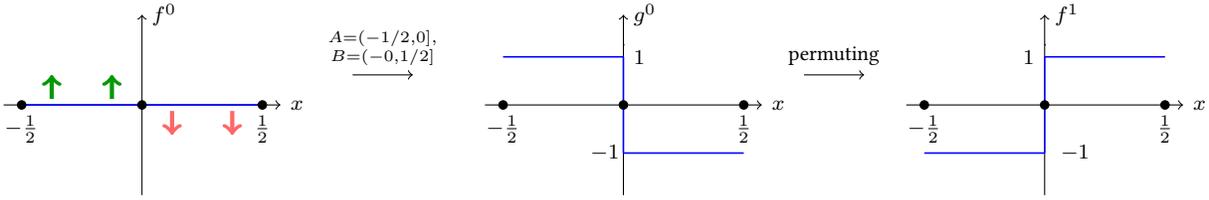
\begin{figure}[h]
\centering
\begin{tikzpicture}[scale=0.8]
\draw[->] (-2.3,0) -- (2.3,0) node[right] {\scriptsize $x$};
\draw[->] (0,-1.5) -- (0,1.5) node[right] {\scriptsize $f^0$};

% \foreach \x in {-1.5, -0.5, 0, 0.5, 1.5}
% \draw (\x,0) node[below] {};
\draw (-2, 0) node[below] {\scriptsize $-\frac12$};
% \draw (-1, 0) node[below] {$-\frac14$};
% \draw (-0, -0.1) node[below] {$0$};
% \draw (1, 0) node[below] {$\frac14$};
\draw (2, 0) node[below] {\scriptsize $\frac12$};

\draw[-, semithick, blue] (-2,0) -- (2,0);

\filldraw[black] (-2, 0) circle (2pt);
% \filldraw[black] (-1, 0) circle (2pt);
\filldraw[black] (0, 0) circle (2pt);
% \filldraw[black] (1, 0) circle (2pt);
\filldraw[black] (2, 0) circle (2pt);

\draw[->, ultra thick,  draw=green!60!black] (-1.5,0.1) -- (-1.5,0.5);
\draw[->, ultra thick,  draw=green!60!black] (-0.5,0.1) -- (-0.5,0.5);
\draw[->, ultra thick,  draw=red!60] (0.5,-0.1) -- (0.5,-0.5);
\draw[->, ultra thick,  draw=red!60] (1.5,-0.1) -- (1.5,-0.5);

\pgfmathsetmacro{\shiftx}{8};
\draw[->] (3.5,0.5) -- (4.5,0.5) 
    node[midway, above] {\scriptsize $\substack{A = (-1/2, 0], \\ B = (-0,1/2]}$};

\draw[->] (-2.3+\shiftx,0) -- (2.3+\shiftx,0) node[right] {\scriptsize $x$};
\draw[->] (0+\shiftx,-1.5) -- (0+\shiftx,1.5) node[right] {\scriptsize $g^0$};

\draw[-, semithick, blue] (\shiftx, -0.8) -- (2+\shiftx, -0.8);
\draw[-, semithick, blue] (\shiftx, -0.8) -- (0+\shiftx, 0.8);
\draw[-, semithick, blue] (\shiftx, 0.8) -- (-2+\shiftx, 0.8);

\draw (-2+\shiftx, 0) node[below] {\scriptsize $-\frac12$};
% \draw (-1+\shiftx, 0) node[below] {$-\frac14$};
% \draw (-0+\shiftx, -0.1) node[below] {$0$};
% \draw (1+\shiftx, 0) node[below] {$\frac14$};
\draw (2+\shiftx, 0) node[below] {\scriptsize $\frac12$};

\filldraw[black] (-2+\shiftx, 0) circle (2pt);
% \filldraw[black] (-1+\shiftx, 0) circle (2pt);
\filldraw[black] (0+\shiftx, 0) circle (2pt);
% \filldraw[black] (1+\shiftx, 0) circle (2pt);
\filldraw[black] (2+\shiftx, 0) circle (2pt);

\draw (0.1+\shiftx, -0.8) node[left] {\scriptsize $-1$};
\draw (0+\shiftx, 0.8) node[right] {\scriptsize $1$};
\draw[dotted, -, semithick] (0+\shiftx,1) -- (\shiftx,1); 

\draw[->] (11,0.5) -- (12,0.5) 
    node[midway, above] {\scriptsize \shortstack{permuting}};

\pgfmathsetmacro{\shiftx}{15};

\draw[->] (-2.3+\shiftx,0) -- (2.3+\shiftx,0) node[right] {\scriptsize $x$};
\draw[->] (0+\shiftx,-1.5) -- (0+\shiftx,1.5) node[right] {\scriptsize $f^1$};

\draw[-, semithick, blue] (\shiftx, 0.8) -- (2+\shiftx, 0.8);
\draw[-, semithick, blue] (\shiftx, 0.8) -- (0+\shiftx, -0.8);
\draw[-, semithick, blue] (\shiftx, -0.8) -- (-2+\shiftx, -0.8);

\draw (-2+\shiftx, 0) node[below] {\scriptsize $-\frac12$};
\draw (2+\shiftx, 0) node[below] {\scriptsize $\frac12$};

\filldraw[black] (-2+\shiftx, 0) circle (2pt);
\filldraw[black] (0+\shiftx, 0) circle (2pt);
\filldraw[black] (2+\shiftx, 0) circle (2pt);

\draw (0.1+\shiftx, -0.8) node[right] {\scriptsize $-1$};
\draw (0+\shiftx, 0.8) node[left] {\scriptsize $1$};
\draw[dotted, -, semithick] (0+\shiftx,1) -- (\shiftx,1); 

\end{tikzpicture}
\caption{An example operation of the stacking game with $k=1$.}
\label{fig:stacking-example1}
\end{figure}
Initially, the function $f$ is set as $f=f^0: x \mapsto 0$, and the maximum value of $f$ equals zero.
At the first round, the adversary sets $a^1=b^1=1$ and chooses two unions of intervals $A^1$ and $B^1$ as follows:
$$
A^1 = (-1/2, 0], \quad B^1 = (0, 1/2],
$$
each of which has the same length of $1/2$.
Next, the values within $A^1$ are lifted by $1$ while the values within $B^1$ are decreased by $1$, as shown in the middle figure.
Lastly, we permute the values within the entire interval $I$ and obtain a new function $f^1$, as shown in the right subfigure.
If the adversary terminates the game here, the objective value she gained from the game equals one.
% However, if she continues the game, still, she can only set $A= \{-\frac12, -\frac16\}$ and $B = \{\frac16, \frac12\}$, which will revert $f^1$ to the original function $f^0$ and cannot increase her gain from the game.\wq{A little strange}
\end{example}
\paragraph{How to permute the function formally?} In this paragraph, we formally define a sorting procedure by which the adversary permutes the function $g^{t-1}$ to obtain $f^{t-1}$. We prove that this procedure can always be carried out inductively. Moreover, we show that each function $f^t$ and $g^t$ remains piecewise constant throughout the process, with all breakpoints located at rational numbers.

For the base case, note that $f^0$ is piecewise constant with only one piece on $(-1/2,1/2]$. Then, assuming that $f^{t-1}$ is piecewise constant (with rational breakpoints), we show that $g^{t-1}$ is also piecewise constant with $O(t)$ pieces (with rational breakpoints). Consequently, we can sort these pieces to construct $f^t$, which by construction remains piecewise constant with $O(t)$ pieces (with rational breakpoints).

For ease of analysis, we introduce a bijective mapping that explicitly tracks how the points in $f^{t-1}$ are shifted to the points in $f^t$:
\[
x \mapsto \pi^t(x).
\]
Specifically, we proceed as follows. Since $f^{t-1}(x)$ and $g^{t-1}(x)$ are both piecewise constant, and since $f^t$ is obtained by sorting the pieces of $g^{t-1}$, we naturally obtain a bijection between these pieces from the sorting process. Let $L = (p, q]$ and $L' = (p', q']$ be corresponding pieces in $g^{t-1}$ and $f^t$, respectively. For each point $x = p + \delta \in L$, we define
\[
\pi^t(x) = p' + \delta.
\]

\paragraph{From Online-MMS to the Stacking Game.}
Then, we formally state the relation between the \game and the competitive ratio of MMS achieved by \cref{alg:kvalue}.
The main result is stated in \cref{lem:stackonlinemms}: once the gain of the adversary is upper bounded by an integer $\alpha$, \cref{alg:kvalue} will always output a $(4\alpha+2)$-MMS allocation.
In the next subsection, we provide an upper bound of the adversary's gain of $2k$ for any stacking game, which immediately implies \cref{thm:Kvalued}. 

\begin{restatable}{lemma}{lemStackOnlineMMS}
\label{lem:stackonlinemms}
Given $k$, if the adversary cannot gain more than $\alpha \in \mathbb{N}^+$ in the \game, then the output of Algorithm~\ref{alg:kvalue} is $(4\alpha+2)$-MMS, if the number of distinct disutility is at most $k$ for every agent. 
\end{restatable}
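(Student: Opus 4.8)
The plan is to argue in three steps. \textbf{Step 1: pass to the rounded instance.} Since \cref{alg:kvalue} only ever uses the rounded disutilities $\hat d_i(j)$ — the least power of two that is at least $d_i(j)$ — I would first replace every $d_i$ by $\hat d_i$ and write $\widehat{\MMS}_i$ and $\hat V_i^\u$ for the induced quantities. From $d_i(j)\le\hat d_i(j)<2d_i(j)$ we get, for every allocation and every agent, $d_i(A_i)\le\hat d_i(A_i)$ and $\widehat{\MMS}_i\le 2\MMS_i$; hence it suffices to prove $\hat d_i(A_i)\le(2\alpha+1)\widehat{\MMS}_i$ in the rounded instance, in which every agent still has at most $k$ distinct values (and we may pad with empty dummy types so that every agent has exactly $k$). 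The case $n=1$ is trivial, so assume $n\ge 2$.

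\textbf{Step 2: the pressures stay below $\alpha$ via the \game.} I would interpret the $nk$ numbers $H_i^\u$, $(i,\u)\in[n]\times[k]$, as a function on $I=(-1/2,1/2]$ by sorting them and placing them on consecutive subintervals of length $\tfrac1{nk}$, using a fixed total order on the index pairs $(i,\u)$ to break ties — chosen to agree with the tie-break of the greedy selection in \cref{alg:kvalue}. Each arriving item $j$ touches exactly the $n$ pressures $\{H_i^{\u_i(j)}:i\in[n]\}$, one per agent; the algorithm raises the smallest by $1$ and lowers the other $n-1$ by $\tfrac1{n-1}$. This is exactly a move of the adversary in the \game who puts $a^t=1$, $b^t=\tfrac1{n-1}$, takes $A^t$ to be the subinterval of the raised pressure and $B^t$ the union of the $n-1$ subintervals of the lowered pressures: one checks $|A^t|=\tfrac1{nk}=\tfrac1k\cdot\tfrac{b^t}{a^t+b^t}$ and $|B^t|=\tfrac{n-1}{nk}=\tfrac1k\cdot\tfrac{a^t}{a^t+b^t}$, while the raised pressure, being a minimizer among the $n$ touched ones, lies to the left of all of $B^t$ in sorted order (ties resolved by the fixed rule), and the subsequent re-sort is step (iii). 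Untouched (dummy) pressures stay at $0$ forever, which the adversary in the \game realizes by simply never selecting them. Thus the run of \cref{alg:kvalue} induces a legal play of the \game having, at every round, the same multiset of values; by hypothesis the adversary's gain is at most $\alpha$, so $H_i^\u\le\alpha$ throughout the execution.

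\textbf{Step 3: from $H_i^\u\le\alpha$ to MMS.} The update rule gives the invariant $H_i^\u=r_i^\u-\tfrac{s_i^\u}{n-1}$, where $r_i^\u$ and $s_i^\u$ are the numbers of type-$\u$-for-$i$ items that are, respectively, allocated and not allocated to agent $i$; writing $N_i^\u=r_i^\u+s_i^\u$, this reads $H_i^\u=\tfrac{nr_i^\u-N_i^\u}{n-1}$, so $H_i^\u\le\alpha$ gives $r_i^\u\le\tfrac{N_i^\u}{n}+\alpha$. Summing over types,
\[
\hat d_i(A_i)=\sum_{\u=1}^{k}r_i^\u\,\hat V_i^\u\ \le\ \frac1n\sum_{\u=1}^{k}N_i^\u\,\hat V_i^\u+\alpha\sum_{\u=1}^{k}\hat V_i^\u\ =\ \frac1n\,\hat d_i(M)+\alpha\sum_{\u=1}^{k}\hat V_i^\u\ \le\ (2\alpha+1)\,\widehat{\MMS}_i,
\]
using $\tfrac1n\hat d_i(M)\le\widehat{\MMS}_i$ (the heaviest bundle of any $n$-partition is at least the average) and $\sum_\u\hat V_i^\u\le 2\hat V_i^{\max}\le 2\widehat{\MMS}_i$ (the $\hat V_i^\u$ are distinct powers of two, and in every allocation some agent receives the heaviest item). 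Combined with Step 1, $d_i(A_i)\le\hat d_i(A_i)\le(2\alpha+1)\widehat{\MMS}_i\le(4\alpha+2)\MMS_i$.

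\textbf{Main obstacle.} The real work is Step 2 — making the embedding into the \game airtight: verifying the measure bookkeeping for $A^t$ and $B^t$, the requirement that $A^t$ lie entirely to the left of $B^t$, and the compatibility of the sorting bijection $\pi^t$ with whatever tie-breaking \cref{alg:kvalue} performs when several of the $n$ touched pressures coincide. Steps 1 and 3 are routine arithmetic; the only point of care in Step 3 is keeping the constant at $4\alpha+2$, which is why I bound $\tfrac1n\sum_\u N_i^\u\hat V_i^\u$ by $\widehat{\MMS}_i$ directly rather than first rounding each $N_i^\u/n$ up.
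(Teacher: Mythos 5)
Your proposal is correct and takes essentially the same approach as the paper: the same embedding of the $nk$ pressures $H_i^\u$ into the \game with $a^t=1$ and $b^t=\frac{1}{n-1}$ (yielding $H_i^\u\le\alpha$ throughout the execution), and the same factor-$2$ loss from the power-of-two rounding. The only difference is cosmetic: in the final step you bound the rounded disutility via $\frac1n d_i'(M)\le \MMS_i'$ and $\sum_\u V_i^\u\le 2\MMS_i'$ directly, whereas the paper routes through \Cref{lem:mms_decomp} together with an integrality refinement; both yield the same $(2\alpha+1)$ guarantee on the rounded instance and hence $(4\alpha+2)$ overall.
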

\begin{proof}
We aim to show that the execution of Algorithm~\ref{alg:kvalue} can be interpreted as an adversarial process of playing the \game. 
Consider a \game with the same parameter $k$ as in the online fair division instance. 
We divide the interval $(-1/2, 1/2]$ into $nk$ disjoint subintervals 
\[
(-1/2, -1/2 + 1/nk], \; \ldots, \; (1/2 - 1/nk, 1/2],
\]
and show that the $nk$ pressures $H_i^u$ can always be mapped to these $nk$ disjoint subintervals such that the value of $H_i^u$ equals the function value $f$ on the corresponding subinterval.

We prove this by induction on $t$. The base case holds trivially since all values are initially zero. 
Next, we analyze the change in $H$ and $f$ after one round. The adversary for Algorithm~\ref{alg:kvalue} releases a new item $j$, which corresponds to $n$ different $H_i^u$. Meanwhile, the adversary in the \game selects $n$ subintervals that match these values. Specifically, it sets $A^t$ to be the leftmost subinterval and $B^t$ to be the remaining $n-1$ subintervals, with parameters $a^t = 1$ and $b^t = \frac{1}{n-1}$.

As a result, the $n$ affected values $H_i^u$ can still be mapped to these $n$ subintervals after the update: the smallest one increases by $1$ and the other $n-1$ decrease by $\frac{1}{n-1}$. Thus, we maintain a valid mapping of the $nk$ pressures $H_i^u$ to the $nk$ disjoint subintervals throughout the execution.
% 
% 
% we show that the changes made by Algorithm~\ref{alg:kvalue} at each round follow the updating rule of the Stacking game.
% First, observed from \TBD{REF}, only $n$ values among all the $H_i^u$ are updated at every round of Algorithm~\ref{alg:kvalue}.
% One is increased by $1$ while the other $n-1$ ones are decreased by $\frac1{n-1}$.
% By setting $a=1$ and $b=\frac{1}{n-1}$, we can find that it meets the ratio constraint of the updating rule of the \game. 
% In addition, according to line~\ref{line:choose_argmin} of Algorithm~\ref{alg:kvalue}, each time we allocate the item to the agent with the least value of $H_i^u$.
% Hence, the interval corresponding to this $H_i^u$ should be located to the left of all the intervals whose values are increased, as the function $f^t$ is always monotone.
% We illustrate the process through the following figure.

Consider an example of $n=3$ and $k=2$, as shown in Figure~\ref{fig:algo-stacking}.
We label the $nk$ subintervals using $H_i^\u$.
Initially, all values are set to zero.
When the first item $1$ arrives, we compare the three values $H_1^1, H_2^1$, and $H_3^1$, at line~\ref{line:choose_argmin}.
As all three values are the same, without loss of generality, we allocate the item $1$ to agent 1.
Thereafter, we permute the function values and relabel the $nk$ intervals on the x-axis, as shown in the right figure.
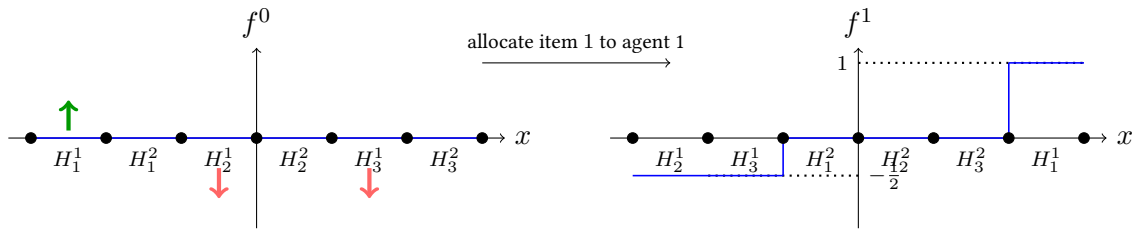
\begin{figure}[h]
\centering
\begin{tikzpicture}
\draw[->] (-3.3, 0) -- (3.3,0) node[right] {$x$};
\draw[->] (0,-1.2) -- (0,1.2) node[above] {$f^0$};

\foreach \x in {-1.5, -0.5, 0, 0.5, 1.5}
\draw (\x,0) node[below] {};
\draw (-2.5, 0) node[below] {\scriptsize $H_1^1$};
\draw (-1.5, 0) node[below] {\scriptsize $H_1^2$};
\draw (-0.5, 0) node[below] {\scriptsize $H_2^1$};
\draw (0.5, 0) node[below] {\scriptsize $H_2^2$};
\draw (1.5, 0) node[below] {\scriptsize $H_3^1$};
\draw (2.5, 0) node[below] {\scriptsize $H_3^2$};

\draw[-, semithick, blue] (-3, 0) -- 
(3, 0);

\filldraw[black] (-3, 0) circle (2pt);
\filldraw[black] (-2, 0) circle (2pt);
\filldraw[black] (-1, 0) circle (2pt);
\filldraw[black] (0, 0) circle (2pt);
\filldraw[black] (1, 0) circle (2pt);
\filldraw[black] (2, 0) circle (2pt);
\filldraw[black] (3, 0) circle (2pt);

\draw[->, ultra thick,  draw=green!60!black] (-2.5,0.1) -- (-2.5,0.5);
\draw[->, ultra thick,  draw=red!60] (-0.5,-0.4) -- (-0.5,-0.8);
\draw[->, ultra thick,  draw=red!60] (1.5,-0.4) -- (1.5,-0.8);

\pgfmathsetmacro{\shiftx}{8};
\draw[->] (3,1) -- (5.5,1) 
    node[midway, above] {\scriptsize allocate item $1$ to agent 1};

\draw[->] (-3.3+\shiftx, 0) -- (3.3+\shiftx,0) node[right] {$x$};
\draw[->] (0+\shiftx,-1.2) -- (0+\shiftx,1.2) node[above] {$f^1$};

\draw (-2.5+\shiftx, 0) node[below] {\scriptsize $H_2^1$};
\draw (-1.5+\shiftx, 0) node[below] {\scriptsize $H_3^1$};
\draw (-0.5+\shiftx, 0) node[below] {\scriptsize $H_1^2$};
\draw (0.5+\shiftx, 0) node[below] {\scriptsize $H_2^2$};
\draw (1.5+\shiftx, 0) node[below] {\scriptsize $H_3^2$};
\draw (2.5+\shiftx, 0) node[below] {\scriptsize $H_1^1$};

\draw[-, dotted, thick] (0+\shiftx, 1) -- (3+\shiftx, 1);
\draw[-, dotted, thick] (0+\shiftx, -0.5) -- (-2+\shiftx, -0.5);

\draw[-, semithick, blue] (-3+\shiftx, -0.5) -- 
(-1+\shiftx, -0.5);
\draw[-, semithick, blue] (-1+\shiftx, 0) -- 
(-1+\shiftx, -0.5);
\draw[-, semithick, blue] (-1+\shiftx, 0) -- 
(2+\shiftx, 0);
\draw[-, semithick, blue] (2+\shiftx, 0) -- 
(2+\shiftx, 1);
\draw[-, semithick, blue] (2+\shiftx, 1) -- 
(3+\shiftx, 1);

\filldraw[black] (-3+\shiftx, 0) circle (2pt);
\filldraw[black] (-2+\shiftx, 0) circle (2pt);
\filldraw[black] (-1+\shiftx, 0) circle (2pt);
\filldraw[black] (0+\shiftx, 0) circle (2pt);
\filldraw[black] (1+\shiftx, 0) circle (2pt);
\filldraw[black] (2+\shiftx, 0) circle (2pt);
\filldraw[black] (3+\shiftx, 0) circle (2pt);

\draw (0+\shiftx, 1) node[left] {\scriptsize $1$};
\draw (0+\shiftx, -0.5) node[right] {\scriptsize $-\frac12$};

\end{tikzpicture}
\caption{An example of mapping from pressures to subintervals when $n=3$ and $k=2$.}
\label{fig:algo-stacking}
\end{figure}

Since the adversary cannot gain more $\alpha$ from the \game, we can obtain that no value of $H_i^u$ can exceed $\alpha$ throughout Algorithm~\ref{alg:kvalue} either.
Below, we prove that it then implies that the output allocation is $(4\alpha+2)$-MMS.
% As we only round the disutilities to the nearest power of two, all the disutility functions are still $k$-valued.
We first prove that the output allocation is $(2\alpha+1)$-MMS on the rounded instance.

\begin{proposition}
\label{prop:mms_of_rounded}
The output allocation of Algorithm~\ref{alg:kvalue} on the rounded instance is a $(2\alpha + 1)$-MMS allocation.
\end{proposition}

\begin{proof}[Proof of \Cref{prop:mms_of_rounded}]
We slightly abuse the notation and continue to use $M_i^\u, N_i^\u$, and $V_i^\u$ for the rounded instance, and we call an item of type-$\u$ of agent $i$ if its disutility is the $\u$-th largest in $V_i$. Since the instance is rounded, we have $V_i^1 \ge 2 V_i^2 \ge \dots \ge 2^{k-1} V_i^k$. Note that if $i$ has fewer than $k$ distinct values, we can pad with some dummy types whose disutilities follow the same geometric sequence, while ensuring that $H_i^\u = 0$ for these dummy types.

Let $A_i$ be the bundle $i$ gets in the output allocation. Observe that for every agent~$i$ and type~$\u$,
\begin{align*}
H_i^\u 
&= \bigl|\text{items in } M_i^\u \text{ allocated to } i\bigr| 
   - \frac{1}{n-1} \cdot \bigl|\text{items in } M_i^\u \text{ allocated to other agents}\bigr| \\
&= |A_i \cap M_i^\u| 
   - \frac{1}{n-1} \cdot \bigl(N_i^\u - |A_i \cap M_i^\u|\bigr) 
= \frac{n}{n-1} |A_i \cap M_i^\u| - \frac{1}{n-1} N_i^\u.
\end{align*}
Since $H_i^\u \le \alpha$, it follows that
\[
|A_i \cap M_i^\u| 
\le \frac{N_i^\u}{n} + \frac{n-1}{n} \cdot \alpha 
< \lceil N_i^\u / n \rceil + \alpha.
\]
Because $|A_i \cap M_i^\u|$ and $\alpha$ are integers, we can refine this to
\[
|A_i \cap M_i^\u| 
\le \lceil N_i^\u / n \rceil - 1 + \alpha.
\]
Since there must exist a bundle with at least $\lceil N_i^\u / n \rceil$ items in any allocation of $M_i^\u$, we obtain
\begin{align}\label{ineq:alg_ij_ub}
V_i^\u \cdot |A_i \cap M_i^\u| 
\le \lceil N_i^\u / n \rceil \cdot V_i^\u - V_i^\u + \alpha \cdot V_i^\u 
\le \MMS_i^\u - V_i^\u + \alpha \cdot V_i^\u.
\end{align}
Summing over all $\u \in [k]$ and applying \Cref{lem:mms_decomp}, we get
\begin{align*}
d_i(A_i) 
= \sum_{\u=1}^k V_i^\u \cdot |A_i \cap M_i^\u| 
&\le \sum_{\u=1}^k \bigl(\MMS_i^\u - V_i^\u\bigr) 
     + \alpha \sum_{\u=1}^k V_i^\u \tag{by~\Cref{ineq:alg_ij_ub}} \\
&\le \MMS_i + \alpha V_i^1 + \alpha \sum_{\u=2}^k V_i^\u \tag{by \Cref{lem:mms_decomp}}\\
&\le \MMS_i + \alpha V_i^1 + \alpha V_i^1 
= (2\alpha + 1) \MMS_i,
\end{align*}
where the last inequality uses the property $V_i^1 \ge 2 V_i^2 \ge \dots \ge 2^{k-1} V_i^k$, so the sum of the remaining values is bounded by $V_i^1$. This completes the proof.
\end{proof}

Finally, we show that the output allocation of Algorithm~\ref{alg:kvalue} on the original instance is $(4\alpha+2)$-MMS.
Since the disutility of every item is rounded up to the nearest power of two, i.e., $d_i(j) \le d_i'(j) < 2\cdot d_i(j)$, then the disutility of every agent in the original instance is at most equal to her disutility in the rounded instance, i.e., $d_i(A_i) \le d_i'(A_i)$.
Meanwhile, since every bundle's disutility is at least half of that in the rounded instance, then $\MMS_i \ge \MMS_i'/2$.
Therefore, it follows that $d_i(A_i) \le (4\alpha+2)\cdot \MMS_i$.
\end{proof}

In the next section, we will move to analyze the gain of the stacking game, and prove it is at most $2k$. 

\begin{restatable}{theorem}{thmStackGame}
\label{thm:ratio_of_stacking_game}
The adversary cannot gain more than $2k$ from the \game.
\end{restatable}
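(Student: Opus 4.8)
The plan is to reduce everything to tracking the single quantity $M^t:=\max_{x\in I}f^t(x)=f^t(1/2)$ (the last equality because $f^t$ is sorted non-decreasingly) and to show $M^t\le 2k$ for every $t$. First I record the structural facts that underlie the game: every $f^t$ is non-decreasing and piecewise constant (as in the ``how to permute'' discussion), and the operation preserves the integral because $a^t|A^t|=\tfrac{a^tb^t}{k(a^t+b^t)}=b^t|B^t|$ and sorting does not change integrals; hence $\int_I f^t=\int_I f^0=0$ for all $t$. Being non-decreasing with mean zero, $f^t$ takes some value $\le 0$, so bounding $M^t$ suffices; moreover a single round raises $M^t$ by at most $a^t\le 1$, since sorting creates no new values and every point moves by at most $\max(a^t,b^t)\le 1$.

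The engine is the following mechanism. $M^t$ can exceed $M^{t-1}$ only if $A^t$ contains points of $f^{t-1}$ within height $a^t$ of the maximum; but then $B^t$, lying entirely to the right of $A^t$, has $f^{t-1}$-values between those of $A^t$ and $M^{t-1}$, i.e.\ inside that same ``top region,'' and the operation lowers them out of it. So any height gained at the top is accompanied by removing at least $|B^t|=\tfrac1k\cdot\tfrac{a^t}{a^t+b^t}\ge\tfrac{a^t}{2k}$ units of top-region measure (the inequality is $a^t,b^t\le 1$). The conversion constant $2k$ is precisely the product of the $2$ coming from $\tfrac{a}{a+b}\ge\tfrac a2$ and the $k$ coming from the fact that each operation touches total measure only $\tfrac1k$ — the $\tfrac1k$-fraction touched per round is what acts as a depletable budget, and summing it against height yields the target $2k$.

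The obstacle is that the top region is not a fixed resource: the adversary can replenish it by lifting mass up from just below, so one cannot simply bound the total top-measure ever consumed. I would therefore maintain an invariant controlling all level sets of $f^t$ at once. Since $|\{f^t\ge v\}|=|\{g^{t-1}\ge v\}|=|\{f^{t-1}\ge v\}|+|A^t\cap(v-a^t,v]|-|B^t\cap(v,v+b^t]|$, I would prove by induction on $t$ a bound of the form $|\{f^t\ge v\}|\le\beta(v)$ for a fixed decreasing profile $\beta$ vanishing at $v=2k$, chosen so that the mass that can sit in any unit-height window $(v-1,v]$ (which upper-bounds what a single round can push above $v$) is always offset by the budget; both the left-of-$B^t$ placement of $A^t$ and the mean-zero identity are used to close the induction, and applying the bound at $v$ slightly below $2k$ gives $M^t=f^t(1/2)\le 2k$.

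The hard part will be calibrating this level-set invariant, because the naive potentials fail outright: an exponential $\int_I e^{f^t}$ can be multiplied by $1+\Theta(1/k)$ every round and so blows up, while a flat level-count $\sum_\ell|\{f^t\ge\ell\}|$ can be pushed up by $\Theta(1/k)$ in every round and so also diverges. The argument must genuinely combine the two conserved quantities ($\int_I f^t=0$ and total touched measure $\le 1/k$) with the fact that lifting high mass necessarily lowers an equal amount of mass that is itself high. Once $\alpha=2k$ is in hand, \Cref{lem:stackonlinemms} turns it into a $(8k+2)$-MMS guarantee, which together with the $O(\log D)$ consequence of rounding yields \Cref{thm:Kvalued}.
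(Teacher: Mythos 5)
There is a genuine gap: your proposal is a plan rather than a proof, and the part you defer (``calibrating this level-set invariant'') is exactly where all the work lies. You correctly identify the three structural ingredients (mean zero, per-round touched measure $1/k$, and $A^t$ lying to the left of $B^t$) and even the right extremal profile --- a level-set bound $\abs{\{f^t\ge v\}}\le 1-v/(2k)$ is precisely the differentiated form of the invariant the paper proves, namely $F^t(x)=\int_x^{1/2}f^t\le k/2-2kx^2$, whose slope at $x=1/2$ gives the $2k$ bound via \Cref{lem:bound_F}. But you never specify $\beta$, and you never verify that any profile is preserved by a single round. That verification is nontrivial: the paper first needs an exchange argument (\Cref{lem:contiguous_interval}, a Top-$i$ majorization over the pieces of $g^{t-1}$) to reduce to operations where $A\cup B$ is one contiguous interval, and then a case analysis (\Cref{lem:ub_of_contiguous_interval}) built around the upgrading/downgrading intervals and the discrepancy term $\Gamma=a\min(\ell+\delta,\abs{A})-b\max(\abs{B}-\ell,0)$, plus a separate symmetric argument (using $\int f^t=0$) for points to the right of $A$. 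None of this, nor any substitute for it, appears in your write-up.

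Moreover, the pointwise level-set induction as you state it cannot close without being coupled across levels. For a fixed $v$, one has $\abs{\{f^t\ge v\}}=\abs{\{f^{t-1}\ge v\}}+\abs{A^t\cap\{f^{t-1}\in(v-a^t,v]\}}-\abs{B^t\cap\{f^{t-1}\in(v,v+b^t]\}}$, and the adversary can make the gain term $\Theta(1/k)$ while the loss term at that same $v$ is zero (e.g.\ when $b^t$ is tiny and $B^t$'s values sit well above $v$, or well below $v+b^t$ is irrelevant); repeating this feeds mass upward level by level, so no fixed $\beta(v)$ is preserved level-by-level on its own. Also, the extremal configuration of your intended profile, $f(x)=2k(x+1/2)$, has integral $k\neq 0$, so the mean-zero constraint must enter the invariant itself rather than be invoked informally --- which is exactly what the paper's integrated form accomplishes, since $F^t(-1/2)=0$ is forced and the quadratic profile vanishes there. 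In short, your intuition points at the same invariant the paper uses (in integrated form), but the reduction to contiguous operations and the preservation argument are missing, so the claimed bound of $2k$ is not established by your proposal.
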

Assuming its correctness, we can now complete the proof of \Cref{thm:Kvalued}.

\begin{proof}[Proof of \Cref{thm:Kvalued}]
Here, the parameter $k$ in the \game actually refers to the number of distinct disutilities after rounding, which is at most the original $k$ and also bounded by $O(\log D)$. By applying \Cref{lem:stackonlinemms} and \Cref{thm:ratio_of_stacking_game}, our algorithm can guarantee an $(8k + 2)$-MMS allocation, with $k$ being the number of distinct disutilities after rounding. It implies the $O(k)$-MMS for the original $k$, and also $O(\log D)$-MMS. Combining with the fact that any allocation is $n$-MMS, we obtain the desired result, thus proving the theorem.
\end{proof}

\subsection{Analysis of The Stacking Game}
In this section, we aim to prove \Cref{thm:ratio_of_stacking_game}. At the beginning, we present a natural but important invariant for this process. 
\begin{proposition}
\label{inv:stacking_f} 
for all rounds $t$, 
$\int_{-1/2}^{1/2} f^t(x) dx = 0$.
% \wq{The second claim seems unneeded. Proof: $\forall t$, $f^t$ is piecewise constant and non-decreasing.}
\end{proposition}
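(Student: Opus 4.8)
The plan is to prove the invariant by induction on the round number $t$, by checking that each of the three sub-steps performed in a round leaves the integral $\int_{-1/2}^{1/2} f(x)\,dx$ unchanged. Since the integral starts at $0$ (as $f^0 \equiv 0$), this yields the claim.

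For the inductive step, suppose $\int_{-1/2}^{1/2} f^{t-1}(x)\,dx = 0$. First consider the passage from $f^{t-1}$ to $g^{t-1}$ in step (ii). The function $g^{t-1}$ agrees with $f^{t-1}$ except that it is larger by $a^t$ on $A^t$ and smaller by $b^t$ on $B^t$, and both $A^t$ and $B^t$ are finite unions of intervals, hence measurable, with $|A^t| = \frac1k \cdot \frac{b^t}{a^t+b^t}$ and $|B^t| = \frac1k \cdot \frac{a^t}{a^t+b^t}$. Therefore
\[
\int_{-1/2}^{1/2} g^{t-1}(x)\,dx - \int_{-1/2}^{1/2} f^{t-1}(x)\,dx
= a^t\,|A^t| - b^t\,|B^t|
= \frac{a^t b^t}{k(a^t+b^t)} - \frac{a^t b^t}{k(a^t+b^t)} = 0,
\]
so $\int_{-1/2}^{1/2} g^{t-1}(x)\,dx = 0$.

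Next, the passage from $g^{t-1}$ to $f^t$ in step (iii) only rearranges the function values. Using the formalization of the sorting operation described above, there is a bijection $\pi^t : I \to I$ sending each constant piece $(p,q]$ of $g^{t-1}$ onto a constant piece $(p',q']$ of $f^t$ of the same length via the translation $x \mapsto x + (p'-p)$, with $f^t(\pi^t(x)) = g^{t-1}(x)$. A piecewise translation of this kind is measure-preserving, so a change of variables gives $\int_{-1/2}^{1/2} f^t(x)\,dx = \int_{-1/2}^{1/2} g^{t-1}(x)\,dx$; equivalently, since $g^{t-1}$ and $f^t$ are piecewise constant with the same finite multiset of (piece-length, value) pairs, the sum $\sum (\text{length})\cdot(\text{value})$ computing the integral is unchanged. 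Hence $\int_{-1/2}^{1/2} f^t(x)\,dx = 0$, completing the induction.

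The only mildly delicate point is justifying that step (iii) preserves the integral; this is really bookkeeping rather than a genuine obstacle, since it follows immediately once one records that $f^t$ is obtained from $g^{t-1}$ by permuting a fixed finite collection of constant pieces of prescribed lengths — a fact already established when the sorting operation was made precise. Everything else is an elementary length computation.
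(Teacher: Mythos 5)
Your proof is correct and follows essentially the same reasoning as the paper's (much terser) proof: the increment and decrement cancel since $a^t|A^t| = b^t|B^t| = \frac{a^t b^t}{k(a^t+b^t)}$, and the sorting step merely permutes constant pieces and so preserves the integral. You simply spell out the bookkeeping that the paper leaves implicit.
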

\begin{proof}
    It is easy to see, at each step, that the total increment is always equal to the total decrement, so the lemma naturally holds. 
\end{proof}

% Next, since the gain of the adversary from the \game is $\max_{x\in I} f^t(x)$, it suffices to show that $f^t(x) \le 2k$ holds for all rounds $t$ and all $x\in I$. 
% As the adversary increases/decreases the values within the selected intervals simultaneously, the function $f^t$ is always piecewise constant and non-decreasing throughout the game.
% Hence, the function $f^t$ is always integrable.
We denote by $F^t(x,y)$ the integral of $f^t$ over the interval $[x, y]$, i.e., $F^t(x,y) = \int_x^{y} f^t(u) \, du$.
Let $F^t(x)$ be the integral of $f$ over the interval $[x, 1/2]$, i.e., $F^t(x) = F^t(x, 1/2) = \int_x^{1/2} f(u) \, du$.
We aim to transfer the original task to bounding $F^t(x)$. Below, we show that a good bound for $F^t(x)$ suffices to give a good bound for $f^t(x)$.  
\begin{lemma}\label{lem:bound_F}
For any fixed $t$, if $\forall x\in [-\frac{1}{2},\frac{1}{2}]$, $F^t(x)\leq k/2-2kx^2$, then $\forall x\in [-\frac{1}{2},\frac{1}{2}]$, $f^t(x)\leq 2k$ always holds.
\end{lemma}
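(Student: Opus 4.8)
The plan is to argue by contradiction via the integral. Suppose $f^t$ is non-decreasing with $\int_{-1/2}^{1/2} f^t = 0$ (Proposition~\ref{inv:stacking_f}), and suppose toward contradiction that $f^t(x_0) > 2k$ for some $x_0 \in [-1/2, 1/2]$. Since $f^t$ is non-decreasing, $f^t(x) > 2k$ for all $x \geq x_0$, and also (because the integral is zero and $f^t$ is non-decreasing) $f^t$ must take negative values to the left of $x_0$; in particular $x_0 < 1/2$ and there is ``mass'' that has to be compensated. The idea is to choose a cleverly placed point $x^\ast$ and lower-bound $F^t(x^\ast) = \int_{x^\ast}^{1/2} f^t$ by something exceeding $k/2 - 2k(x^\ast)^2$, contradicting the hypothesis.

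First I would make the crucial observation that, since $f^t$ is non-decreasing and $\int_{-1/2}^{1/2} f^t = 0$, there is a threshold point $z \in (-1/2, 1/2)$ with $f^t \leq 0$ on $(-1/2, z)$ and $f^t \geq 0$ on $(z, 1/2)$ (up to the usual care at the single crossing point). Then $F^t(z) = \int_z^{1/2} f^t = \int_{-1/2}^{z}(-f^t) = \tfrac12\int_{-1/2}^{1/2}|f^t|$, i.e. $F^t(z)$ equals half the total variation-mass; call this quantity $S$. The hypothesis gives $S = F^t(z) \leq k/2 - 2kz^2 \leq k/2$. Now I want to use $S \leq k/2$ together with $f^t(x_0) > 2k$ to derive a contradiction. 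The point is that if $f^t$ exceeds $2k$ on $[x_0, 1/2]$ then $\int_{x_0}^{1/2} f^t > 2k(1/2 - x_0)$, and this quantity is at most $F^t(x_0) \le k/2 - 2kx_0^2$, so $2k(1/2 - x_0) < k/2 - 2kx_0^2$, i.e. $1 - 2x_0 < 1/2 - 2x_0^2$, i.e. $2x_0^2 - 2x_0 + 1/2 < 0$, i.e. $2(x_0 - 1/2)^2 < 0$ — impossible. So actually the argument is shorter than the threshold-point detour: just integrate the pointwise lower bound $f^t > 2k$ over $[x_0, 1/2]$ against the hypothesized bound on $F^t(x_0)$.

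So the clean write-up is: assume $f^t(x_0) > 2k$; since $f^t$ is non-decreasing, $f^t(u) \geq f^t(x_0) > 2k$ for all $u \in [x_0, 1/2]$, hence
\[
F^t(x_0) = \int_{x_0}^{1/2} f^t(u)\,du \geq f^t(x_0)\cdot(1/2 - x_0) > 2k\,(1/2 - x_0).
\]
Combining with the hypothesis $F^t(x_0) \leq k/2 - 2k x_0^2$ yields $2k(1/2 - x_0) < k/2 - 2k x_0^2$, and dividing by $2k > 0$ gives $1/2 - x_0 < 1/4 - x_0^2$, i.e. $x_0^2 - x_0 + 1/4 < 0$, i.e. $(x_0 - 1/2)^2 < 0$, a contradiction. (One should handle the boundary case $x_0 = 1/2$ separately: then $F^t(1/2) = 0 \leq k/2 - 2k(1/4) = 0$ gives no contradiction directly, so instead use that $f^t$ is non-decreasing and $\int_{-1/2}^{1/2} f^t = 0$ to find $x_0' < 1/2$ with $f^t(x_0') > 2k$ still — indeed if $f^t(1/2) > 2k$ and the integral is $0$ then $f^t$ is strictly below its value at $1/2$ on a set of positive measure, and by monotonicity $f^t$ is already $> 2k$ on some $[x_0', 1/2]$ with $x_0' < 1/2$; one can also simply apply the displayed inequality at any $x_0' \in [x_0, 1/2)$ since $f^t(x_0') \ge$ ... wait, monotonicity gives $f^t(x_0') \le f^t(1/2)$, not $\ge$; the correct fix: pick $x_0'$ with $f^t(x_0') > 2k$ and $x_0' < 1/2$, which exists because the set $\{x : f^t(x) > 2k\}$ is a non-degenerate interval $(c, 1/2]$ or $[c,1/2]$ with $c < 1/2$ whenever it is non-empty, as $f^t$ cannot jump to a value $>2k$ only at the single endpoint while having zero integral.)

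The main obstacle I anticipate is precisely this boundary bookkeeping: making rigorous that ``$f^t(x_0) > 2k$ for some $x_0$'' implies ``$f^t(x_0') > 2k$ for some $x_0' < 1/2$,'' which relies on $f^t$ being piecewise constant (established in the permutation discussion) so that its superlevel sets are finite unions of intervals of positive length; then $\{f^t > 2k\}$, if non-empty, contains an interval $(x_0', 1/2]$ with $x_0' < 1/2$, and the displayed computation applies to this $x_0'$. Everything else is the one-line convexity inequality $(x_0' - 1/2)^2 \ge 0$. I would therefore structure the proof as: (1) reduce to finding $x_0' < 1/2$ with $f^t(x_0') > 2k$ using piecewise-constancy and monotonicity; (2) integrate the pointwise bound; (3) invoke the hypothesis and reach the contradiction $(x_0' - 1/2)^2 < 0$.
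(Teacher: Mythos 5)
Your proof is correct and is essentially the paper's own argument in contrapositive form: the paper bounds $\max_{x\in I} f^t(x)=f^t(1/2)$ by the one-sided limit $\lim_{\delta\to 0^+}F^t(1/2-\delta)/\delta\le \lim_{\delta\to 0^+}\bigl(k/2-2k(1/2-\delta)^2\bigr)/\delta=2k$, while you integrate the pointwise bound $f^t>2k$ over $[x_0,1/2]$ and reach the same tangency contradiction $(x_0-1/2)^2<0$. Both routes rest on monotonicity plus the piecewise-constancy of $f^t$ near the right endpoint, which you correctly invoke to handle the boundary case $x_0=1/2$, so nothing further is needed.
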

\begin{proof}
Since $f^t(x)$ is a piecewise constant function on left-open, right-closed subintervals, 
and non-decreasing, we can obtain that
\begin{align*}
\max_{x\in I}f^t(x) & = f^t(1/2) = -\frac{d F^t(x)}{dx}\bigg|_{x= \frac12}  = - \lim_{\delta\rightarrow 0^+}\frac{F^t(1/2)- F^t(1/2- \delta)}{1/2 - (1/2 -\delta)} \\
& = \lim_{\delta\rightarrow 0^+}\frac{F^t(1/2- \delta)}{\delta} \tag{as $F^t(1/2) = 0$} \\
& \le \lim_{\delta\rightarrow 0^+}\frac{k/2- 2k(1/2 - \delta)^2}{\delta} = 2k~,
\end{align*}
which completes the proof.
\end{proof}

In the following proof, we show that the inequality $F^t(x) \le \frac{k}{2} - 2k x^2$ holds throughout the game. We prove this by induction on $t$. It is clear that the initial functions $f^0$ and $F^0$ satisfy this bound. Assuming as the induction hypothesis that $F^{t}(x) \le \frac{k}{2} - 2k x^2$, we aim to prove that the same inequality holds for $F^{t+1}$ as well.

The key question is whether there exists a ``bad'' choice of $\x$ that could turn a valid $F^{t}(x)$ into an invalid $F^{t+1}(x)$. To address this, we first narrow our discussion by showing that it suffices to consider choices of $A$ and $B$ where $A \cup B$ forms a single contiguous interval.

\begin{restatable}{lemma}{lemcontiguousInterval}
\label{lem:contiguous_interval}
At any round $t+1$, suppose the adversary applies the operation $\x = (a, b, A, B)$ on $f^{t}$, resulting in an updated function $f^{t+1}$ with integral $F^{t+1}$. Then there always exists another operation $\tilde{\x} = (a, b, \tA, \tB)$ such that $\tA \cup \tB$ is a single \textbf{contiguous} interval, and we have $\tilde{F}^{t+1}(y) \ge F^{t+1}(y)$ for all $y \in I$, where $\tilde{F}^{t+1}$ denotes the integral function of $\tilde{f}^{t+1}$ from $y$ to $1/2$, after applying $\tilde{\x}$.
\end{restatable}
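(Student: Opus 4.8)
The plan is to show that breaking $A \cup B$ into a single contiguous block can only help the adversary, by a two-step reduction: first merge the pieces of $A$ into one contiguous interval and the pieces of $B$ into one contiguous interval, and then slide these two blocks together so that they abut. Throughout, I would track the effect on the integral-from-the-right function $F^{t+1}(y)$, since that is the quantity the induction in \Cref{lem:bound_F} ultimately needs, and I would keep $a$, $b$ fixed (hence $|A|$, $|B|$ fixed, so the new operation $\tilde\x$ is legal). The key structural fact I want to exploit is that $f^t$ is non-decreasing: so $A$ (which gets lifted) sits among \emph{smaller} values and $B$ (which gets lowered) sits among \emph{larger} values, and after the permutation step $f^{t+1}$ is just the sorted multiset of values of $g^t$.

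The main idea for each reduction step is a \emph{coupling/majorization} argument on the sorted value vectors. Write $g^t$ for the pre-sort function coming from $\x$ and $\tilde g^t$ for the one coming from $\tilde\x$; after sorting we get $f^{t+1}$ and $\tilde f^{t+1}$. I claim that for the merging step, the multiset of values of $\tilde g^t$ is obtained from that of $g^t$ by moving some ``lift'' operations from points with larger $f^t$-value to points with smaller $f^t$-value (and symmetrically moving ``drop'' operations to points with larger $f^t$-value). Because $f^t$ is monotone, making $A$ contiguous (say $A = (p, p+|A|]$ chosen as far left as possible among the points $A\cup B$ occupies, and $B$ the complementary right part) replaces each lift at a relatively high point with a lift at a relatively low point. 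I would then invoke the standard fact that if one nonnegative vector's sorted version is coordinatewise dominated — equivalently, all its upper partial sums are dominated — by another's, then the right-tail integrals satisfy $\tilde F^{t+1}(y) \ge F^{t+1}(y)$ for all $y$. Concretely: $F^{t+1}(y) = \int_y^{1/2} f^{t+1}$ equals the sum of the top $|[y,1/2]|$ ``mass'' of the value distribution, so dominance of upper partial sums of the sorted vectors is exactly what gives $\tilde F^{t+1} \ge F^{t+1}$ pointwise.

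For the abutting step (sliding the contiguous $A$-block and contiguous $B$-block together), the observation is that any point strictly between $A$ and $B$ has an $f^t$-value wedged between the values in $A$ and in $B$; swapping such a middle point into $B$ and pushing a boundary point of $A$ rightward (or the mirror-image move) again replaces a lift at a higher point by a lift at a lower point and a drop at a lower point by a drop at a higher point, so the same majorization inequality applies. Iterating finitely many such swaps (there are finitely many pieces since $f^t$ is piecewise constant with rational breakpoints, as established earlier) collapses $A \cup B$ to a single contiguous interval with $A$ entirely left of $B$, as required, and each step only increases $\tilde F^{t+1}(y)$ for every $y$.

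I expect the main obstacle to be making the majorization/coupling bookkeeping fully rigorous across the sort operation — i.e., carefully proving the lemma ``if $\tilde g^t$'s value multiset majorizes $g^t$'s in the sense that every upper partial sum is at least as large, then $\tilde F^{t+1}(y) \ge F^{t+1}(y)$ for all $y$'' and, more delicately, verifying that each elementary move (merging one stray piece, or sliding across one middle piece) actually produces such a majorization given only monotonicity of $f^t$. The subtlety is that moving a lift from a high point to a low point changes two coordinates of the value vector simultaneously (one goes up relative to before, one goes down), so one must check it is a genuine ``Robin Hood'' transfer on the sorted vector; the monotonicity of $f^t$ is exactly what guarantees the transferred mass lands in the right place. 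Handling the knife-edge cases where lifting/dropping causes value ties or reorderings is where the argument needs the most care, but none of it should require more than elementary inequalities once the right invariant is identified.
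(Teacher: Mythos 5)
Your overall architecture — discretize onto a common rational grid, perform finitely many elementary swaps, and push the comparison through the sorting step via dominance of upper partial sums of the value multisets — is exactly the paper's strategy. But the direction of your transfers is backwards, and this is not cosmetic. To get $\tilde{F}^{t+1}(y)\ge F^{t+1}(y)$ you need the upper partial sums (the $\mathrm{Top}_i$ sums) of the new value multiset to \emph{dominate} the old ones, and the elementary move that achieves this is moving a lift from a \emph{lower}-valued point to a \emph{higher}-valued point (and a drop from a higher-valued point to a lower-valued one): if $v_1\le v_2$, replacing $\{v_1+a,\,v_2\}$ by $\{v_1,\,v_2+a\}$ raises the top element and preserves the sum. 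Your stated moves (``lift at a relatively high point replaced by a lift at a relatively low point,'' drops moved to higher points) do precisely the opposite and yield $\tilde{F}^{t+1}\le F^{t+1}$. Accordingly, the paper slides $A$ \emph{rightward} (substituting the leftmost piece of $A$ into gaps on its right, eventually into the gap between $A$ and $B$) and slides $B$ \emph{leftward}, so the contiguous block sits toward the middle; your proposed placement ``$\tilde{A}\cup\tilde{B}$ as far left as possible within the footprint of $A\cup B$'' is genuinely wrong, not just suboptimal.

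A concrete counterexample to your placement: take $k=2$, $a=b=1$ (so $|A|=|B|=1/4$), and let $f^t=-3$ on $(-1/2,-1/4]$ and $f^t=1$ on $(-1/4,1/2]$ (integral $0$, non-decreasing). Let $A=(-1/2,-3/8]\cup(-1/4,-1/8]$ and $B=(-1/8,1/8]$. The original operation produces sorted values $2,1,0,-2,-3$ on measures $\tfrac18,\tfrac38,\tfrac14,\tfrac18,\tfrac18$, so the tail integral over the top measure $1/8$ is $1/4$. Your far-left contiguous choice $\tilde{A}=(-1/2,-1/4]$, $\tilde{B}=(-1/4,0]$ produces sorted values $1,0,-2$ on measures $\tfrac12,\tfrac14,\tfrac14$, whose top-$1/8$ tail is only $1/8$; hence $\tilde{F}^{t+1}(3/8)<F^{t+1}(3/8)$ and the required inequality fails. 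Note also that your ``abutting step'' is described geometrically in the correct way (push $A$'s boundary rightward, swap a middle point into $B$), but you then characterize it as a lift-to-lower/drop-to-higher transfer, which contradicts that geometry — a sign the bookkeeping needs to be redone with the transfers oriented as above. Once the direction is fixed (lifts migrate right, drops migrate left, each step a two-element exchange justified by monotonicity of $f^t$), your $\mathrm{Top}_i$-dominance lemma and the finite-termination argument go through exactly as in the paper.
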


% \wq{Introduce some intuition for why the adversary can always find such a contiguous operation. For example, merging scattered intervals into one contiguous interval does not decrease the integral and often increases it due to the monotonicity properties of $F^t$. A small illustrative example would be helpful.}

Before presenting the proof of \Cref{lem:contiguous_interval}, we first give some intuition behind why the adversary can find such a contiguous operation.
Consider the following example:
\begin{example}
Let $k=2$ and $a=b=1$. At round $t+1$, suppose the function $f^t$ is defined by $f^t(x) = -1$ for $x \le 0$ and $f^t(x) = 1$ for $x > 0$. Consider the case where the adversary applies an operation with 
$A = (-\frac18, 0] \cup (\frac18, \frac14]$ and $B = (\frac14, \frac12]$, which is not contiguous.

We can then construct a new interval $\tilde{A}$ by merging the two subintervals of $A$ into a single contiguous interval: specifically, we remove $(-\frac18, 0]$ from $A$ and add $(0, \frac18]$ in its place. While this adjustment keeps the total integral unchanged, it shifts more of the mass to the right. As a result, the integral from any point $y$ to the rightmost endpoint $1/2$ is increased. Formally, we have $\tilde{F}^{t+1}(y) \ge F^{t+1}(y)$ for all $y \in I$. We show the process in Figure~\ref{fig:ioc}.

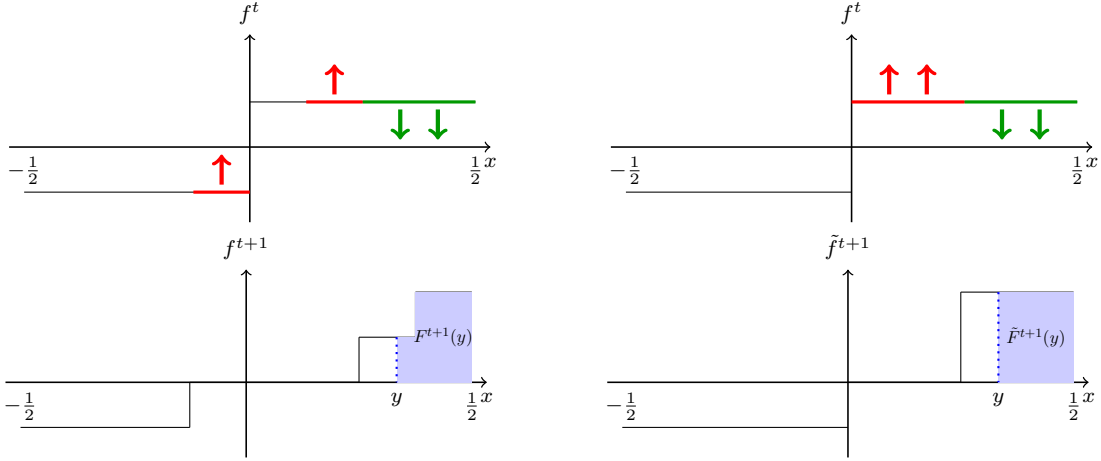
\begin{figure}[h]
\centering
\begin{tikzpicture}
    \centering
    \draw[->, semithick] (-3.2,0) -- (3.2,0) node[right, anchor=north] {\scriptsize $x$};
    \draw[->, semithick] (0,-1) -- (0,1.5) node[above] {\scriptsize $f^t$};
    
    \draw (-3, 0) node[below] {\scriptsize $-\frac12$};
    \draw (3, 0) node[below] {\scriptsize $\frac12$};

    \draw[-, black] (-3, -0.6) -- (0, -0.6);
    \draw[-, black] (0, 0.6) -- (3, 0.6);
    
    \draw[-, very thick, green!60!black] (1.5, 0.6) -- (3, 0.6);
    \draw[-, very thick, red] (0.75, 0.6) -- (1.5, 0.6);
    \draw[-, very thick, red] (-0.75, -0.6) -- (0, -0.6);

    \draw[->, ultra thick,  draw=green!60!black] (2,0.5) -- (2,0.1);
    \draw[->, ultra thick,  draw=green!60!black] (2.5,0.5) -- (2.5,0.1);
    \draw[->, ultra thick,  draw=red] (1.125,0.7) -- (1.125,1.1);
    \draw[->, ultra thick,  draw=red] (-0.375,-0.5) -- (-0.375,-0.1);

\pgfmathsetmacro{\shift}{8};
\centering
\draw[->, semithick] (-3.2+\shift,0) -- (3.2+\shift,0) node[right, anchor=north] {\scriptsize $x$};
\draw[->, semithick] (0+\shift,-1) -- (0+\shift,1.5) node[above] {\scriptsize $f^t$};

\draw (-3+\shift, 0) node[below] {\scriptsize $-\frac12$};
\draw (3+\shift, 0) node[below] {\scriptsize $\frac12$};

\draw[-, black] (-3+\shift, -0.6) -- (0+\shift, -0.6);
\draw[-, black] (0+\shift, 0.6) -- (3+\shift, 0.6);

\draw[-, very thick, green!60!black] (1.5+\shift, 0.6) -- (3+\shift, 0.6);
\draw[-, very thick, red] (0+\shift, 0.6) -- (1.5+\shift, 0.6);
% \draw[-, very thick, red] (-0.75+\shift, -0.6) -- (0+\shift, -0.6);

\draw[->, ultra thick,  draw=green!60!black] (2+\shift,0.5) -- (2+\shift,0.1);
\draw[->, ultra thick,  draw=green!60!black] (2.5+\shift,0.5) -- (2.5+\shift,0.1);
\draw[->, ultra thick,  draw=red] (1+\shift,0.7) -- (1+\shift,1.1);
\draw[->, ultra thick,  draw=red] (0.5+\shift,0.7) -- (0.5+\shift,1.1);
\end{tikzpicture}

\begin{tikzpicture}
    \centering
    \draw[->, semithick] (-3.2,0) -- (3.2,0) node[right, anchor=north] {\scriptsize $x$};
    \draw[->, semithick] (0,-1) -- (0,1.5) node[above] {\scriptsize $f^{t+1}$};
    
    \draw (-3, 0) node[below] {\scriptsize $-\frac12$};
    \draw (3, 0) node[below] {\scriptsize $\frac12$};

    \draw[-, black] (-3, -0.6) -- (-0.75, -0.6);
    \draw[-, black] (-0.75, -0.6) -- (-0.75, 0);
    \draw[-, black] (-0.75, 0) -- (1.5, 0);
    \draw[-, black] (1.5, 0) -- (1.5, 0.6);
    \draw[-, black] (1.5, 0.6) -- (2.25, 0.6);
    \draw[-, black] (2.25, 0.6) -- (2.25, 1.2);
    \draw[-, black] (2.25, 1.2) -- (3, 1.2);

    \fill[blue!20] (2, 0) rectangle (2.25, 0.6);
    \fill[blue!20] (2.25, 0) rectangle (3, 1.2);
    \draw[-, dotted, thick, blue] (2, 0.6) -- (2, 0);
    \draw (2.62, 0.6) node {\scalebox{0.6}{$F^{t+1}(y)$}};
    \draw (2, 0) node[below] {\scriptsize $y$};

\pgfmathsetmacro{\shift}{8};
\centering
\draw[->, semithick] (-3.2+\shift,0) -- (3.2+\shift,0) node[right, anchor=north] {\scriptsize $x$};
\draw[->, semithick] (0+\shift,-1) -- (0+\shift,1.5) node[above] {\scriptsize $\tilde{f}^{t+1}$};

\draw (-3+\shift, 0) node[below] {\scriptsize $-\frac12$};
\draw (3+\shift, 0) node[below] {\scriptsize $\frac12$};

\draw[-, black] (-3+\shift, -0.6) -- (0+\shift, -0.6);
\draw[-, black] (0+\shift, 0) -- (1.5+\shift, 0);
\draw[-, black] (1.5+\shift, 0) -- (1.5+\shift, 1.2);
\draw[-, black] (1.5+\shift, 1.2) -- (3+\shift, 1.2);

\fill[blue!20] (2+\shift, 0) rectangle (3+\shift, 1.2);
\draw[-, dotted, thick, blue] (2+\shift, 1.2) -- (2+\shift, 0);
\draw (2.5+\shift, 0.6) node {\scalebox{0.6}{$\tilde{F}^{t+1}(y)$}};
\draw (2+\shift, 0) node[below] {\scriptsize $y$};
\end{tikzpicture}
\caption{An example showing $\tilde{F}^{t+1}(y) \ge F^{t+1}(y)$ with contiguous interval $\tilde{A}\cup\tilde{B}$.}
\label{fig:ioc}
\end{figure}
\end{example}
\begin{proof} [Proof of \Cref{lem:contiguous_interval}]
Denote by $e_1^A, \dots, e_s^A$ and $e_1^B, \dots, e_t^B$ the set of disjoint subintervals contained by $A$ and $B$ respectively.
Let $\ell_i^A$ (resp. $\ell_i^B$) and $r_i^A$ (resp. $r_i^B$) respectively denote the left and right endpoints of $e_i^A$ (resp. $e_i^B$).
By the definition of \game, we know that each of $\ell_i^A$, $r_i^A$, $\ell_i^B$, and $r_i^B$ is rational for every $i$. By the property that $f^{t}$ is piecewise constant, with all endpoints on rational numbers,  we can partition the small enough subinterval $I$ evenly into $\mathcal{I}=\{I_1,I_2,\ldots\}$, where each subinterval has length $\epsilon = \frac{1}{|\mathcal{I}|}$, such that: (1) $f^{t}$, $g^{t}$, and $f^{t+1}$ are all constant on each $I_i$, and (2) each subinterval $I_i$ either does not intersect $A$ and $B$ at all, or is entirely contained within $A$ or entirely within $B$.

Next, we perform the following update by piecing together subintervals of $A$ while maintaining $B$ unchanged.
If $A$ is not contiguous, then there exists some interval $I_A^*$ that does not lie in $A$, while parts of $A$ lie on both its left and right. Note that by the definition of the \game, $I_A^*$ cannot be in $B$. 
Denote by $I_A^{\min}$ the leftmost interval of $\mathcal{I}$ that is included by $A$.
Then we define $\tilde{A}$ by substituting $I_A^{\min}$ with $I_A^*$, which is still a valid operation.
Suppose the original values of $f^{t}$ over $I_A^{\min}$ and $I_A^*$ are respectively $f_A^{\min}$ and $f_A^*$.
We can see that the values of $g^t$ over $I_A^{\min}$ and $I_A^*$ are respectively $f_A^{\min} + a$, $f_A^*$ while the values of $\tilde{g}^t$ are respectively $f_A^{\min}$, $f_A^* + a$.
Denote by $G$ and $\tilde{G}$ the sets of function values on different intervals in $\mathcal{I}$, with respect to $g^t$ and $\tilde{g}^t$, respectively. 
% In particular, 
% Since $A$ and $\tilde{A}$ only differs on $I_A^{\min}$ and $I_A^*$, the values of $g^t$ and $\tilde{g}^t$ on other subintervals are the same. We use $G$ and $\tilde{G}$ to denote the set function values on the set \mathcal{}

% of values in \wq{?} by using $g^{t}$ and $\tilde{g}^t$. 

Now we show that the value of $\tilde{F}^{t+1}(y)$ is weakly larger than $F^{t+1}(y)$ for any $y\in I$, where $F^{t+1}(y)$, and $\tilde{F}^{t+1}(y)$ are integrals of $f^{t+1}$, and $\tilde{f}^{t+1}$ from $y$ to $1/2$. 
Since both functions $f^{t+1}$ and $\tilde{f}^{t+1}$ are piecewise-constant on every $I_i$, $F^t$ and $\tilde{F}^t$ are piecewise-linear on every $I_i$. %\wq{Since both functions $f^t$ and $\tilde{f}^t$ are piecewise-constant and increasing on every $I_i$, $F^t$ and $\tilde{F}^t$ are piecewise-linear and convex.}
It suffices to prove that $\tilde{F}^{t+1}(y)$ is weakly higher than $F^{t+1}(y)$ at every endpoint of $\mathcal{I}$. Since all subintervals of $\mathcal{I}$ have equal length and $f^{t+1}$ and $\tilde{f}^{t+1}$ are obtained by sorting the values in $G$ and $\tilde{G}$ in ascending order, respectively, it suffices to show that for any $i \in [|\mathcal{I}|]$, the sum of the largest $i$ values of $\tilde{G}$ is weakly larger than that of $G$.
Formally, 
\begin{align}\label{eqn:top_i}
\mathrm{Top}_i(\tilde{G}) \ge 
\mathrm{Top}_i(G), \quad \text{for any } i\in [|\mathcal{I}|],
\end{align}
where $\mathrm{Top}_i(G)$ and $\mathrm{Top}_i(\tilde{G})$ are the sums of the largest $i$ values of $G$ and $\tilde{G}$ respectively.
% denotes the top-$i$-sum of the sequence (i.e., the sum of the largest $i$ values in the given set). \wq{Directly write $\mathrm{Top}_i(\cdot)$ denotes the sum of the largest $i$ values in the given set?}

The proof of \eqref{eqn:top_i} is as follows.
Let us denote by $g_i$ the rank-$i$ element in $G$, equivalently, the $i$-th largest element in $G$ (break tie arbitrarily).
Recall that from $G$ to $\tilde{G}$, only two elements change: namely, $f_A^{\min}+a$ and $f_A^*$,
which become $f_A^{\min}$ and $f_A^*+a$ in $\tilde{G}$, respectively. We now define an indexing rule for elements in $\tilde{G}$. 
For all elements except these two changed ones, we keep the same index as their rank in $G$, 
so that $\tilde{g}_i$ denotes the element originally ranked $i$ in $G$.
Let $i_1$ and $i_2$ be the ranks of $f_A^{\min}+a$ and $f_A^*$ in $G$, respectively, and without loss of generality, assume $i_1 < i_2$. We use $i_1$ as the index of $f_A^* + a$ and $i_2$ as the index of $f_A^{\min}$. That is, $\tilde{g}_{i_1} = f_A^*+a$ and $\tilde{g}_{i_2} = f_A^{\min}$. By construction, we observe that 
% \[
%     \tilde{g}_{i_1} \geq g_{i_1}, \quad \tilde{g}_{i_2} \leq g_{i_2},
%     \quad \text{and} \quad
%     \tilde{g}_{i_1} - g_{i_1} = g_{i_2} - \tilde{g}_{i_2}.
% \]
% \wq{
\[
    \tilde{g}_{i_1} \geq g_{i_1}, \quad \tilde{g}_{i_2} \leq g_{i_2},
    \quad 
    \tilde{g}_{i_1} - g_{i_1} = g_{i_2} - \tilde{g}_{i_2} \quad \text{and} \quad \tilde{g}_{i} = g_{i}, \text{ for other $i\in [\abs{\mathcal{I}}]$}~.
\]
% }

Therefore, under this indexing, it follows directly that
\begin{equation}
    \label{eqn:top_i_1}
    \sum_{j=1}^{i} \tilde{g}_j \geq \mathrm{Top}_i(G), \quad \text{for any } i\in [|\mathcal{I}|]~.
\end{equation}

Next, we compare $\sum_{j=1}^{i} \tilde{g}_j$ with $\mathrm{Top}_i(\tilde{G})$.
By definition, $\mathrm{Top}_i(\tilde{G})$ is the maximum sum over any $i$ elements in $\tilde{G}$,
so we clearly have
\[
    \mathrm{Top}_i(\tilde{G}) \geq \sum_{j=1}^{i} \tilde{g}_j, \quad \text{for any } i\in [|\mathcal{I}|]~.
\]

Combining this with \eqref{eqn:top_i_1}, we complete the proof of \eqref{eqn:top_i}.

Lastly, by continuing to apply the above update rule at most $|\mathcal{I}|$ times, which is finite for any fixed $t$, we can make $A$ contiguous.
Similarly, when $B$ is not contiguous, then we find a subinterval $I_B^*\notin B$ with parts of $B$ lying on both its left and right. 
Then we find the rightmost subinterval $I_B^{\max}$ of $B$ and update $B$ by excluding $I_B^{\max}$ and including $I_B^*$.
Using a symmetry argument, we can continue applying it and eventually make $B$ contiguous as well, which also does not decrease the value of $F^{t+1}(y)$ for any $y\in I$.
Using the same argument, we can further combine intervals $A$ and $B$, by finding a subinterval $I^*$ not in $A$ and not in $B$ (in fact it is in the middle of $A$ and $B$), and constructing $\tilde{A}$ by substituting $I_A^{\min}$ with $I^*$. This construction can also guarantee $\tilde{F}^{t+1}(y)$ is weakly larger than $F^{t+1}(y)$ for any $y \in I$, by the same proof before. Combining all these cases, we can construct a single contiguous interval $\tilde{A} \cup \tilde{B}$ that satisfies the properties we need, in finite (totally at most $|\mathcal{I}|$) steps. This concludes the lemma.
\end{proof}

Next, we prove that there is no choice of $\phi$ that can turn a valid $F^t$ into an invalid $F^{t+1}$. By the lemma above, it suffices to consider only contiguous choices of $A$ and $B$.

\begin{restatable}{lemma}{lemUbofcontiguousInterval}
\label{lem:ub_of_contiguous_interval}
If $F^t(x)\le k/2 - 2kx^2$ holds, then for any operation $(a, b, A, B)$ with contiguous operation interval $A\cup B$, $F^{t+1}(x)\le k/2 - 2kx^2$ still holds.
\end{restatable}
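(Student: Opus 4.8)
The plan is to reduce the problem to a one-dimensional optimization over the (single) contiguous interval $A \cup B = (p, q]$, using the invariant $F^{t}(x) \le k/2 - 2kx^2$ together with $F^t(-1/2) = 0$ (Proposition~\ref{inv:stacking_f}) and the monotonicity of $f^t$. First I would fix the new target point $y \in I$ at which I want to verify $F^{t+1}(y) \le k/2 - 2ky^2$, and trace how $F^{t}(y)$ changes into $F^{t+1}(y)$. Two effects occur: (i) the additive perturbation on $A$ (up by $a$) and $B$ (down by $b$), which changes the integral from $y$ to $1/2$ by $+a\,|A \cap (y,1/2]| - b\,|B \cap (y,1/2]|$; and (ii) the re-sorting step, which only redistributes mass and can only \emph{increase} $F^{t+1}(y)$ for fixed $y$ relative to the unsorted $G^{t}(y)$ — wait, actually the sorting pushes large values to the right, so $F^{t+1}(y) \ge G^t(y)$ always. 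Hmm, but that goes the wrong way for an upper bound. So the right move is instead: by the sorting, $f^{t+1}$ is the non-decreasing rearrangement of $g^t$, and the maximum of $F^{t+1}(y)$ over the \emph{worst} contiguous $(A,B)$ is attained when $A \cup B$ sits at the far right end $(q = 1/2)$, because placing the ``$+a$ on the left, $-b$ on the right'' block as far right as possible maximizes the accumulated tail integral after sorting. This is exactly the reduction that Lemma~\ref{lem:contiguous_interval} was building toward.

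So the key steps, in order, are: (1) Use Lemma~\ref{lem:contiguous_interval} to assume $A \cup B$ is a single interval $(p, q]$ with $A = (p, p + |A|]$ on the left, $B = (p + |A|, q]$ on the right, and $q - p = 1/k$. (2) Argue by a further shifting argument (analogous to the proof of Lemma~\ref{lem:contiguous_interval}) that we may take $q = 1/2$, i.e. the block is flush against the right endpoint; pushing the block rightward weakly increases $F^{t+1}(y)$ for every $y$. (3) With the block now at $(1/2 - 1/k, 1/2]$, split into cases according to where $y$ falls: $y \le 1/2 - 1/k$, $y \in (1/2 - 1/k, 1/2 - 1/k + |A|]$ (i.e. $y$ in the ``$+a$'' part), and $y \in (1/2 - 1/k + |A|, 1/2]$ (the ``$-b$'' part). (4) In each case, bound $F^{t+1}(y)$ in terms of $F^{t}$ evaluated at suitable points, substitute the induction hypothesis $F^{t}(\cdot) \le k/2 - 2k(\cdot)^2$, and use the constraints $|A| = \tfrac{1}{k}\cdot\tfrac{b}{a+b}$, $|B| = \tfrac{1}{k}\cdot\tfrac{a}{a+b}$ (so that $a|A| = b|B|$, the ``zero net change'' relation), plus $0 < a, b \le 1$, to verify $F^{t+1}(y) \le k/2 - 2ky^2$. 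The parabola $k/2 - 2kx^2$ is chosen precisely so that its ``local slope budget'' matches what a single block of measure $1/k$ can do, so the inequalities should close — this is where the constants $k/2$ and $2k$ and the measure normalization $1/k$ all conspire.

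The quantity $F^{t+1}(y)$ needs to be controlled carefully because the sorting step is nonlocal: after re-sorting, the value of $f^{t+1}$ near a point $y$ depends on the whole profile $g^t$, not just on $g^t$ near $y$. The clean way around this is to not reason pointwise about $f^{t+1}$ at all, but to observe that $F^{t+1}(y) = \int_y^{1/2} f^{t+1} = \sup\{\,\int_E g^t : E \subseteq I,\ |E| = 1/2 - y\,\}$, i.e. the tail integral of the sorted function equals the maximal mass of $g^t$ over a set of the right measure. Then I would bound that supremum: the optimal $E$ takes the top-valued region of $g^t$, which consists of the raised-by-$a$ part of $A$ together with some top portion of the untouched region, and one compares this against the analogous quantity for $f^t$ plus the perturbation, all controlled by the induction hypothesis applied to $F^t$ at a shifted argument.

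The main obstacle I expect is step (4) in the case where $y$ lands inside the block $(1/2 - 1/k, 1/2]$: there the bound must account simultaneously for the added $a$-mass, the subtracted $b$-mass, and the sorting, and the worst case may not be at an endpoint of that subinterval, so one has to optimize a (piecewise) quadratic in $y$ and check that its maximum still respects the parabola. The relations $a|A| = b|B|$ and $a, b \le 1$ are exactly what make this optimization tractable, but verifying the resulting inequality — essentially showing that a chord/secant of the parabola $k/2 - 2kx^2$ stays below the parabola after the prescribed vertical shift over a window of width $1/k$ — is the one genuinely computational heart of the argument; everything else is bookkeeping enabled by Lemma~\ref{lem:contiguous_interval} and the sorted-tail-integral reformulation.
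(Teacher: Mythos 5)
Your step (2) — the claim that you may normalize the contiguous block $A\cup B$ to be flush against the right endpoint because ``pushing the block rightward weakly increases $F^{t+1}(y)$ for every $y$'' — is false, and the rest of your plan rests on it. Concretely, take $k=2$, $a=b=1$ (so $|A|=|B|=1/4$) and $f^t(x)=-1$ on $(-1/2,0]$, $f^t(x)=1$ on $(0,1/2]$, which satisfies $\int f^t=0$ and $F^t(x)\le k/2-2kx^2$. With the left-flush block $A=(-1/2,-1/4]$, $B=(-1/4,0]$, the sorted update is $f^{t+1}=-2$ on $(-1/2,-1/4]$, $0$ on $(-1/4,0]$, $1$ on $(0,1/2]$, so $F^{t+1}(-1/4)=1/2$; with the right-flush block $A=(0,1/4]$, $B=(1/4,1/2]$, one gets $f^{t+1}=-1$ on $(-1/2,0]$, $0$ on $(0,1/4]$, $2$ on $(1/4,1/2]$, so $F^{t+1}(-1/4)=1/4$. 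Thus moving the block rightward strictly \emph{decreases} $F^{t+1}$ at $y=-1/4$, so the worst block position is not right-flush and your case analysis over ``where $y$ falls relative to a right-flush block'' does not cover the actual worst cases. The reason the exchange argument of \Cref{lem:contiguous_interval} does not extend here is asymmetric: moving the $+a$ increments onto larger base values increases all top-$i$ (equivalently tail-integral) sums, but moving the $-b$ decrements onto larger base values can decrease them, and shifting the whole block right does both.

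Your reformulation $F^{t+1}(y)=\sup\{\int_E g^t\,du : |E|=1/2-y\}$ is correct and could in principle anchor an alternative proof, but as structured the proposal neither fixes the block position legitimately nor carries out the quadratic verification you yourself flag as the computational heart. The paper avoids any normalization of the block's location: for each evaluation point $x$ (say, left of $B$) it identifies the upgrading interval $(x-\ell,x]$ and downgrading interval $(x+\delta,x+\delta+\ell]$ induced by the sorting, derives the identity $F^{t+1}(x)=F^t(x-\ell)-\int_{x+\delta}^{x+\delta+\ell}f^t(u)\,du+\Gamma$ with $\Gamma=a\min(\ell+\delta,|A|)-b\max(|B|-\ell,0)$, combines it with a second inequality weighted by $\ell$ and $\delta+\ell$, and reduces everything to the case analysis showing $\Gamma\le 2k\ell(\ell+\delta)$ using $a|A|=b|B|$ and $a+b\le 2$, finishing the right-of-$A$ case by the reflection $\tilde f^t(y)=-f^t(-y)$. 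To repair your approach you would need either this kind of position-free local analysis or a correct extremal characterization of the worst block position for each $y$, which your current shifting claim does not provide.
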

\begin{proof}
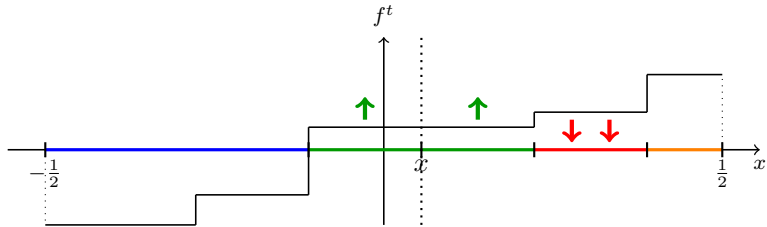
\begin{figure}[h]
        \centering
        \begin{tikzpicture}
\centering
\draw[->, semithick] (-5,0) -- (5,0) node[right, anchor=north] {\scriptsize $x$};
\draw[->, semithick] (0,-1) -- (0,1.5) node[above] {\scriptsize $f^t$};

% \draw (0, 0) node[below] {\scriptsize $0$};
\draw (-4.5, 0) node[below] {\scriptsize $-\frac12$};
\draw (4.5, 0) node[below] {\scriptsize $\frac12$};
\draw (0.5, 0) node[below] {\small $x$};
% \draw (1.5, 0) node[below] {\small $\delta$};

% \draw[-, very thick, blue] (-0.5, 0) -- (1, 0);
% \draw[-, very thick, green!60!black] (-0.5, 0) -- (0, 0);
% \draw[-, very thick, green!60!black, decorate, decoration={zigzag}] (-1, 0) -- (0.5, 0);
\draw[-, very thick, green!60!black] (-1, 0) -- (2, 0);
% \draw[-, very thick, orange] (1, 0) -- (2, 0);
\draw[-, very thick, orange] (3.5, 0) -- (4.5, 0);
% \draw[-, very thick, orange] (4, 0) -- (4.5, 0);
\draw[-, very thick, red] (2, 0) -- (3.5, 0);
% \draw[-, very thick, red] (3, 0) -- (3.5, 0);
\draw[-, very thick, blue] (-4.5, 0) -- (-1, 0);
% \draw[-, very thick, blue, decorate, decoration={zigzag}] (-1.5, 0) -- (-1, 0);

\foreach \x in {0.5, -1, -4.5, 2, 3.5, 4.5} {
    \draw[black, line width=0.8pt] (\x, -0.1) -- (\x, 0.1);
  }

\draw[dotted, black] (-4.5, 0) --(-4.5, -1);
\draw[-, semithick, black] (-4.5, -1) --(-2.5, -1);
\draw[-, semithick, black] (-2.5, -0.6) --(-2.5, -1);
\draw[-, semithick, black] (-2.5, -0.6) --(-1, -0.6);
\draw[-, semithick, black] (-1, -0.6) --(-1, 0.3);
\draw[-, semithick, black] (-1, 0.3) --(2, 0.3);
\draw[-, semithick, black] (2, 0.3) --(2, 0.5);
\draw[-, semithick, black] (2, 0.5) --(3.5, 0.5);
\draw[-, semithick, black] (3.5, 0.5) --(3.5, 1);
\draw[-, semithick, black] (3.5, 1) --(4.5, 1);
\draw[dotted, black] (4.5, 0) --(4.5, 1);

\draw[->, ultra thick,  draw=green!60!black] (-0.25,0.4) -- (-0.25,0.7);
\draw[->, ultra thick,  draw=green!60!black] (1.25,0.4) -- (1.25,0.7);

\draw[->, ultra thick,  draw=red] (2.5,0.4) -- (2.5,0.1);
\draw[->, ultra thick,  draw=red] (3,0.4) -- (3,0.1);
\draw[dotted, -, thick] (0.5, -1) -- (0.5,1.5);
% \draw[dotted, -, thick] (2, -1) -- (2,0);
% \draw[dotted, -, thick] (-1, -1) -- (-1,0);
% \draw[dotted, -, thick] (-1, -1) -- (-1,0);

% \draw (-1, 0) node[below] {\scriptsize \color{blue} $C_R$};
% \draw (2.75, 0) node[below] {\scriptsize \color{red} $B_L$};
% \draw (0.2, 0) node[below] {\scriptsize \color{green!60!black} $A_R$};
% \draw (3.7, 0) node[below] {\scriptsize \color{orange} $D_L$};

% \draw[decorate, gray, thick, decoration={brace, amplitude=5pt}, yshift=-10pt]
%     (0.5,0) -- (-1.5,0) node[below, midway, yshift=-8pt] {\small $\ell$};
% \draw[decorate, gray, thick, decoration={brace, amplitude=5pt}, yshift=-10pt]
%     (2,0) -- (0.5,0) node[below, midway, yshift=-8pt] {\small $\delta$};
%    \draw[decorate, gray, thick, decoration={brace, amplitude=5pt}, yshift=-10pt]
%     (4,0) -- (2,0) node[below, midway, yshift=-8pt] {\small $\ell$}; 
\end{tikzpicture}    
        \caption{An example of $(a, b, A, B)$ and $f^t$, where $x$ is to the left of $B$.}
        \label{fig:contiguous-operation}
\end{figure}
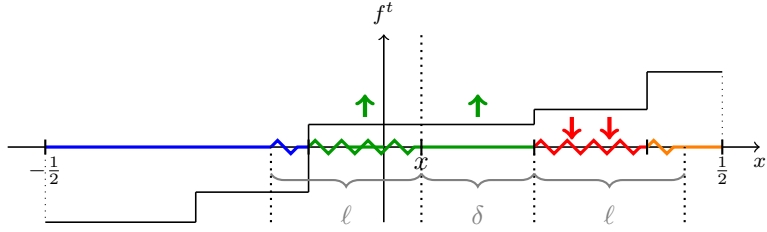
\begin{figure}[h]
        \centering
        \begin{tikzpicture}
\centering
\draw[->, semithick] (-5,0) -- (5,0) node[right, anchor=north] {\scriptsize $x$};
\draw[->, semithick] (0,-1) -- (0,1.5) node[above] {\scriptsize $f^t$};

% \draw (0, 0) node[below] {\scriptsize $0$};
\draw (-4.5, 0) node[below] {\scriptsize $-\frac12$};
\draw (4.5, 0) node[below] {\scriptsize $\frac12$};
\draw (0.5, 0) node[below] {\small $x$};
% \draw (1.5, 0) node[below] {\small $\delta$};

% \draw[-, very thick, blue] (-0.5, 0) -- (1, 0);
\draw[-, very thick, green!60!black] (-0.5, 0) -- (0, 0);
\draw[-, very thick, green!60!black, decorate, decoration={zigzag}] (-1, 0) -- (0.5, 0);
\draw[-, very thick, green!60!black] (0.5, 0) -- (2, 0);
% \draw[-, very thick, orange] (1, 0) -- (2, 0);
\draw[-, very thick, orange, decorate, decoration={zigzag}] (3.5, 0) -- (4, 0);
\draw[-, very thick, orange] (4, 0) -- (4.5, 0);
\draw[-, very thick, red, decorate, decoration={zigzag}] (2, 0) -- (3.5, 0);
% \draw[-, very thick, red] (3, 0) -- (3.5, 0);
\draw[-, very thick, blue] (-4.5, 0) -- (-1.5, 0);
\draw[-, very thick, blue, decorate, decoration={zigzag}] (-1.5, 0) -- (-1, 0);

\foreach \x in {0.5, -1, -4.5, 2, 3.5, 4.5} {
    \draw[black, line width=0.8pt] (\x, -0.1) -- (\x, 0.1);
  }

\draw[dotted, black] (-4.5, 0) --(-4.5, -1);
\draw[-, semithick, black] (-4.5, -1) --(-2.5, -1);
\draw[-, semithick, black] (-2.5, -0.6) --(-2.5, -1);
\draw[-, semithick, black] (-2.5, -0.6) --(-1, -0.6);
\draw[-, semithick, black] (-1, -0.6) --(-1, 0.3);
\draw[-, semithick, black] (-1, 0.3) --(2, 0.3);
\draw[-, semithick, black] (2, 0.3) --(2, 0.5);
\draw[-, semithick, black] (2, 0.5) --(3.5, 0.5);
\draw[-, semithick, black] (3.5, 0.5) --(3.5, 1);
\draw[-, semithick, black] (3.5, 1) --(4.5, 1);
\draw[dotted, black] (4.5, 0) --(4.5, 1);

\draw[->, ultra thick,  draw=green!60!black] (-0.25,0.4) -- (-0.25,0.7);
\draw[->, ultra thick,  draw=green!60!black] (1.25,0.4) -- (1.25,0.7);

\draw[->, ultra thick,  draw=red] (2.5,0.4) -- (2.5,0.1);
\draw[->, ultra thick,  draw=red] (3,0.4) -- (3,0.1);
\draw[dotted, -, thick] (0.5, -1) -- (0.5,1.5);
\draw[dotted, -, thick] (2, -1) -- (2,0);
\draw[dotted, -, thick] (-1.5, -1) -- (-1.5,0);
\draw[dotted, -, thick] (4, -1) -- (4,0);

% \draw (-1, 0) node[below] {\scriptsize \color{blue} $C_R$};
% \draw (2.75, 0) node[below] {\scriptsize \color{red} $B_L$};
% \draw (0.2, 0) node[below] {\scriptsize \color{green!60!black} $A_R$};
% \draw (3.7, 0) node[below] {\scriptsize \color{orange} $D_L$};

\draw[decorate, gray, thick, decoration={brace, amplitude=5pt}, yshift=-10pt]
    (0.5,0) -- (-1.5,0) node[below, midway, yshift=-8pt] {\small $\ell$};
\draw[decorate, gray, thick, decoration={brace, amplitude=5pt}, yshift=-10pt]
    (2,0) -- (0.5,0) node[below, midway, yshift=-8pt] {\small $\delta$};
   \draw[decorate, gray, thick, decoration={brace, amplitude=5pt}, yshift=-10pt]
    (4,0) -- (2,0) node[below, midway, yshift=-8pt] {\small $\ell$}; 
\end{tikzpicture}    
        \caption{An example for the upgrading interval and the downgrading interval.}
        \label{fig:ar_dl_bl}
\end{figure}
% \begin{figure*}[h]
% \centering
%     \begin{subfigure}[t]{1\textwidth}
%         \centering
%         \input{figure/induction-pic1}    
%         \caption{An example of $(a, b, A, B)$ and $f^t$, where $x$ is to the left of $B$.}
%         \label{fig:contiguous-operation}
%     \end{subfigure}
%     \begin{subfigure}[t]{1\textwidth}
%         \centering
%         \input{figure/induction-pic2}    
%         \caption{An example for the upgrading interval and the downgrading interval.}
%         \label{fig:ar_dl_bl}
%     \end{subfigure}
% \end{figure*}
Given an operation $(a, b, A, B)$ that satisfies the precondition, we first prove the lemma for any point $x$ to the left of $B$
Here, we say that a point $x$ is to the left of $B$ if $x \leq y$ holds for every $y \in B$, as illustrated in Fig.~\ref{fig:contiguous-operation}. The remaining case, where $x$ is to the right of $A$, will be addressed at the end of the proof by a symmetric argument.

% Since $A$ and $B$ are contiguous, then $x$ is to the left of the rightmost point of $A$. 
% As shown in the figure, since $A \cup B$ forms a contiguous interval, the remaining part of $I$ can be divided into two segments: the left side of $A$, denoted by $C$, and the right side of $B$, denoted by $D$. 

We define a point $y$ to be an \emph{upgrading point} if $y \le x$ and $\pi(y) > x$, and a \emph{downgrading point} if $y > x$ and $\pi(y) \le x$. The key structural property of these two sets of points is given below.
\begin{proposition}
The set of upgrading points and the set of downgrading points each form a single contiguous interval of the same length, denoted by $\ell$. Moreover, the set of upgrading points, called the \emph{upgrading interval}, must end at $x$ and is given by $(x-\ell, x]$. The set of downgrading points, called the \emph{downgrading interval}, must start at the left endpoint of $B$ and is given by $(x+\delta, x+\delta+\ell]$, where $x+\delta$ represents the left endpoint of $B$. (See the two zigzag lines in \Cref{fig:ar_dl_bl} as illustrations of these two intervals.)
\end{proposition}

\begin{proof}
First, consider the set of upgrading points. Since $x$ lies to the left of $B$ and $f^t$ is originally non-decreasing, after updating values (before the permutation to $f^{t+1}$), the intermediate function $g^t$ remains non-decreasing to the left of $x$. Therefore, it is impossible to have two points $y_1 < y_2$ such that $\pi(y_1) > \pi(y_2)$ under our mapping rule. This proves that the set of upgrading points must be a single contiguous interval of the form $(x-\ell, x]$.

Next, consider the set of downgrading points. We first claim that it must start at $x+\delta$, since no point $y \le x+\delta$ can be a downgrading point. For any $y_2 \le x+\delta$, there is no point $y_1 < y_2$ that increases more than $y_2$ in this round. Formally, we have $g^t(y_1) - f^t(y_1) \le g^t(y_2) - f^t(y_2)$. Since $f^t$ is originally non-decreasing, and by our mapping rule, $y_2$ cannot be a downgrading point. 

For points $y > x+\delta$, $g^t$ remains non-decreasing in this range. Hence, it is also impossible to have two points $y_1 < y_2$ such that $\pi(y_1) > \pi(y_2)$ under the mapping rule. This shows that the set of downgrading points must be a single contiguous interval of the form $(x+\delta, x+\delta+\ell]$.
\end{proof}
Recall that we assume $x$ lies to the left of $B$. The value $F^{t+1}(x)$ is defined as the integral of $g^t$ over the set
\[
I' = (x-\ell, 1/2] \setminus (x+\delta, x+\delta+\ell].
\]
Thus, we have
\begin{align}
    F^{t+1}(x) 
    &= \int_{u \in I'} g^t(u) \, du 
    = \int_{u \in I'} f^t(u) \, du + \Gamma \nonumber \\
    &= F^t(x-\ell) - \int_{x+\delta}^{x+\delta+\ell} f^t(u) \, du + \Gamma, \label{eqn:f_t_1}
\end{align}
where $\Gamma$ captures the total discrepancy between $f^t$ and $g^t$ over $I'$, i.e.,
\[
\Gamma = \int_{u \in I'} g^t(u) \, du 
     - \int_{u \in I'} f^t(u) \, du.
\]
Since $f^t$ is non-decreasing, we have 
\[
\int_{x -\ell}^x f^t(u)\, du \;\le\; \int_{x+\delta}^{x+\delta+\ell} f^t(u)\, du.
\]
It follows that
\begin{align}
F^{t+1}(x)  
&=  F^t(x-\ell) - \int_{x+\delta}^{x+\delta+\ell} f^t(u) \, du + \Gamma \nonumber \\
&\le F^t(x) + \Gamma \nonumber  \\
&= \int_{x}^{x+\delta+\ell} f^t(u)\, du + \int_{x+\delta+\ell}^{1/2} f^t(u)\, du + \Gamma \nonumber \\
&\le \frac{\delta+\ell}{\ell} \cdot \int_{x+\delta}^{x+\delta+\ell} f^t(u)\, du + F^t(x+\ell+\delta) + \Gamma. \label{ineq:ub_of_F_t_1}
\end{align}

By multiplying \eqref{eqn:f_t_1} by $\delta+\ell$ and \eqref{ineq:ub_of_F_t_1} by $\ell$ and then summing, we obtain
\begin{align*}
(\delta + 2\ell)\, F^{t+1}(x) 
&\le (\delta + \ell)\, F^t(x-\ell) 
   + \ell\, F^t(x+\ell+\delta)
   + (\delta+2\ell)\, \Gamma.
\end{align*}
Applying the induction hypothesis on $F^t$, we have
\begin{align*}
(\delta + 2\ell)\, F^{t+1}(x) 
&\le (\delta + \ell)\!\left(\frac{k}{2} - 2k (x-\ell)^2\right) 
   + \ell\!\left(\frac{k}{2} - 2k (x+\ell+\delta)^2\right) 
   + (\delta+2\ell)\, \Gamma.
\end{align*}
A straightforward calculation then yields
\begin{align*}
(\delta+2\ell)\, F^{t+1}(x) 
&\le (\delta + 2\ell)\!\left(\frac{k}{2} - 2k x^2 - 2k \ell \cdot (\ell+\delta)\right)
   + (\delta+2\ell)\, \Gamma \\
&= (\delta + 2\ell)\!\left(\frac{k}{2} - 2k x^2 - 2k \ell \cdot (\ell+\delta) + \Gamma\right).
\end{align*}

It therefore suffices to show that $\Gamma \le 2k\ell\cdot(\ell+\delta)$. To this end, we analyze $\Gamma$ by examining how $I'$ intersects with $A$ and $B$. We establish the following relation:
\[
\Gamma 
= a \cdot \min(\ell + \delta, |A|) 
- b \cdot \max(|B| - \ell, 0) ~.
\]

The proof proceeds as follows. We first analyze the length of $A \cap I'$. Since the possible location of $A \cap I'$ lies within the interval $(x-\ell, x+\delta]$, and noting that $x+\delta$ is also the right endpoint of $A$, it follows directly that
$
|A \cap I'| = \min\{\ell + \delta, |A|\}.
$
The points in this interval contribute $a$ to $\Gamma$.
Next, we consider the length of $B \cap I'$, which lies within the interval $(x+\delta, x+\delta+\ell]$. Thus, its length is given by
$
|B \cap I'| = \min(|B|, \ell).
$
However, the points in $B \setminus I'$ effectively contribute $-b$ to $\Gamma$, yielding a total contribution of
$
- b \cdot \max(|B| - \ell, 0).
$

Finally, we discuss the four possible cases arising from the different outcomes of the $\min$ and $\max$ expressions.

\begin{itemize}
    \item $\ell + \delta \leq \abs{A}$ and $\abs{B} - \ell \geq 0$:
    In this case, we have
    \begin{align*}
        2k \ell \cdot (\ell+\delta)-\Gamma
        &= 2k\ell \cdot (\ell + \delta) - a\cdot (\ell + \delta) + b\cdot(\abs{B}-\ell)\\
        &= (2k\ell-a)\cdot (\ell + \delta)+b\cdot(\abs{B}-\ell)~. 
    \end{align*}
    We already have $2k \ell \cdot (\ell+\delta)-\Gamma \geq 0$, if
    $2k\ell-a\geq 0$. On the other hand, if $2k\ell-a<0$,  
    Define $r=\abs{A}-(\ell+\delta)\geq 0$,
    we have
    \begin{align*}
    2k\ell \cdot (\delta+\ell)-\Gamma
    &=  2k\ell \cdot(\abs{A}-r)-a\cdot (\ell+\delta)+b\cdot(\abs{B}-\ell)\\
    &=  2k\ell \cdot(\abs{A}-r)-a\cdot (\ell+\delta)+a\cdot \abs{A}-b\ell \tag{$a\cdot\abs{A}=b\cdot\abs{B}$}\\
    &\geq  2k\ell \cdot (\frac{b}{k(a+b)}-r)-a\cdot (\abs{A}-r)+a\cdot\abs{A}-b\ell\tag{$\abs{A}=\frac{b}{k(a+b)}\geq \frac{b}{2k}$}\\
    &\geq  2k\ell \cdot (\frac{b}{2k}-r)-a\cdot (\abs{A}-r)+a\cdot\abs{A}-b\ell\tag{$a+b \leq 2$}\\
    &= (a-2k\ell)\cdot r\geq 0~. \tag{$2k\ell-a<0$, $r\geq 0$}
\end{align*} 
This concludes the case.
    \item $\ell+\delta\leq \abs{A}$ and $\abs{B} - \ell < 0$: In this case, since $\ell>\abs{B}=\frac{a}{k(a+b)}\geq \frac{a}{2k}$ by $a + b \leq 2~$, 
    $$
        \Gamma = a\cdot (\ell+\delta) \leq 2k\ell \cdot (\ell+\delta)~.
    $$
    
    \item $\ell+\delta> \abs{A}$ and $\abs{B} - \ell \geq 0$: In this case, since $\ell+\delta>\abs{A}=\frac{b}{k(a+b)}\geq\frac{b}{2k}$ by $a+b \leq 2~$, 
    $$
         \Gamma = a\abs{A}-b\abs{B}+b\ell = b\ell \leq 2k\ell\cdot(\ell+\delta)~.
    $$
    \item $\ell + \delta > |A|$ and $|B| - \ell < 0$: 
    We show that this case is impossible. Under these conditions, we can find an upgrading point $x_1 \in (x-\ell,x]$ such that $x_1 \notin A$, and a downgrading point $x_2 \in (x+\delta,x+\delta+\ell]$ such that $x_2 \notin B$. Therefore, we have $g^t(x_1) \le g^t(x_2)$ and $x_2 > x_1$. By the mapping rule, it must hold that $\pi(x_1) \le \pi(x_2)$, contradicting the requirement that $x_1$ is mapped to a value strictly larger than $x_2$. Hence, this case cannot occur.

\end{itemize}
Finally, it remains to show that $F^{t+1}(x) \le \frac{k}{2} - 2k x^2$ holds for any point $x$ to the right of $A$. We demonstrate that this case can be reduced to the one we have already completed.

Construct a function $\tilde{f}^t(y) = -f^t(-y)$, which is still non-decreasing over $I$. Define its integral by
$
\tilde{F}^t(y) = \int_{y}^{\frac12} \tilde{f}^t(u) \, du.
$
Then we have
\begin{align*}
\tilde{F}^t(y) 
&= \int_{y}^{\frac12} \tilde{f}^t(u) \, du 
= \int_{y}^{\frac12} (-f^t(-u)) \, du 
= -\int_{-\frac12}^{-y} f^t(v) \, dv \\
&= F^t(-y) - F^t\!\!\left(-\frac12\right)
= F^t(-y) ~, 
\tag{by \Cref{inv:stacking_f}}
\end{align*}
where the last equality follows from our integral normalization.

Correspondingly, we apply the operation $(a', b', A', B')$ on $g^t(\cdot)$ by setting $A' = -B$, $B' = -A$, $a' = b$, and $b' = a$, which preserves the property that
$
\tilde{F}^{t+1}(y) = F^{t+1}(-y).
$ 
Since $x \ge y$ for any $y \in A$, we have $-x \le -y$ for all $-y \in B'$. Thus, $-x$ is to the left of $B'$. Applying the initial analysis of our proof, we obtain
\[
\tilde{F}^{t+1}(-x) 
\le \frac{k}{2} - 2k(-x)^2 
= \frac{k}{2} - 2k x^2~.
\]
Since $\tilde{F}^{t+1}(-x) = F^{t+1}(x)$, it follows that
$
F^{t+1}(x) \le \frac{k}{2} - 2k x^2,
$
which completes the proof.
Finally, we note that under this symmetric construction, every left-open right-closed subinterval may become right-open left-closed. However, this does not affect any step of the argument.
\end{proof}
\begin{comment}
\wq{Notation needed:
}
\begin{itemize}
    \item Define $F(x,y) = \int_x^{y} f(u) \, du$. Note: The integral here is not directed. That is $\int_x^yf(u)du=\int_y^xf(u)du$
    \item $F(x)=F(x,1/2)=\int_x^{1/2} f(u) \, du$. 
    \item For a union of intervals $S=\bigcup_{i=1}^l[x_i,y_i]$, we define $F(S):=\sum_{i=1}^l F(x_i,y_i )$.
\end{itemize}

\wq{The intermediate lemma:}
\begin{lemma}
    If $F(x)\leq k/2-2kx^2$, $\forall x\in [-\frac{1}{2},\frac{1}{2}]$, then $f(x)\leq 2k$, $\forall x\in [-\frac{1}{2},\frac{1}{2}]$ always holds.
\end{lemma}

\begin{proof}
Because $f(x)$ is piecewise linear and non-decreasing, $$\max_{x\in I}f(x)=f(1/2)=\lim_{\delta\rightarrow 0}\frac{F(1/2-\delta)}{\delta}\leq 2k$$
\end{proof}
\end{comment} 
\section{Negative Results}
\label{sec:negative}
%We complement our positive result for $k$-valued disutility functions with the following negative result for general additive disutility functions.
In this section, we study general additive disutility functions, with unbounded $k$ and $D$. 
As we have mentioned in the introduction, we can easily achieve a competitive ratio of $n$ of MMS by any allocation (notice that $\frac1nd_i(M)$ is a lower bound to $\MMS_i$ and the disutility for receiving all items is $d_i(M)$).
The following theorem shows that this is the best we can do.

\begin{theorem}\label{thm:lowerbound}
    For any $n$ and $\varepsilon>0$, no online algorithm can guarantee outputting $(n-\varepsilon)$-MMS allocations in the $n$-agent online MMS allocation problem for indivisible chores. 
\end{theorem}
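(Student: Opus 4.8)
The plan is to prove, by induction on $n'$, a statement tailored for recursion. Call an online algorithm \emph{$n'$-restricted} if it only ever allocates items to agents in $\{1,\dots,n'\}$. I claim: for every $n'\in\{1,\dots,n\}$ and every $\delta>0$ there is a finite bound $T(n')=T(n',\delta)$ and an adaptive adversary such that, against \emph{any} $n'$-restricted algorithm, within the first $T(n')$ rounds some agent $i\in[n']$ has $d_i(A_i)\ge (n-\delta)\,\MMS_i$, where $\MMS_i$ is always taken with respect to $n$-partitions of the items released so far. Since there are only $n$ agents, an arbitrary algorithm is $n$-restricted, so taking $n'=n$ and $\delta=\varepsilon$ shows the instance of the first $T(n)$ items fails to be $(n-\varepsilon)$-MMS; this is exactly Theorem~\ref{thm:lowerbound}. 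The base case $n'=1$ is immediate: release $n$ identical items with $d_1(\cdot)=1$, so $\MMS_1=1$, the algorithm must give all $n$ items to agent~$1$, and $d_1(A_1)=n\ge(n-\delta)\MMS_1$; hence $T(1)=n$.

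For the inductive step $n'\to n'+1$, two tools drive the construction. \textbf{(Clean-up.)} At any point, if every future item is given a disutility for a set $S$ of agents that is huge compared with the total disutility accrued so far, then the past becomes negligible for the agents in $S$; in particular, for those agents the current $\MMS$-threshold is, up to a $(1+o(1))$ factor, determined only by the future items. \textbf{(Forcing.)} The instance runs in \emph{phases}. In phase $j$ we execute the level-$n'$ gadget on agents $\{1,\dots,n'\}$, but with all of its prescribed disutilities multiplied by a fresh factor $W_j$ chosen large enough that even the smallest item of the scaled gadget dwarfs everything released before phase~$j$ (possible since the gadget's item-value set is fixed and finite once $n',\delta$ are fixed); simultaneously we assign each such item a separately chosen disutility for agent~$n'+1$, specified below. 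If agent~$n'+1$ ever receives an item, we abort the current gadget, clean up agents $\{1,\dots,n'\}$, and start a new phase; if instead a full gadget run completes with agent~$n'+1$ receiving nothing, then during that phase the algorithm behaved as an $n'$-restricted algorithm on the (scaled) gadget, so by the induction hypothesis some $i\le n'$ has $d_i(A_i)\ge(n-\delta)\cdot(\text{in-phase }\MMS_i)$, and by clean-up this is $\ge(n-2\delta)\,\MMS_i$ for the full instance, so the adversary stops. Consequently \emph{agent~$n'+1$ must receive at least one item in every window of $T(n')$ consecutive rounds}, no matter what the algorithm does.

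It remains to choose the disutilities of the items for agent~$n'+1$ so that her own share is driven to $\approx n$ times her $\MMS$. Before her first pick, clean up agent~$n'+1$ as well (each new item much larger for her than her accrued total); by forcing she picks a first item, of some chosen value $X$, within $T(n')$ rounds, and that is the unique ``large'' item for her. After her first pick, fix a rapidly growing sequence $a_1\ll a_2\ll\cdots$ with $a_t\ge C\sum_{t'<t}a_{t'}$ for a large parameter $C$ and with $a_{T(n')+1}=X$ (so $a_1,\dots,a_{T(n')}\le X/C$); whenever agent~$n'+1$ has gone exactly $t$ consecutive rounds without a pick, the next item has disutility $a_{t+1}$ for her. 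By forcing, $t$ never exceeds $T(n')$, so every post-first item has disutility in $\{a_1,\dots,a_{T(n')}\}$, and, summing over the gaps between her consecutive picks (in a gap of length $g\le T(n')$ the other agents receive items of her-disutilities $a_1,\dots,a_{g-1}$ while she then picks an $a_g$), the disutility for agent~$n'+1$ of all items given to the other agents is at most $\tfrac{1}{C-1}\bigl(d_{n'+1}(A_{n'+1})-X\bigr)+o(X)=o\!\left(d_{n'+1}(A_{n'+1})\right)$ once $C$ is large. Run phases until $d_{n'+1}(A_{n'+1})\ge(n-\tfrac{\delta}{2})X$, which happens in finitely many rounds since she adds at least $a_1$ to her share every $T(n')$ rounds; this bounds $T(n'+1)$. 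At that time the total disutility of all released items for agent~$n'+1$ equals $(1+o(1))\,d_{n'+1}(A_{n'+1})=(n\pm o(1))X$, and the most balanced $n$-partition isolates the one item of disutility $X$ and spreads the remaining tiny items over the other $n-1$ bundles, giving $\MMS_{n'+1}=(1+o(1))X$. Hence $d_{n'+1}(A_{n'+1})\ge(n-\delta)\,\MMS_{n'+1}$ after tuning the constants, closing the induction.

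The main obstacle is the quantifier bookkeeping: one must juggle two nested negligibility requirements — the between-phase clean-ups for agents $1,\dots,n'$ and the clean-up plus rapidly growing $a_t$-sequence for agent~$n'+1$ — while simultaneously guaranteeing that (i) an uninterrupted gadget run is never spoiled by agent~$n'+1$'s sporadic picks, (ii) the full-instance $\MMS_i$ of every agent at the stopping round stays within a $(1+o(1))$ factor of its intended in-phase value, and (iii) the process halts after a number of rounds depending only on $n$ and $\delta$. Choosing each $W_j$ and each $a_t$ large enough as a function of all earlier quantities, and bounding the phase count, is where the real work lies; the individual estimates (the $\MMS$ of a set with one dominant item, the disutility accounting over gaps) are routine.
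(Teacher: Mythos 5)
Your proposal is correct and follows essentially the same route as the paper: the same induction on $n'$-restricted algorithms with a finite bound $T(n')$, the same clean-up/rescaling to force agent $n'+1$ to take an item in every $T(n')$ consecutive rounds, and the same rapidly growing sequence $a_1\ll\cdots\ll a_{T(n')+1}=X$ making all non-picked items negligible, ending with $d_{n'+1}(A_{n'+1})\approx nX$ while $\MMS_{n'+1}\approx X$. The paper fills in the quantifier bookkeeping you flag (explicit choices $\epsilon_{1\sim n'}=\epsilon/n$, $\epsilon_{n'+1}=\epsilon/(n(n+3))$, and a greedy bin-packing argument for the $\MMS_{n'+1}\le(1+(n+3)\epsilon_{n'+1})X$ bound), but no new idea is needed beyond what you describe.
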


We will prove Theorem~\ref{thm:lowerbound} in the remaining part of this section.
We will first prove the theorem for $n=2$ in \Cref{sect:negative-2} for a warm-up.
The proof contains many of the key intuitions which is used for proving Theorem~\ref{thm:lowerbound}.
The full proof of Theorem~\ref{thm:lowerbound} is then presented in \Cref{sect:negative-general}.

\subsection{Warm-Up: the Two-Agent Special Case}
\label{sect:negative-2}
In this section, we prove Theorem~\ref{thm:lowerbound} for $n=2$ by providing a tight example of $(2-\varepsilon)$.
After this, we will recap the key insights behind the proof, which will be useful for proving the general version of Theorem~\ref{thm:lowerbound}.

It is natural to think about the problem where there is an ``adversary'' who, upon seeing the algorithm's output allocation for the first $j$ items, adaptively gives the disutilities of the next item for both agents.
Consider an arbitrary fixed online algorithm.
We will describe the tight example by defining $d_1(j+1)$ and $d_2(j+1)$ given the allocation of $[j]$, and we will show that we will eventually have $d_i(A_i)>(2-\varepsilon)\MMS_i$ for some $i\in\{1,2\}$ at some iteration $j^\ast$.\footnote{Whenever we write agent $i$'s allocated bundle $A_i$ and agent $i$'s MMS threshold $\MMS_i$, it is with respect to the set of items that have been allocated, i.e., it is with respect to the item set $[j]$ if we are analyzing the first $j$ iterations of the algorithm. Thus, it is more formal to define notations such as $A_i[j]$ and $\MMS_i[j]$ to be explicit about the item set $[j]$. However, we choose not to do so for notational simplicity, and we find that the value $j$ we are concerned with is always clear from the context in our discussions in this section.}
% Notice that this is sufficient to conclude the theorem, as we can set the disutility for each remaining item after $j^\ast$ to $0$, or just terminate the instance, and $d_i(A_i)>(2-\varepsilon)\MMS_i$ remains true.

Before we present the formal proof, we first describe the high-level ideas behind the construction of $d_1$ and $d_2$.
We will define $d_1$ such that agent $1$ is forbidden to take items in two consecutive iterations.
Specifically, we mean that $d_1(A_1)$ will be close to $2\cdot\MMS_1$ if agent $1$ takes items in two consecutive iterations.
Such a $d_i$ can be constructed as follows: if item $j$ is not taken by agent $1$, then the disutility of the next item $j+1$ for agent $1$ is set to a very high value such that $d_1([j])$ is negligible compared with $d_1(j+1)$; otherwise, the next item $j+1$ has the same disutility as that of item $j$.
Therefore, at this stage, we know that agent $2$ has to take at least one item in every two consecutive iterations, regardless of $d_2$.
We next construct $d_2$ such that agent $2$ will eventually reach a stage where $d_2(A_2)\approx2\cdot\MMS_2$.
This can be done as follows.
By keeping letting the next item $j+1$ have a disutility significantly larger than the disutility of the item set $[j]$, we can construct $d_2$ such that the first item agent $2$ takes has a very large disutility, such that the disutility for all previous items is negligible.
After agent $2$ takes her first item, we will construct $d_2$ such that 1) $d_2(j+1)$ is the same as the disutility of the first item taken by agent $2$ if item $j$ was taken by agent $1$, and 2) $d_2(j+1)$ is small if item $j$ was not taken by agent $1$. 
Given that agent $1$ cannot take items in two consecutive iterations, item $(j+1)$ must be taken by agent $2$ if agent $1$ takes $j$.
In this case, agent $2$ has taken two ``large items'': the first item she receives and the item $(j+1)$, and all the other items have small disutilities.
This implies $d_2(A_2)\approx2\cdot\MMS_2$.
On the other hand, if agent $1$ never takes any more items after agent $2$ takes her first item, agent $2$ will keep receiving items with small disutilities, and eventually agent $2$'s disutility will reach approximately $2\cdot\MMS_2$.

Now we formally prove Theorem~\ref{thm:lowerbound} for $n=2$.
In the following, we let $\epsilon_1=\frac\varepsilon2$ and $\epsilon_2$ be a small rational number such that $\epsilon_2\leq \frac\varepsilon3$ and $1/\epsilon_2$ is an integer.
%In the following, we let $\epsilon_1=\frac\varepsilon2$ and $\epsilon_2=\frac{\varepsilon^2}4$.
For a positive integer $w$, let $j_i^{(w)}$ be the iteration where it is the $w$-th time that agent $i$ has taken an item.
By our definition of item index, $j_i^{(w)}$ is also the index of this item.
For example, $j_2^{(1)}$ is the first iteration where agent $2$ takes an item.
Notice that, at this stage, it is possible that $j_i^{(w)}$ may not exist, as the algorithm may decide to let the other agent take items forever in the future after agent $i$ has taken her $(w-1)$-th item.

For the first item $1$, we define $d_1(1)=d_2(1)=1$.
Agent $1$'s disutility $d_1(j+1)$ for $j=1,2,\ldots$ is defined as
$$d_1(j+1)=\left\{\begin{array}{ll}
    d_1(j) & \mbox{if agent 1 has taken item }j \\
    d_1(j)/\epsilon_1 & \mbox{otherwise}
\end{array}\right..$$
Agent $2$'s disutility is defined as
$$d_2(j+1)=\left\{\begin{array}{ll}
    d_2([j])/\epsilon_2 & \mbox{if agent 2 has not taken any item yet} \\
    \epsilon_2\cdot d_2(j_2^{(1)}) & \mbox{if agent 2 has taken item }j\\
    d_2(j_2^{(1)}) & \mbox{if agent 2 has not taken item }j\mbox{, and }j\geq j_2^{(1)}
\end{array}\right..$$

Next, we will show that we will have either $d_1(A_1)> (2-\varepsilon)\cdot\MMS_1$ or $d_2(A_2)>(2-\varepsilon)\cdot\MMS_2$ after finitely many iterations.
We first observe that we will have $d_1(A_1)>(2-\varepsilon)\cdot\MMS_1$ if agent $1$ has taken the items in the two consecutive iterations $j-1$ and $j$.
If this is the case, by our definition, we have $d_1(j)=d_1(j-1)$ and $d_1([j-2])=\epsilon_1\cdot d_1(j)$.
At the end of iteration $j$, we have $\MMS_1\leq (1+\epsilon_1)\cdot d_1(j)$ (by noticing that $([j-1],\{j\})$ is a bi-partition where the bundle $[j-1]$ has the maximum disutility $(1+\epsilon_1)\cdot d_1(j)$).
On the other hand, we have
$$d_1(A_1)\geq d_1(\{j-1,j\})=2d_1(j)>(2-\varepsilon)\cdot (1+\epsilon_1)d_1(j)\geq (2-\varepsilon)\cdot\MMS_1,$$
which proves our claim.
We will assume agent $1$ never takes items in two consecutive iterations from now on.
Notice that this would imply the existence of $j_2^{(w)}$ for every $w$, and in fact we must have $j_2^{(w)}\leq 2w$.

Lastly, we will conclude the proof of Theorem~\ref{thm:lowerbound} for $n=2$ by showing that $d_2(A_2)>(2-\varepsilon)\cdot\MMS_2$ after finitely many iterations.
We discuss two cases regarding whether agent $1$ has taken an item after iteration $j_2^{(1)}$.

If agent $1$ has not taken an item after iteration $j_2^{(1)}$, we prove that $d_2(A_2)>(2-\varepsilon)\cdot\MMS_2$ at iteration $j_2^{(1)}+1/\epsilon_2$.
The discussions below in this paragraph are with respect to the item set $[j_2^{(1)}+1/\epsilon_2]$ corresponding to the first $j_2^{(1)}+1/\epsilon_2$ iterations.
On the one hand, we must have $\MMS_2\leq (1+\epsilon_2)\cdot d_2(j_2^{(1)})$.
To see this, consider the bi-partition $([j_2^{(1)}],[j_2^{(1)}+1/\epsilon_2]\setminus [j_2^{(1)}])$.
By our definition of $d_2(\cdot)$, we have $d_2([j_2^{(1)}])=(1+\epsilon_2)\cdot d_2(j_2^{(1)})$ and $d_2([j_2^{(1)}+1/\epsilon_2]\setminus [j_2^{(1)}])=(1/\epsilon_2)\cdot \epsilon_2d_2(j_2^{(1)})=d_2(j_2^{(1)})$.
Thus, $\MMS_2\leq (1+\epsilon_2)d_2(j_2^{(1)})$.
On the other hand, we have
$$d_2(A_2)\geq d_2\left(\{j_2^{(1)}\}\cup \left([j_2^{(1)}+1/\epsilon_2]\setminus [j_2^{(1)}]\right) \right)=2d_2(j_2^{(1)})$$
$$\qquad>(2-\varepsilon)(1+\epsilon_2)d_2(j_2^{(1)})\geq(2-\varepsilon)\cdot\MMS_2, $$
concluding the proof for this case.

If agent $1$ has taken an item after iteration $j_2^{(1)}$, let $j^\ast$ be this iteration (notice that $j^\ast$ can only be either $j_1^{(1)}$ or $j_1^{(2)}$ given that agent $1$ cannot take items in two consecutive iterations, and the value of $j^\ast$ depends on if agent $1$ has taken an item before $j_2^{(1)}$).
We prove that $d_2(A_2)>(2-\varepsilon)\cdot\MMS_2$ by the end of iteration $j^\ast+1$.
The following discussions are with respect to the item set $[j^\ast+1]$ corresponding to the first $j^\ast+1$ iterations.
First, we note that, among the items allocated between iteration $j_2^{(1)}$ and iteration $j^\ast$, agent $2$ has taken $j_2^{(1)},j_2^{(1)}+1,\ldots,j^\ast-1$, and agent $1$ has taken $j^\ast$.
In addition, agent $2$ must take $j^\ast+1$ given that agent $1$ cannot take items in two consecutive iterations.
Thus, we have
\begin{align*}
    d_2(A_2)&\geq d_2\left(\{j_2^{(1)}\}\cup\left([j^\ast-1]\setminus [j_2^{(1)}]\right)\cup\{j^\ast+1\}\right)\\
    &=d_2(j_2^{(1)})+(j^\ast-1-j_2^{(1)})\cdot \epsilon_2d_2(j_2^{(1)})+d_2(j_2^{(1)}))\\
    &=\left(2+(j^\ast-1-j_2^{(1)})\epsilon_2\right)d_2(j_2^{(1)}).
\end{align*}
Next, we find an upper bound to $\MMS_2$ by considering the bi-partition $(B_1,B_2)$ of $[j^\ast+1]$ where $B_1=[j_2^{(1)}+\lfloor\frac{j^\ast-j_2^{(1)}}2\rfloor]$ and $B_2=[j^\ast+1]\setminus B_1$.
By our definition of $d_2$, we have
$d_2([j_2^{(1)}-1])\leq \epsilon_2\cdot d_2(j_2^{(1)})$ with equality holds if and only if $j_2^{(1)}>1$.
Thus, $d_2(B_1)\leq \epsilon_2d_2(j_2^{(1)})+d_2(j_2^{(1)})+\lfloor\frac{j^\ast-j_2^{(1)}}2\rfloor\cdot\epsilon_2 d_2(j_2^{(1)})=(1+(\lfloor\frac{j^\ast-j_2^{(1)}}2\rfloor+1)\epsilon_2)d_2(j_2^{(1)})$, and $d_2(B_2)=d_2(j^\ast+1)+d_2([j^\ast]\setminus [j_2^{(1)}+\lfloor\frac{j^\ast-j_2^{(1)}}2\rfloor])=d_2(j_2^{(1)})+\lceil\frac{j^\ast-j_2^{(1)}}2\rceil\cdot\epsilon_2d_2(j_2^{(1)})\leq (1+(\lfloor\frac{j^\ast-j_2^{(1)}}2\rfloor+1)\epsilon_2)d_2(j_2^{(1)})$.
Therefore,
$$\MMS_2\leq \left(1+\left(\left\lfloor\frac{j^\ast-j_2^{(1)}}2\right\rfloor+1\right)\epsilon_2\right)d_2\left(j_2^{(1)}\right)\leq \left(1+\left(\frac{j^\ast-j_2^{(1)}}2+1\right)\epsilon_2\right)d_2\left(j_2^{(1)}\right).$$
Putting together,
\begin{align*}
    d_2(A_2)-(2-\varepsilon)\MMS_2&\geq\left(\left(2+(j^\ast-1-j_2^{(1)})\epsilon_2\right)-(2-\varepsilon)\left(1+\left(\frac{j^\ast-j_2^{(1)}}2+1\right)\epsilon_2\right)\right)\cdot d_2\left(j_2^{(1)}\right)\\
    &=\left(\varepsilon-3\varepsilon_2+\varepsilon\varepsilon_2\left(\frac{j^\ast-j_2^{(1)}}2+1\right)\right)\cdot d_2\left(j_2^{(1)}\right)\\
    &>(\varepsilon-3\epsilon_2)\cdot d_2\left(j_2^{(1)}\right)\\
    &\geq0,
\end{align*}
which concludes the proof.

\paragraph{Key insights in the proof, and on generalization to more agents.}
In the proof above, we have constructed agent $1$'s disutility function such that agent $1$ can never take items in two consecutive iterations.
This ensures that agent $2$ must take items ``frequently enough''.
Specifically, agent $2$ must take at least one item in every two consecutive iterations.
Agent $2$'s disutility function is then defined by exploiting this.

Now, suppose there are $n$ agents and the algorithm only allocates items among agents $1$ and $2$.
Let $\MMS_1$ and $\MMS_2$ be defined with $n$-partitions (instead of bi-partitions in the proof above).
By the same construction of $d_1(\cdot)$, we can see that agent $1$ cannot take items in $n$ consecutive iterations to avoid a disutility of $(n-\varepsilon)\cdot\MMS_1$.
Therefore, agent $2$ must take at least one item in every $n$ consecutive iterations.
We exploit this fact and define $d_2$ such that agent $2$'s disutility will eventually reach approximately $n\cdot\MMS_2$.
By using a similar construction in the proof above, we can make sure agent $2$'s disutility on the first item she receives is approximately $\MMS_2$ by setting the disutilities of all previous items to be small before agent $2$ takes her first item.
To exploiting that agent $2$ must take an item after at most $n-1$ iterations where agent $1$ has been taking items, we construct an increasing sequence of numbers $\{a_1,a_2,a_3,\ldots\}$ such that $a_{n+1}$ is agent $2$'s disutility for the first item she receives, i.e., $a_{n+1}=d_2(j_2^{(1)})$, and, for each $t$, $a_{t}$ is significantly larger than $\sum_{t'=1}^{t-1} a_{t'}$.
After agent $2$ takes her first item, we set the disutility of the next item for agent $2$ to $a_1$.
%In each future iteration, if agent $2$ has taken an item in the previous iteration, the disutility of the next item is set to $a_1$; if the previous item is taken by agent $1$, the disutility of the next item is set to $a_{t+1}$ where $a_t$ is agent $2$'s disutility for the item allocated in the previous iteration.
In each future iteration $j$, the disutility $d_2(j)$ depends on both $d_2(j-1)$ and how item $j-1$ was allocated: if agent $2$ has taken $j-1$, the disutility of $j$ is set to $a_1$; otherwise, set $d_2(j)=a_{t+1}$ where $a_t=d_2(j-1)$.
By defining $d_2(\cdot)$ in this way, we can ensure that, between every two iterations $j_2^{(w)}$ and $j_2^{(w+1)}$ where agent $2$ takes items, the items allocated to agent $1$ from iteration $j_2^{(w)}+1$ to iteration $j_2^{(w+1)}-1$ have a negligible overall disutility to agent $2$ compared with $j_2^{(w+1)}$.
Finally, by the time agent $2$'s disutility $d_2(A_2)$ becomes $n\cdot d_2(j_2^{(1)})$, the items taken by agent $1$ will still have a negligible overall disutility for agent $2$ compared with $d_2(A_2)$.
On the other hand, there is only one single ``large item'' $j_{2}^{(1)}$ by our construction, as an item with disutilty $a_{n+1}$ can never appear if agent $2$ must take at least one item for every $n$ consecutive iterations.
This implies $d_2(j_2^{(1)})$ is also approximately the value of $\MMS_2$ by the time agent $2$'s disutility reaches $n\cdot d_2(j_2^{(1)})$.
We then have $d_2(A_2)\approx n\cdot\MMS_2$.

We have seen that we can construct the valuations for $d_1$ and $d_2$ such that one agent $i$'s disutility will approach to $n\cdot\MMS_i$ if the algorithm insists on only allocating items among the first two agents.
This implies that, for some finite integer $T(2)$, we have to allocate an item to an agent from $\{3,\ldots,n\}$ after $T(2)-1$ items have been allocated among agents $1$ and $2$.
If we consider that the algorithm only allocates items among the first three agents $\{1,2,3\}$, we know that agent $3$ must take at least one item in the first $T(2)$ iterations.
We can further construct $d_1$ and $d_2$ such that agent $3$ must take at least one item in \emph{every} $T(2)$ iterations.
To achieve this, whenever agent $3$ takes an item, we can set the disutilities of the next items to agents $1$ and $2$ to be a large number $W$ such that all previous items have negligible disutilities.
Then we can replicate the construction in the proof above, but with every disutility multiplied by $W$, until agent $3$ takes an item again.

Having established that agent $3$ must take at least one item in every $T(2)$ consecutive iterations, we construct $d_3$ such that we will eventually have $d_3(A_3)\approx n\cdot\MMS_3$.
This can be done by a construction similar to $d_2$ in the proof above, with the $a$-sequence satisfying $a_{T(2)+1}=d_3(j_3^{(1)})$.
This shows that, for some finite integer $T(3)$, at least one of the first three agents $i\in\{1,2,3\}$ will have $d_i(A_i)\approx n\cdot\MMS_i$ after $T(3)$ iterations.
Therefore, we must let an agent outside $\{1,2,3\}$ take an item after at most $T(3)-1$ iterations.
This gives a natural inductive proof for Theorem~\ref{thm:lowerbound}.
\subsection{Proof of Theorem~\ref{thm:lowerbound}, General Case}
\label{sect:negative-general}
We prove Theorem~\ref{thm:lowerbound} in this section by formalizing the ideas described at the end of the previous section.
%Fix an arbitrary online algorithm, and all the discussions below are regarding this algorithm.
For each $n'=1,\ldots,n$ and each $\epsilon\in(0,1)$, define $T(n',\epsilon)$ be the integer such that
\begin{itemize}
    \item for any online algorithm that only allocates items among agents $\{1,\ldots,n'\}$, there exists a construction of $d_1,\ldots,d_{n'}$ (that can be adaptive on the algorithm's decision in items allocation for previous iterations) such that we have $d_i(A_i)>(n-\epsilon)\cdot\MMS_i$ for some $i\in[n']$ after some iteration $j$ with $j\leq T(n',\epsilon)$.
\end{itemize}
Our objective is to show that $T(n,\epsilon)$ is finite for any $\epsilon\in(0,1)$.
We will prove that $T(n',\epsilon)$ is finite for any $n'$ and any $\epsilon$ by induction on $n'$.
We will always set the disutility of the first item to $1$ for every agent.

The base step for $n'=1$ is trivial: we have $T(1,\epsilon)=n$ for any $\epsilon$, as we can construct $d_1$ such that all items have the same disutility $1$.

For the inductive step, suppose $T(n',\epsilon)$ is finite for any $\epsilon\in(0,1)$; we will show that $T(n'+1,\epsilon)$ is finite for any $\epsilon\in(0,1)$.
Fix an arbitrary online algorithm that only allocates items among agents in $\{1,\ldots,n',n'+1\}$, and the discussions in the remaining part of this proof are with respect to this algorithm.
%Same as it is in the previous section, let $M[t]$ be the set of items allocated in the first $t$ iterations and $m[t]$ be the item allocated in the $t$-th iteration.
%We will also write $A_i$ and $\MMS_i$ instead of $A_i[t]$ and $\MMS_i[t]$ for notation simplicity.
We first define the disutility functions for the first $n'$ agents, and assume for this moment that $d_{n'+1}$ is arbitrary.
Let $\epsilon_{1\sim n'}=\frac\epsilon{n}$.
Let $d_1',\ldots,d_{n'}'$ be the construction of the disutility functions corresponding to the induction hypothesis on $T(n',\epsilon_{1\sim n'})$.
For each $i=1,\ldots,n'$, we define $d_i$ by defining $d_i(j)$ for each item $j$ as follows.
Let
$$d_i(j)=d_i'(j)$$
if agent $n'+1$ has not taken any item before the $j$-th iteration;
otherwise, let
$$d_i(j)=d_i'(j-j^\ast)\cdot \frac{d_i([j^\ast])}{\epsilon_{1\sim n'}},$$
where $j^\ast$ is the most recent iteration that agent $n'+1$ has taken an item.
Notice that the above equation is well-defined: the algorithm only allocates items among $\{1,\ldots,n'\}$ after the $j^\ast$-th iteration, and we can view the allocation of items in $[j]\setminus[j^\ast]$ as an output of a ``sub-algorithm'' that only allocates items among $\{1,\ldots,n'\}$, so $d_i'(j-j^\ast)$ in the equation above is well-defined.

We first show the following proposition, which says that agent $n'+1$ must take at least one item in every $T(n',\epsilon_{1\sim n'})$ consecutive iterations.

\begin{proposition}
    For $d_1,\ldots,d_{n'}$ defined above, regardless of the definition of $d_{n'+1}$, if there exists $T(n',\epsilon_{1\sim n'})$ consecutive iterations where the algorithm only allocates items among agents in $\{1,\ldots,n'\}$, there exists an agent $i\in[n']$ such that $d_i(A_i)> (n-\epsilon)\cdot\MMS_i$ in one of these $T(n',\epsilon_{1\sim n'})$ iterations. 
\end{proposition}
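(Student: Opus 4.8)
The plan is to reduce the statement to the induction hypothesis by viewing the items that arrive after agent $n'+1$ was last served as a rescaled copy of the level-$n'$ hard instance. Given any block of $T:=T(n',\epsilon_{1\sim n'})$ consecutive iterations allocated entirely inside $\{1,\ldots,n'\}$, let $j^\ast$ be the last iteration before this block in which agent $n'+1$ received an item, with the convention $j^\ast=0$ if there is none; by maximality of $j^\ast$, every iteration in $\{j^\ast+1,\ldots,j^\ast+T\}$ is also allocated inside $\{1,\ldots,n'\}$, so it suffices to argue for this anchored block. For $i\in[n']$ and $j>j^\ast$ the definition gives $d_i(j)=W_i\cdot d_i'(j-j^\ast)$ with $W_i=d_i([j^\ast])/\epsilon_{1\sim n'}$ (and $W_i=1$, $d_i=d_i'$ if $j^\ast=0$); in particular $d_i(j^\ast+1)=W_i$, since the recursive construction always assigns disutility $1$ to the first item.

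Next I would invoke the induction hypothesis. The restriction of the given online algorithm to items $j^\ast+1,j^\ast+2,\ldots$ is itself an online algorithm allocating only among $\{1,\ldots,n'\}$, and the adversary may choose $d_1',\ldots,d_{n'}'$ adaptively, so this is exactly the setting defining $T(n',\epsilon_{1\sim n'})$. Since scaling agent $i$'s disutilities by the fixed constant $W_i$ scales both $d_i(A_i)$ and her min-max share by the same $W_i$ and hence preserves ratios, the definition of $T(n',\epsilon_{1\sim n'})$ produces an iteration $j\in\{j^\ast+1,\ldots,j^\ast+T\}$ and an agent $i\in[n']$ with $d_i(A_i\cap\{j^\ast+1,\ldots,j\})>(n-\epsilon_{1\sim n'})\mu_i$, where $\mu_i$ denotes agent $i$'s min-max share computed over the item set $\{j^\ast+1,\ldots,j\}$ only.

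It remains to transfer this overshoot to the full item set $[j]$. First, $d_i(A_i)\ge d_i(A_i\cap\{j^\ast+1,\ldots,j\})$, because adding items cannot decrease disutility. Second, taking an optimal $n$-partition of $\{j^\ast+1,\ldots,j\}$ that witnesses $\mu_i$ and dumping all of $[j^\ast]$ into one of its bundles yields an $n$-partition of $[j]$ whose largest bundle exceeds $\mu_i$ by at most $d_i([j^\ast])$, so $\MMS_i\le\mu_i+d_i([j^\ast])$. Third, $d_i([j^\ast])=\epsilon_{1\sim n'}W_i$ is negligible: the single item $j^\ast+1$ is a one-item bundle, so $\mu_i\ge W_i$, and therefore $d_i([j^\ast])\le\epsilon_{1\sim n'}\mu_i$. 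Combining, $d_i(A_i)>\frac{n-\epsilon_{1\sim n'}}{1+\epsilon_{1\sim n'}}\MMS_i$, and a suitably small choice of $\epsilon_{1\sim n'}$ relative to $\epsilon$ makes this coefficient at least $n-\epsilon$, which is the claimed conclusion. The choice $\epsilon_{1\sim n'}=\epsilon/n$ works because the base case $n'=1$ already attains the \emph{exact} ratio $n$ (so the first rescaling has the full $\epsilon$ of slack to absorb), and beyond the base case the firing agent's $\mu_i$ is in fact far larger than $W_i$, so the additional loss incurred at each of the $n-1$ recursive levels stays strictly below $\epsilon$.

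The step I expect to be the main obstacle is precisely this $\epsilon$-accounting: with only the crude bound $\mu_i\ge W_i$, the cumulative loss from the repeated cleanups is just barely too large, so one must either strengthen the induction hypothesis to record that the ratio reached at level $n'$ is $n$ minus only a small fraction of the parameter, or exploit the structure of the constructed $d_1',\ldots,d_{n'}'$ (namely that by the time the bad event is triggered the instance already contains items much larger than the first one). A secondary, bookkeeping-level point is to justify cleanly that a long $(n'+1)$-free stretch of iterations contains the anchored block $\{j^\ast+1,\ldots,j^\ast+T\}$ and that the induced sub-algorithm on the items after $j^\ast$ legitimately falls under the level-$n'$ guarantee.
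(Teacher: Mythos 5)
Your proposal follows essentially the same route as the paper's proof: anchor at the last iteration $j^\ast$ in which agent $n'+1$ received an item (so the anchored block $\{j^\ast+1,\ldots,j^\ast+T(n',\epsilon_{1\sim n'})\}$ is allocated inside $[n']$), view the suffix as a $W_i$-scaled copy of the level-$n'$ construction with $W_i=d_i([j^\ast])/\epsilon_{1\sim n'}$, invoke the induction hypothesis on the induced sub-algorithm, bound $\MMS_i\le\mu_i+d_i([j^\ast])$ by dumping $[j^\ast]$ into one bundle of the witnessing $n$-partition of the suffix, and use $\mu_i\ge W_i$ (i.e.\ $\MMS_i'\ge d_i'(1)=1$) to conclude $d_i(A_i)>\frac{n-\epsilon_{1\sim n'}}{1+\epsilon_{1\sim n'}}\MMS_i$ — exactly the chain in the paper.

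The one step you hedge on, the $\epsilon$-accounting, is indeed the only delicate point, but it is settled by the choice of constant rather than by a strengthened induction hypothesis or extra structural facts. What is needed is $\frac{n-\epsilon_{1\sim n'}}{1+\epsilon_{1\sim n'}}\ge n-\epsilon$, equivalently $\epsilon_{1\sim n'}\le\frac{\epsilon}{n+1-\epsilon}$. Your conjecture that $\epsilon_{1\sim n'}=\epsilon/n$ suffices (which is also what the paper asserts) is in fact false for $\epsilon\in(0,1)$: $\frac{n-\epsilon/n}{1+\epsilon/n}=\frac{n^2-\epsilon}{n+\epsilon}<\frac{n^2-\epsilon^2}{n+\epsilon}=n-\epsilon$, and the base-case argument you sketch does not repair this at higher levels, where the hypothesis only delivers the ratio $n-\epsilon_{1\sim n'}$. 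The correct reading of your ``suitably small choice'' is simply to take, say, $\epsilon_{1\sim n'}=\epsilon/(n+1)$; nothing else in the construction depends on the exact value (only $\epsilon_{1\sim n'}<\epsilon$ and finiteness of $T(n',\epsilon_{1\sim n'})$ are used elsewhere). Also, there is no cumulative loss across the recursion to track: each level chooses its own sub-parameter as a function of its own target $\epsilon$, so the single per-level inequality above is all that is required. Your anchoring argument and the justification that the sub-allocation after $j^\ast$ falls under the level-$n'$ guarantee are handled in the same (largely implicit) way in the paper and are fine.
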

\begin{proof}
    If these $T(n',\epsilon_{1\sim n'})$ iterations are the first $T(n',\epsilon_{1\sim n'})$ iterations, the proposition can be easily implied by the induction hypothesis, by noticing $\epsilon_{1\sim n'}<\epsilon$.
    Otherwise, let $j^\ast$ be the iteration where agent $n'+1$ has taken an item and the next $T(n',\epsilon_{1\sim n'})$ items are allocated among $\{1,\ldots,n'\}$.
    For every $j>j^\ast$, let $A_i'=A_i\cap\left([j]\setminus [j^\ast]\right)$ be the set of items allocated to agent $i$ after the $j^\ast$-th iteration.
    Let $\MMS_i'$ be the $\MMS$ threshold for agent $i$ defined based on item set $[j]\setminus[j^\ast]$ and disutility function $d_i'$.
    By the induction hypothesis, we have $d_i'(A_i')>(n-\epsilon_{1\sim n'})\MMS_i'$ for some agent $i$ and at some iteration $j>j^\ast$.
    The discussions below are regarding this agent $i$ and this iteration $j$.
    
    By our definition of $d_i$, we have 
    $$d_i(A_i)\geq d_i\left(A_i'\right)=\frac{d_i([j^\ast])}{\epsilon_{1\sim n'}}\cdot d_i'(A_i')>\frac{d_i([j^\ast])}{\epsilon_{1\sim n'}}\cdot(n-\epsilon_{1\sim n'})\MMS_i'.$$
    On the other hand, we have $\MMS_i\leq d_i([j^\ast])+\frac{d_i([j^\ast])}{\epsilon_{1\sim n'}}\cdot\MMS_i'$ by considering adding the item set $[j^\ast]$ to the bundle with the largest disutility in the $n$-partition of $[j]\setminus[j^\ast]$ defining $\MMS_i'$.
    Moreover, since $d_i'(1)=1$, we have $\MMS_i'\geq1$, so $\MMS_i\leq d_i([j^\ast])\cdot(1+\frac1{\epsilon_{1\sim n'}}\MMS_i')\leq d_i([j^\ast])\cdot(1+\frac1{\epsilon_{1\sim n'}})\cdot \MMS_i'$.
    Substituting this to $d_i(A_i)$, we have
    $$d_i(A_i)>\frac{d_i([j^\ast])}{\epsilon_{1\sim n'}}(n-\epsilon_{1\sim n'})\cdot\frac{1}{d_i([j^\ast])\cdot (1+\frac1{\epsilon_{1\sim n'}})}\MMS_i=\frac{n-\epsilon_{1\sim n'}}{1+\epsilon_{1\sim n'}}\cdot\MMS_i>(n-\epsilon)\MMS_i,$$
    where the last inequality is easily implied by $\epsilon_{1\sim n'}=\frac\epsilon{n}$.
\end{proof}

To conclude the inductive step, it remains to construct the disutility function $d_{n'+1}$ such that we have $d_{n'+1}(A_{n'+1})>(n-\epsilon)\cdot\MMS_{n'+1}$ after finitely many iterations, if it is required that agent $n'+1$ takes at least one item in every $T(n',\epsilon_{1\sim n'})$ consecutive iterations.
Notice that this requirement implies $j_{n'+1}^{(1)}$ must exist (in fact, $j_{n'+1}^{(1)}\leq T(n',\epsilon_{1\sim n'})$).
Set $\epsilon_{n'+1}=\frac\epsilon{n(n+3)}$.

We first define $d_{n'+1}(j)$ for every $j\leq j_{n'+1}^{(1)}$.
Let
$d_{n'+1}(1)=1$ and $d_{n'+1}(j)=d_{n'+1}([j-1])/\epsilon_{n'+1}$ for every $j=2,\ldots,j_{n'+1}^{(1)}$.
We then define $d_{n'+1}(j)$ for each $j>j_{n'+1}^{(1)}$.
Let $\{a_1,a_2,a_3,\ldots\}$ be an increasing sequence of positive rational numbers such that
\begin{enumerate}
    \item $a_t>\frac1{\epsilon_{n'+1}}\sum_{t'=1}^{t-1}a_{t'}$ holds for every $t\geq 2$, and
    \item $a_{T(n',\epsilon_{1\sim n'})+1}=d_i(j_{n'+1}^{(1)})$.
\end{enumerate}
For each $j> j_{n'+1}^{(1)}$, let $d_{n'+1}(j)=a_s$ if the most recent iteration where agent $n'+1$ takes an item is at iteration $j-s$.
This completes the definition of $d_{n'+1}$.

To complete the proof for the inductive step, we will show that $d_{n'+1}(A_{n'+1})>(n-\epsilon)\cdot\MMS_{n'+1}$ right after the iteration where $d_{n'+1}(A_{n'+1})\geq n\cdot d_{n'+1}(j_{n'+1}^{(1)})$.
Notice that this is sufficient: there will be an iteration where $d_{n'+1}(A_{n'+1})\geq n\cdot d_{n'+1}(j_{n'+1}^{(1)})$, as each item taken by agent $n'+1$ has disutility at least $a_1$ (which is a fixed positive number) and agent $n'+1$ needs to take at least one item in every $T(n',\epsilon_{1\sim n'})$ consecutive iterations (which means there is no iteration after which agent $n'+1$ stops taking items).
Given this lower bound to $d_{n'+1}(A_{n'+1})$, we will conclude the proof by finding an upper bound to $\MMS_{n'+1}$.
In the remaining part of the proof, we let $V=d_{n'+1}(j_{n'+1}^{(1)})$ for notation simplicity, and let $j^\dag$ be the first iteration after which $d_{n'+1}(A_{n'+1})\geq n\cdot V$.
Below, $A_{n'+1}$ and $\MMS_{n'+1}$ are all with respect to $j^\dag$.

We state the following two straightforward observations:
\begin{itemize}
    \item[O1:] $\epsilon_{n'+1}\cdot d_{n'+1}(A_{n'+1})\geq d_{n'+1}([j^\dag]\setminus A_{n'+1})$;
    \item[O2:] for any $j\neq j_{n'+1}^{(1)}$, we have $d_{n'+1}(j)\leq\epsilon_{n'+1}\cdot V$.
\end{itemize}
Intuitively, O1 says that items not in $A_{n'+1}$ have a negligible overall disutility for agent $n'+1$, and O2 says that every item except for $j_{n'+1}^{(1)}$ has a small disutility.
To see O1, notice that we have $d_{n'+1}([j_{n'+1}^{(1)}-1])\leq \epsilon_{n'+1}\cdot V$ by our definition of $d_{n'+1}$, and, for any $w>1$, by our definition of the $a$-sequence, the item set $[j_{n'+1}^{(w)}-1]\setminus [j_{n'+1}^{(w-1)}]$ has a disutility of at most $\epsilon_{n'+1}\cdot a_{s}$ for $s=j_{n'+1}^{(w)}-j_{n'+1}^{(w-1)}$ (i.e., the set of items allocated to the other agents between two iterations where agent $n'+1$ takes items has a small overall disutility that is at most $\epsilon_{n'+1}$ times agent $(n'+1)$'s disutility for the next item received).
Therefore, whenever there is a time period where agent $n'+1$ is not taking items, the next item received by agent $n'+1$ ``dominates'' the items that are received by the others within this time period.
This concludes O1.
For O2, its validity for $j<j_{n'+1}^{(1)}$ follows by definition of $d_{n'+1}$, and its validity for $j>j_{n'+1}^{(1)}$ is due to that agent $n'+1$ needs to take at least one item in every $T(n',\epsilon_{1\sim n'})$ consecutive iterations (so that every item received after $j_{n'+1}^{(1)}$ has value at most $a_{T(n',\epsilon_{1\sim n'})}$, which is less than $\epsilon_{n'+1}\cdot a_{T(n',\epsilon_{1\sim n'})+1}=\epsilon_{n'+1}\cdot V$).

Next, we prove the following upper bound $\MMS_{n'+1}<(1+(n+3)\epsilon_{n'+1})V$.
We first find an $n$-partition of $A_{n'+1}$ such that each set in the partition has disutility at most $(1+2\epsilon_{n'+1})V$, and we assume next that all items in $[j^\dag]\setminus A_{n'+1}$ are put in the set with the highest overall disutility.
This gives an $n$-partition of $[j^\dag]$ such that each set's overall disutility is bounded by $(1+2\epsilon_{n'+1})V+\epsilon_{n'+1}\cdot d_{n'+1}(A_{n'+1})$ due to O1.
By O2 and the definition of $j^\dag$, we have $d_{n'+1}(A_{n'+1})= d_{n'+1}([j^\dag-1]\cap A_{n'+1})+d_{n'+1}(j^\dag)< nV+\epsilon_{n'+1}V$.
Thus, this bound becomes
$(1+2\epsilon_{n'+1})V+\epsilon_{n'+1}\cdot(nV+\epsilon_{n'+1}V)<(1+(n+3)\epsilon_{n'+1})V$.
Therefore, it remains to show that there exists an $n$-partition of $A_{n'+1}$ such that each set in the partition has disutility at most $(1+2\epsilon_{n'+1})V$.
To see this, consider the greedy bin-packing algorithm that puts items into $n$ bins with size $(1+2\epsilon_{n'+1})V$ for each bin, and the greedy algorithm starts using the next bin only if no item can be put into any previous bins.
By O2, it is easy to see that every almost-full bin contains items of overall size at least $(1+\epsilon_{n'+1})V$.
If all bins are almost full, the total size of all items are at least $n(1+\epsilon_{n'+1})V>nV+\epsilon_{n'+1}V\geq d_{n'+1}(A_{n'+1})$, leading to a contradiction.
Therefore, the greedy bin-packing algorithm can terminate with at least one bin not being almost full.
This implies we can put all items into the $n$ bins.
We thus conclude that $\MMS_{n'+1}<(1+(n+3)\epsilon_{n'+1})V$.

Finally, since $d_{n'+1}(A_{n'+1})\geq nV$, we have
$$d_{n'+1}(A_{n'+1})-(n-\epsilon)\MMS_{n'+1}>nV-(n-\epsilon)(1+(n+3)\epsilon_{n'+1})V>0,$$
where the last inequality is due to $\epsilon_{n'+1}=\frac\epsilon{n(n+3)}$.
This concludes the inductive step.
\section{Optimize the Ratio in the Personalized Bi-Value Case}
In this section, we present a specialized algorithm to handle an important special case: the \emph{personalized bi-value} setting, where each agent has at most two distinct disutility values. In this case, we carefully optimize the constant in the competitive ratio through two key strategies:
\begin{itemize}
    \item We apply a more refined rounding scheme, rather than simply rounding to the nearest power of~$2$. Specifically, the rounding depends on the gap between the two disutility values of each agent: if they are close, we treat them as identical; if they are far apart, we avoid rounding altogether.
    \item In the generalized version, we allow $a$ and $b$ to be arbitrary parameters to capture all possible values of $n$ and to enable a clean, general formulation. However, note that when we reduce the original problem to the \game, we always have $a = 1$ and $b = \frac{1}{n-1}$. To optimize the constants, we can perform a more careful calculation that leverages this choice of $a$ and $b$.
    % 
    % However, to optimize the constant in the bi-value case, we set $a = 1$ and $b = \frac{1}{n}$ to explicitly align with the original greedy algorithm.
    % \wq{In the generalized version, we let $a$ and $b$ be two arbitrary independent parameters in $(0,1]$ to enable a clean, general definition. However, note that in the reduction from $\ALG$ to \game, $a,b\in (0,1]$ is always fix to $a=1,b=\frac{1}{n-1}$ where $a+b\leq \frac{n}{n-1}$. We capture this characteristic to get a smaller gain in \game. }
\end{itemize}

Formally speaking, on the algorithm side, we only modify the value-rounding procedure while keeping the remaining steps unchanged. Instead of simply rounding each disutility to the nearest power of two, we decide between two operations when each agent encounters her second distinct disutility value:
\begin{itemize}
    \item If $V_i^1 / V_i^2 \in [1/\alpha, \alpha]$, where $\alpha = \frac{\sqrt{3} -1}{2}$, we treat the two types of items as the same type, and only use one pressure parameter for it. 
    \item Otherwise, we treat the two types separately and associate different pressure parameters for them. 
\end{itemize}
\begin{comment}

% The difference between Algorithm~\ref{alg:2value} and Algorithm~\ref{alg:kvalue} is 

\begin{algorithm}[tb]
\caption{Greedy over Pressure under Bi-Value\wq{To be modified. Specific to type-1 and type-2?}}
\label{alg:2value}
\SetAlgoLined
\KwIn{$n$ agents and $m$ indivisible chores arriving online.
Each agent $i\in [n]$ has a disutility function $d_i(\cdot)$.
The value of $d_i(j)$ can only be known when item $j$ arrives\;}
\KwOut{an allocation $\A=(A_1, \dots, A_n)$\;}

Initialize the bundle $A_i$ as empty set, and $k_i \gets 0$ for each agent $i$\;

\For{each arriving item $j=1,\cdots, m$}{
    \For{each agent $i \in [n]$}{
    \If{$d_i(j)$ is a new disutility value of $i$} {
        $k_i \gets k_i + 1$, $\u_i(j) \gets k_i$, $H_i^{k_i} \gets 0$ \;
        We call the item with value $d_i(j)$ as the type-$k_i$ item of $i$ \; 
    }
    \Else {
        $\u_i(j) \gets$ the type of item $j$ of agent $i$ \;
    }
    }
    Let agent $i \gets \arg\min_{i \in [n]} H_i^{\u_i(j)}$ \;
    Allocate item $j$ to agent $i$: $A_{i}\leftarrow A_{\iota}\cup \{j\}$\label{line:choose_argmin} \;
    Update $H_{i}^{\u_{i}(j)} \gets H_{i}^{\u_{i}(j)} + 1$ \;
    \For{each agent $i' \in [n] \setminus \{{i}\}$}{
        Update $H_{i'}^{\u_{i'}(j)} \gets H_{i'}^{\u_{i'}(j)} - \frac{1}{n-1}$ \;
    }
}
\Return{the final allocation $\A$}\;
\end{algorithm}
\end{comment}
% Normalize the larger one as $1$. If the smaller one is smaller than $\sqrt{2}-1$, treat them as two kinds of value. Otherwise, treat them as the same value. 
We first state a more general version of \Cref{lem:bound_F}, tailored to leverage the specific choice $a = 1$ and $b = \frac{1}{n-1}$ in the reduction. The proof is almost identical to the previous one, except that we use the stronger expression $a+b \leq 1 + \frac{1}{n-1}=\frac{n}{n-1}$ instead of the earlier bound $a+b \leq 2$. For completeness, we include the proof at the end of this section.
% \wq{We first state a more general version of \Cref{lem:bound_F}. In the general version, becide $a,b\in (0,1]$, we further assme that $a+b\leq \beta$, where $\beta\leq 2$. }

\begin{lemma} [Generalized Version of \Cref{lem:bound_F}]
    \label{lem:F2}
    If further assume $a+b \leq \beta$, with some $0 \leq \beta \leq 2$, 
    $\forall x\in (-\frac12,\frac12]$, $t\in Z^+$, $F^t(x)\leq \frac{\beta k}{4}-\beta kx^2$. 
\end{lemma}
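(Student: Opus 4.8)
The plan is to reuse the entire induction machinery already developed for the case $\beta = 2$, namely Proposition~\ref{inv:stacking_f}, Lemma~\ref{lem:contiguous_interval}, and Lemma~\ref{lem:ub_of_contiguous_interval}, but carried through with the sharper bound $a + b \le \beta$ in place of $a + b \le 2$. The target invariant becomes $F^t(x) \le \frac{\beta k}{4} - \beta k x^2$, which reduces to the old bound $\frac{k}{2} - 2k x^2$ exactly when $\beta = 2$. First I would verify the base case: $f^0 \equiv 0$, so $F^0(x) = 0 \le \frac{\beta k}{4} - \beta k x^2$ for all $x \in (-1/2,1/2]$, since the right-hand side is minimized at $x = \pm 1/2$ where it equals $\frac{\beta k}{4} - \frac{\beta k}{4} = 0$. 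Then I would run the induction on $t$: assuming $F^t(x) \le \frac{\beta k}{4} - \beta k x^2$, I show $F^{t+1}$ satisfies the same bound.

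The reduction to a single contiguous operation interval (Lemma~\ref{lem:contiguous_interval}) is unchanged, since its proof only manipulates the sorting/permutation structure and never uses the value of $\beta$; so it suffices to re-examine the contiguous case. Following the proof of Lemma~\ref{lem:ub_of_contiguous_interval} verbatim, for $x$ to the left of $B$ I would obtain the same chain of inequalities, ending with
\[
(\delta + 2\ell)\, F^{t+1}(x) \le (\delta + 2\ell)\!\left(\tfrac{\beta k}{4} - \beta k x^2 - \beta k\,\ell(\ell+\delta) + \Gamma\right),
\]
where $\Gamma = a\cdot\min(\ell+\delta,|A|) - b\cdot\max(|B|-\ell,0)$, obtained by plugging the hypothesis $\frac{\beta k}{4} - \beta k(\cdot)^2$ into the two weighted inequalities and performing the same algebraic simplification (the completion-of-square step $(\delta+\ell)(x-\ell)^2 + \ell(x+\ell+\delta)^2 = (\delta+2\ell)(x^2 + \ell(\ell+\delta))$ is independent of $\beta$). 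Hence it remains to prove $\Gamma \le \beta k \,\ell(\ell+\delta)$. This is the only place where $\beta$ enters, and it is the main obstacle: in the original proof the bounds $|A| = \frac{b}{k(a+b)} \ge \frac{b}{2k}$ and $|B| = \frac{a}{k(a+b)} \ge \frac{a}{2k}$ came from $a+b \le 2$; now with $a + b \le \beta$ they strengthen to $|A| \ge \frac{b}{\beta k}$ and $|B| \ge \frac{a}{\beta k}$. I would then redo the same four-case analysis (on the signs of $\ell+\delta - |A|$ and $|B| - \ell$), substituting $\frac{b}{\beta k}$ for $\frac{b}{2k}$, $\frac{a}{\beta k}$ for $\frac{a}{2k}$, $a \cdot |A| = b \cdot |B|$ throughout, and $a + b \le \beta$ wherever the factor $2$ appeared; in each case the inequality $\Gamma \le \beta k\,\ell(\ell+\delta)$ follows by exactly the computation already shown, with $2k$ replaced by $\beta k$ and the impossible fourth case (both $C_R$ and $D_L$ nonempty) ruled out by the same monotonicity argument.

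Finally, the reduction of the "$x$ to the right of $A$" case to the "$x$ to the left of $B$" case via the reflection $\tilde f^t(y) = -f^t(-y)$ and the swapped operation $(a',b',A',B') = (b,a,-B,-A)$ is unchanged, since $\tilde F^{t+1}(-x) = F^{t+1}(x)$ and the reflected bound $\frac{\beta k}{4} - \beta k(-x)^2$ equals $\frac{\beta k}{4} - \beta k x^2$. This closes the induction, giving $F^t(x) \le \frac{\beta k}{4} - \beta k x^2$ for all $t$ and all $x \in (-1/2,1/2]$, as claimed. (If one additionally wants the analogue of Lemma~\ref{lem:bound_F}, the same limiting argument yields $\max_{x \in I} f^t(x) = \lim_{\delta \to 0^+} F^t(1/2-\delta)/\delta \le \lim_{\delta \to 0^+}\big(\frac{\beta k}{4} - \beta k(1/2-\delta)^2\big)/\delta = \beta k$, i.e.\ when $\beta = \frac{n}{n-1}$ and $k$ is the post-rounding count, the adversary's gain in the relevant stacking game is at most $\frac{nk}{n-1}$.)
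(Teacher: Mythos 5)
Your proposal is correct and follows essentially the same route as the paper's own proof: keep the contiguous-interval reduction (which is $\beta$-independent), rerun the weighted-sum derivation of Lemma~\ref{lem:ub_of_contiguous_interval} with the new invariant $\frac{\beta k}{4}-\beta k x^2$, and reduce everything to $\Gamma \le \beta k\,\ell(\ell+\delta)$, verified by the same four-case analysis with $a+b\le\beta$ replacing $a+b\le 2$ (and the symmetric reflection handling $x$ to the right of $A$). No gaps; this matches the paper's argument step for step.
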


% \begin{lemma}
%     If $\forall x\in (-\frac12,\frac12],t\in Z^+, F^t(x)\leq \frac{1+1/(n-1)}{2}-2(1+\frac{1}{n-1})x^2$, $H_i^{\u}\leq 2+\frac{1}{n-1},\forall i\in [n],\u\in \{1,2\}$ always holds.
% \end{lemma}
% \begin{proof}
%     According to the reduction from the Algorithm to \game, $H_i^{\u}$ is no larger than the average value on the interval $(1/2-\frac{1}{2n},1/2]$. Therefore, 
%     \begin{align*}
%         H_i^{\u}=\frac{F^t(1/2-1/2n)}{1/2n}
%         \leq \frac{\frac{n}{2(n-1)}-\frac{2n}{(n-1)}(\frac{1}{2}-\frac{1}{2n})^2}{\frac1{2n}}
%         =2+\frac{1}{n-1}
%     \end{align*}
% \end{proof}

\begin{theorem}
    \label{thm:bi-value}
    There exists an algorithm that guarantees a $\min\{n, 2+\sqrt{3}\}$-MMS allocation for the online fair division problem under indivisible chores, in the special case where each agent has at most $2$ distinct disutilities.
\end{theorem}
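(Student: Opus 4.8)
The plan is to mirror the structure of the proof of \Cref{thm:Kvalued}, but with the refined rounding scheme and the tighter constants coming from $a=1$, $b=\frac{1}{n-1}$. First I would observe that the modified algorithm, like the original, reduces to a \game instance: when an item arrives, it touches $n$ pressures $H_i^{\u}$, and we set $A^t$ to be the leftmost of the $n$ corresponding subintervals and $B^t$ the remaining $n-1$, with $a^t=1$ and $b^t=\frac{1}{n-1}$. The only difference is the choice of $k$ for the \game: by the refined rounding, each agent effectively has at most $k=2$ pressure variables, but more importantly, whenever an agent \emph{does} keep two separate types, we know the ratio $V_i^1/V_i^2$ lies outside $[1/\alpha,\alpha]$, i.e. the smaller value is at most $\alpha$ times the larger, with $\alpha=\frac{\sqrt3-1}{2}$.

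The second step is to bound the adversary's gain in this \game with $k=2$ and with $a+b\le\frac{n}{n-1}=\beta$. By \Cref{lem:F2} (taking $\beta=n/(n-1)\le 2$), we have $F^t(x)\le\frac{\beta k}{4}-\beta k x^2$ for all $t$, and then the same argument as in \Cref{lem:bound_F} (evaluating $-\frac{dF^t}{dx}$ at $x=1/2$) yields $\max_x f^t(x)\le \beta k = \frac{2n}{n-1}$. Since $\frac{2n}{n-1}\le 4$ for $n\ge 2$ and in fact we may be more careful, the adversary's gain is bounded by a small constant; call it $\alpha_{\mathrm{gain}}$. Actually, since $H_i^\u$ must be of the form $\frac{n}{n-1}|A_i\cap M_i^\u| - \frac{1}{n-1}N_i^\u$, and $\frac{2n}{n-1}< 3$ would fail only for $n=2,3$, a slightly sharper bookkeeping gives that the relevant quantity $|A_i\cap M_i^\u| - \lceil N_i^\u/n\rceil$ is bounded by a concrete small integer. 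I would then run the analogue of \Cref{prop:mms_of_rounded}: for each agent $i$, sum $V_i^\u|A_i\cap M_i^\u|\le (\MMS_i^\u - V_i^\u) + c\cdot V_i^\u$ over the (at most two) types, apply \Cref{lem:mms_decomp}, and use that the second value is at most $\alpha$ times the first to bound $\sum_\u V_i^\u \le (1+\alpha)V_i^1 \le (1+\alpha)\MMS_i$. This gives $d_i(A_i)\le \MMS_i + c(1+\alpha)\MMS_i$ on the rounded instance.

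The third step handles the rounding loss. For the agents whose two values are merged (ratio inside $[1/\alpha,\alpha]$), treating both as a single type means we report one value, say the larger $V_i^1$, and the true disutilities differ from it by at most a factor $\alpha$; this contributes a multiplicative loss of at most $1/\alpha$ (or $\alpha$, depending on the rounding direction chosen). For agents with separated types there is no rounding at all, hence no loss. So overall $d_i(A_i)\le \frac{1}{\alpha}\cdot(\text{rounded bound})$ — and one must check the single-type case of the \game, $k=1$, gives an even smaller constant there. The constant $\alpha=\frac{\sqrt3-1}{2}$ is presumably chosen precisely so that $\frac{1}{\alpha}\cdot(\text{two-type constant})$ and (the merged-case constant) and the $k=1$ constant all balance out to $2+\sqrt3$; carrying through $\frac{1}{\alpha}=\frac{2}{\sqrt3-1}=\sqrt3+1$ and $1+\alpha = \frac{\sqrt3+1}{2}$, one verifies the arithmetic closes. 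Finally, combining with the trivial $n$-MMS bound of any allocation gives $\min\{n,2+\sqrt3\}$.

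\textbf{Main obstacle.} The delicate part is the constant-chasing: getting the \game gain bound tight enough for small $n$ (where $\frac{n}{n-1}$ is largest, namely $2$ at $n=2$) while simultaneously paying only the $1/\alpha$ rounding factor and the $(1+\alpha)$ geometric-sum factor, and having all three regimes (merged types, separated types, single type) land at exactly $2+\sqrt3$. I expect this forces the specific value $\alpha=\frac{\sqrt3-1}{2}$, and the bulk of the work is a careful re-derivation of the \game bound with $a=1,b=\frac1{n-1}$ fixed — reusing \Cref{lem:contiguous_interval} verbatim and redoing the case analysis in \Cref{lem:ub_of_contiguous_interval} with the sharper $a+b\le\frac{n}{n-1}$ — rather than any new conceptual ingredient.
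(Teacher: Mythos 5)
Your overall plan follows the paper's proof (same reduction to the \game with $a=1$, $b=\tfrac1{n-1}$, the generalized bound of \Cref{lem:F2} with $\beta=\tfrac n{n-1}$, and the case split on the value ratio at $\alpha=\tfrac{\sqrt3-1}2$), but there is a genuine gap in the quantitative core, and it is exactly the part you wave through with ``a slightly sharper bookkeeping'' and ``one verifies the arithmetic closes.'' Bounding the gain by $\max_x f^t(x)\le \beta k=\tfrac{2n}{n-1}=2+\tfrac{2}{n-1}$ via the derivative argument of \Cref{lem:bound_F} is too weak: plugging $H_i^\u\le \tfrac{2n}{n-1}$ into $H_i^\u=\tfrac n{n-1}|A_i\cap M_i^\u|-\tfrac1{n-1}N_i^\u$ gives only $|A_i\cap M_i^\u|\le \tfrac{N_i^\u}{n}+2$, and when $n\mid N_i^\u$ integrality then permits $\lceil N_i^\u/n\rceil+2$ items of a type, i.e.\ ``$+2$'' rather than ``$+1$.'' Balancing the resulting separated-case constant $4+3\alpha$ against the merged-case constant does not yield $2+\sqrt3$. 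The paper avoids this by not using the pointwise bound at $x=1/2$ at all: since the maximal pressure occupies a subinterval of length exactly $\tfrac1{nk}=\tfrac1{2n}$, it bounds $H_i^\u\le 2n\,F^t\!\left(\tfrac12-\tfrac1{2n}\right)=2+\tfrac1{n-1}$, and this strictly-below-integer-gap bound is what makes integrality deliver $|A_i\cap M_i^\u|\le\lceil N_i^\u/n\rceil+1$ and hence $3+2\alpha\le 2+\sqrt3$ in the separated case.

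A second, smaller gap is the merged case. Your assembly ``$d_i(A_i)\le \tfrac1\alpha\cdot(\text{rounded bound})$'' would, if carried out literally (rounded bound $\le \MMS_i'+V_i^1\le 2\MMS_i'$ and $\MMS_i'\le \MMS_i/\alpha$), give $2(\sqrt3+1)\approx 5.46$, which does not close to $2+\sqrt3$. The paper's accounting is finer: only the merged benchmark is inflated, $\MMS_i'\le \MMS_i/\alpha$, while the additive $V_i^1$ term is charged directly against the true benchmark via $V_i^1\le\MMS_i$, giving $1+\tfrac1\alpha=2+\sqrt3$ exactly at the threshold. (Also, there is no separate ``$k=1$'' game to check for merged agents: the reduction keeps $k=2$ globally, with unused/dummy pressures, so the single bound $2+\tfrac1{n-1}$ covers all agents.) With these two fixes your argument becomes the paper's proof; without them the stated bounds do not reach the claimed constant.
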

\begin{proof}
    According to Lemma~\ref{lem:F2}, by fixing $k=2$ and $\beta = \frac{n}{n-1}$, we have that, for all $t$ and $x$, 
    % $$
    %     F^t(x)\leq (1+\frac{1}{n-1})/2-2(1+\frac{1}{n-1})x^2.
    % $$
    $$
        F^t(x)\leq \frac{n}{2(n-1)}-\frac{2n}{n-1}x^2~.
    $$
    According to the reduction from the algorithm to the \game, for all $i\in [n]$ and $\u\in \{1,2\}$, 
    % $$
    % H_i^\u \leq \frac{F^t(1/2-\frac{1}{2n})}{\frac{1}{2n}} \leq 2+\frac{1}{n-1}~.
    % $$ 
    \begin{align*}
        H_i^\u &= \frac{F^t(\frac{1}{2}-\frac{1}{2n})}{\frac{1}{2n}}\\
        % &\leq \frac{(1+\frac{1}{n-1})/2-2(1+\frac{1}{n-1})\cdot (\frac{1}{2}-\frac{1}{2n})^2}{\frac{1}{2n}}\\
        &= 2n\cdot [\frac{n}{2(n-1)}-\frac{2n}{n-1}(\frac{1}{4}-\frac{1}{2n}+\frac{1}{4n^2})]\tag{$F^t(x)\leq \frac{n}{2(n-1)}-\frac{2n}{n-1}x^2$}\\
        &= \frac{2n}{n-1}-\frac{1}{n-1}\\
        &= 2+\frac{1}{n-1}
    \end{align*}
    Then we analyze the number of type-$\u$ items $i$ takes in the output allocation. 
    According to the definition of $H_i^\u$, 
    \begin{align*}
    H_i^\u 
    &= \bigl|\text{items in } M_i^\u \text{ allocated to } i\bigr| 
       - \frac{1}{n-1} \cdot \bigl|\text{items in } M_i^\u \text{ allocated to other agents}\bigr| \\
    &= |A_i \cap M_i^\u| 
       - \frac{1}{n-1} \cdot \bigl(N_i^\u - |A_i \cap M_i^\u|\bigr) 
    = \frac{n}{n-1} |A_i \cap M_i^\u| - \frac{1}{n-1} N_i^\u.
    \end{align*}
    The number of type-$\u$ items agent $i$ gets is 
    $$|A_i\cap M_i^\u| \le \frac{N_i^\u}{n}+\frac{n-1}{n}H_i^\u\le
    \frac{N_i^\u}{n} + \frac{n-1}{n}\cdot (2+\frac{1}{n-1}) = \frac{N_i^\u}{n} + 2-\frac1n<\lceil N_i^\u/n\rceil + 2~.$$
    Since $|A_i\cap M_i^j|$ is an integer, we can obtain that
    $$
    |A_i\cap M_i^\u| \le \lceil N_i^\u/n\rceil +1~.
    $$ 
    Thus, 
    $\forall \u\in \{1,2\}$, 
    $$
    V_i^\u \cdot |A_i\cap M_i^\u| \leq (\lceil N_i^\u/n\rceil +1)\cdot V_i^\u=\MMS_i^\u+V_i^\u~.
    $$

% We have $\MMS\geq (\MMS^1-1)+\MMS^2$.
Then we discuss the cases according to $V_i^1,V_i^2$ are of huge gap or not as follows. Without loss of generality, assume that $V_i^2$ is the smaller one and $V_i^2=\alpha\cdot V_i^1$ where $\alpha\in [0,1]$.
\begin{itemize}
    \item \textbf{Case I: $\alpha\leq \frac{\sqrt{3}-1}{2}$.}
    We have 
    \begin{align*}
        d(A_i) &= V_i^1 \cdot |A_i\cap M_i^1| + V_i^2 \cdot |A_i\cap M_i^2|\\
        &\leq (\MMS_i^1+V_i^1)+(\MMS_i^2+V_i^2)\\
        &= \MMS_i^1+\MMS_i^2+(1+\alpha)\cdot V_i^1\\
        &\leq \MMS_i+2(1+\alpha)\cdot V_i^1\tag{\Cref{lem:mms_decomp}}\\
        &\leq (3+2\alpha)\MMS_i\tag{$\MMS_i\geq\MMS_i^1\geq V_i^1$}\\
        &\leq (2+\sqrt{3})\MMS_i ~.\tag{$\alpha\leq \frac{\sqrt{3}-1}{2}$}
    \end{align*}
    \item \textbf{Case II: $\alpha> \frac{\sqrt{3}-1}{2}$.}
    In this case, the items with disutilities $V_i^1$ and $V_i^2$ are treated as a single type. Without loss of generality, we imagine that the algorithm treats their disutility as $V_i^1$, and we use $\MMS_i'$ to denote the MMS benchmark under the assumption that all such items indeed have disutility $V_i^1$. Then we have
    \begin{align*}
        d(A_i) 
        &\leq \MMS_i' + V_i^1 
        \leq \frac{\MMS_i}{\alpha} + V_i^1 
        \leq (2 + \sqrt{3}) \MMS_i~.
    \end{align*}

\end{itemize}
The discussion of the two cases concludes the lemma. 
\end{proof}

\begin{proof}[Proof of \Cref{lem:F2}]
Using the same argument as in \Cref{lem:contiguous_interval}, it suffices to consider the contiguous operation.
Hence, it remains to show the inductive step -- the upper bound still holds after every contiguous operation.
Our proof mainly follows the proof of \Cref{lem:ub_of_contiguous_interval}.
Similarly, using the argument on symmetry, it suffices to consider the case that $x$ is located to the left of $B$.
We define the upgrading and downgrading points in the same way as before.
Thereafter, we apply the same scaling method by multiplying \eqref{eqn:f_t_1} by $\delta+\ell$ and \eqref{ineq:ub_of_F_t_1} by $\ell$, then summing, and obtain the same inequality.
\begin{align*}
(\delta + 2\ell)\, F^{t+1}(x) 
&\le (\delta + \ell)\, F^t(x-\ell) 
   + \ell\, F^t(x+\ell+\delta)
   + (\delta+2\ell)\, \Gamma~,
\end{align*}
where $$\Gamma = a \cdot \min(\ell + \delta, |A|) 
- b \cdot \max(|B| - \ell, 0)$$ follows the same calculation of Lemma~\ref{lem:ub_of_contiguous_interval}.
Using the new inductive hypothesis that $F^t(x) \le \beta\cdot k/4 - \beta kx^2$ for any $x\in I$.
Applying the inductive hypothesis, we can obtain that
\begin{align*}
(\delta+ 2\ell)F^{t+1}(x) \le (\delta+2\ell) \left(\frac{k}{4}\cdot \beta - \beta\cdot kx^2 - \beta\cdot k\ell(\ell+\delta) + \Gamma\right)\,.
\end{align*}
Then it suffices to prove $\Gamma\le \beta\cdot k\ell\cdot (\ell+\delta)$, which is an generalization of the inequality of $\Gamma \le 2k\ell\cdot (\ell+\delta)$.
In particular, the previous proof regards the upper bounds of $a$ and $b$ as one.
As a result, we use the inequality $a+b\leq 2$ in the later proof. Now, we can replace it with the stronger one $a + b \leq \beta$. 
The analysis of the four cases can still be applied under the stronger inequality. 

\begin{itemize}
    \item $\ell + \delta \leq \abs{A}$ and $\abs{B} - \ell \geq 0$:
    In this case, follows the same calculation of Lemma~\ref{lem:ub_of_contiguous_interval}, we have 
    \begin{align*}
        2k \ell \cdot (\ell+\delta)-\Gamma
        = (2k\ell-a)\cdot (\ell + \delta)+b\cdot(\abs{B}-\ell)~. 
    \end{align*}
    $2k \ell \cdot (\ell+\delta)-\Gamma \geq 0$ holds if
    $2k\ell-a\geq 0$. If $2k\ell-a<0$,  
    Define $r=\abs{A}-(\ell+\delta)\geq 0$,
    we have
    Following the calculation of \Cref{lem:ub_of_contiguous_interval}, we can obtain that 
    \begin{align*}
    \beta \cdot k\ell \cdot (\delta+\ell)-\Gamma
    &\geq  \beta \cdot k\ell \cdot \left(\frac{b}{k\,(a+b)}-r\right)-a\cdot (\abs{A}-r)+a\cdot\abs{A}-b\ell
    % \tag{$\abs{A}=\frac{b}{k(a+b)}\geq \frac{b}{2k}$}\\
    % &\geq  2k\ell \cdot (\frac{b}{2k}-r)-a\cdot (\abs{A}-r)+a\cdot\abs{A}-b\ell\tag{$a+b \leq 2$}\\
    % &= (a-2k\ell)\cdot r\geq 0~. \tag{$2k\ell-a<0$, $r\geq 0$}
\end{align*} 
Since $a+b\le \beta$, it follows that
\begin{align*}
\beta \cdot k\ell \cdot (\delta+\ell)-\Gamma &\geq  \beta k\ell \cdot \left(\frac{b}{k\cdot \beta}-r\right)-a\cdot (\abs{A}-r)+a\cdot\abs{A}-b\ell\tag{$a+b \leq \beta$}\\
    &= (a-\beta k\ell)\cdot r\geq 0~. \tag{$\beta k\ell-a<0$, $r\geq 0$}
\end{align*}

This concludes the case.
    \item $\ell+\delta\leq \abs{A}$ and $\abs{B} - \ell < 0$: In this case, since $\ell>\abs{B}=\frac{a}{k(a+b)}\geq \frac{a}{\beta \cdot k}$ by $a + b \leq \beta$, 
    $$
        \Gamma = a\cdot (\ell+\delta) \leq \beta k\ell \cdot (\ell+\delta)~.
    $$
    
    \item $\ell+\delta> \abs{A}$ and $\abs{B} - \ell \geq 0$: In this case, since $\ell+\delta>\abs{A}=\frac{b}{k(a+b)}\geq\frac{b}{\beta\cdot k}$ by $a+b \leq \beta$, 
    $$
         \Gamma = a\abs{A}-b\abs{B}+b\ell = b\ell \leq \beta k\ell\cdot(\ell+\delta)~.
    $$
    \item $\ell + \delta > |A|$ and $|B| - \ell < 0$: Using the same argument, the case is still impossible.
    \qedhere
\end{itemize}
\end{proof}

\bibliographystyle{plainnat}
\bibliography{reference}

\begin{thebibliography}{44}
\providecommand{\natexlab}[1]{#1}
\providecommand{\url}[1]{\texttt{#1}}
\expandafter\ifx\csname urlstyle\endcsname\relax
  \providecommand{\doi}[1]{doi: #1}\else
  \providecommand{\doi}{doi: \begingroup \urlstyle{rm}\Url}\fi

\bibitem[Albers(1999)]{DBLP:journals/siamcomp/Albers99}
Susanne Albers.
\newblock Better bounds for online scheduling.
\newblock \emph{{SIAM} J. Comput.}, 29\penalty0 (2):\penalty0 459--473, 1999.
\newblock \doi{10.1137/S0097539797324874}.
\newblock URL \url{https://doi.org/10.1137/S0097539797324874}.

\bibitem[Aleksandrov and Walsh(2020{\natexlab{a}})]{DBLP:conf/aaai/AleksandrovW20}
Martin Aleksandrov and Toby Walsh.
\newblock Online fair division: {A} survey.
\newblock In \emph{The Thirty-Fourth {AAAI} Conference on Artificial Intelligence, {AAAI} 2020, The Thirty-Second Innovative Applications of Artificial Intelligence Conference, {IAAI} 2020, The Tenth {AAAI} Symposium on Educational Advances in Artificial Intelligence, {EAAI} 2020, New York, NY, USA, February 7-12, 2020}, pages 13557--13562. {AAAI} Press, 2020{\natexlab{a}}.
\newblock \doi{10.1609/AAAI.V34I09.7081}.
\newblock URL \url{https://doi.org/10.1609/aaai.v34i09.7081}.

\bibitem[Aleksandrov and Walsh(2020{\natexlab{b}})]{aleksandrov2020online}
Martin Aleksandrov and Toby Walsh.
\newblock Online fair division: A survey.
\newblock In \emph{Proceedings of the AAAI Conference on Artificial Intelligence}, volume~34, pages 13557--13562, 2020{\natexlab{b}}.

\bibitem[Amanatidis et~al.(2023)Amanatidis, Aziz, Birmpas, Filos{-}Ratsikas, Li, Moulin, Voudouris, and Wu]{DBLP:journals/ai/AmanatidisABFLMVW23}
Georgios Amanatidis, Haris Aziz, Georgios Birmpas, Aris Filos{-}Ratsikas, Bo~Li, Herv{\'{e}} Moulin, Alexandros~A. Voudouris, and Xiaowei Wu.
\newblock Fair division of indivisible goods: Recent progress and open questions.
\newblock \emph{Artif. Intell.}, 322:\penalty0 103965, 2023.
\newblock \doi{10.1016/J.ARTINT.2023.103965}.
\newblock URL \url{https://doi.org/10.1016/j.artint.2023.103965}.

\bibitem[Amanatidis et~al.(2024)Amanatidis, Filos-Ratsikas, and Sgouritsa]{amanatidis2024pushing}
Georgios Amanatidis, Aris Filos-Ratsikas, and Alkmini Sgouritsa.
\newblock Pushing the frontier on approximate efx allocations.
\newblock In \emph{Proceedings of the 25th ACM Conference on Economics and Computation}, pages 1268--1286, 2024.

\bibitem[Amanatidis et~al.(2025)Amanatidis, Lolos, Markakis, and Turmel]{amanatidis2025online}
Georgios Amanatidis, Alexandros Lolos, Evangelos Markakis, and Victor Turmel.
\newblock Online fair division for personalized $2 $-value instances.
\newblock \emph{arXiv preprint arXiv:2505.22174}, 2025.

\bibitem[Aziz et~al.(2017)Aziz, Rauchecker, Schryen, and Walsh]{DBLP:conf/aaai/AzizRSW17}
Haris Aziz, Gerhard Rauchecker, Guido Schryen, and Toby Walsh.
\newblock Algorithms for max-min share fair allocation of indivisible chores.
\newblock In Satinder Singh and Shaul Markovitch, editors, \emph{Proceedings of the Thirty-First {AAAI} Conference on Artificial Intelligence, February 4-9, 2017, San Francisco, California, {USA}}, pages 335--341. {AAAI} Press, 2017.
\newblock \doi{10.1609/AAAI.V31I1.10582}.
\newblock URL \url{https://doi.org/10.1609/aaai.v31i1.10582}.

\bibitem[Aziz et~al.(2020)Aziz, Moulin, and Sandomirskiy]{DBLP:journals/orl/AzizMS20}
Haris Aziz, Herv{\'{e}} Moulin, and Fedor Sandomirskiy.
\newblock A polynomial-time algorithm for computing a pareto optimal and almost proportional allocation.
\newblock \emph{Oper. Res. Lett.}, 48\penalty0 (5):\penalty0 573--578, 2020.
\newblock \doi{10.1016/J.ORL.2020.07.005}.
\newblock URL \url{https://doi.org/10.1016/j.orl.2020.07.005}.

\bibitem[Barman and Krishnamurthy(2019)]{DBLP:conf/aaai/BarmanK19}
Siddharth Barman and Sanath~Kumar Krishnamurthy.
\newblock On the proximity of markets with integral equilibria.
\newblock In \emph{The Thirty-Third {AAAI} Conference on Artificial Intelligence, {AAAI} 2019, The Thirty-First Innovative Applications of Artificial Intelligence Conference, {IAAI} 2019, The Ninth {AAAI} Symposium on Educational Advances in Artificial Intelligence, {EAAI} 2019, Honolulu, Hawaii, USA, January 27 - February 1, 2019}, pages 1748--1755. {AAAI} Press, 2019.
\newblock \doi{10.1609/AAAI.V33I01.33011748}.
\newblock URL \url{https://doi.org/10.1609/aaai.v33i01.33011748}.

\bibitem[Barman and Krishnamurthy(2020)]{DBLP:journals/teco/BarmanK20}
Siddharth Barman and Sanath~Kumar Krishnamurthy.
\newblock Approximation algorithms for maximin fair division.
\newblock \emph{{ACM} Trans. Economics and Comput.}, 8\penalty0 (1):\penalty0 5:1--5:28, 2020.
\newblock \doi{10.1145/3381525}.
\newblock URL \url{https://doi.org/10.1145/3381525}.

\bibitem[Bartal et~al.(1994)Bartal, Karloff, and Rabani]{DBLP:journals/ipl/BartalKR94}
Yair Bartal, Howard~J. Karloff, and Yuval Rabani.
\newblock A better lower bound for on-line scheduling.
\newblock \emph{Inf. Process. Lett.}, 50\penalty0 (3):\penalty0 113--116, 1994.
\newblock \doi{10.1016/0020-0190(94)00026-3}.
\newblock URL \url{https://doi.org/10.1016/0020-0190(94)00026-3}.

\bibitem[Bartal et~al.(1995)Bartal, Fiat, Karloff, and Vohra]{DBLP:journals/jcss/BartalFKV95}
Yair Bartal, Amos Fiat, Howard~J. Karloff, and Rakesh Vohra.
\newblock New algorithms for an ancient scheduling problem.
\newblock \emph{J. Comput. Syst. Sci.}, 51\penalty0 (3):\penalty0 359--366, 1995.
\newblock \doi{10.1006/JCSS.1995.1074}.
\newblock URL \url{https://doi.org/10.1006/jcss.1995.1074}.

\bibitem[Benade et~al.(2018)Benade, Kazachkov, Procaccia, and Psomas]{benade2018make}
Gerdus Benade, Aleksandr~M Kazachkov, Ariel~D Procaccia, and Christos-Alexandros Psomas.
\newblock How to make envy vanish over time.
\newblock In \emph{Proceedings of the 2018 ACM Conference on Economics and Computation}, pages 593--610, 2018.

\bibitem[Benad{\`{e}} et~al.(2024)Benad{\`{e}}, Kazachkov, Procaccia, Psomas, and Zeng]{DBLP:journals/ior/BenadeKPPZ24}
Gerdus Benad{\`{e}}, Aleksandr~M. Kazachkov, Ariel~D. Procaccia, Christos{-}Alexandros Psomas, and David Zeng.
\newblock Fair and efficient online allocations.
\newblock \emph{Oper. Res.}, 72\penalty0 (4):\penalty0 1438--1452, 2024.
\newblock \doi{10.1287/OPRE.2022.0332}.
\newblock URL \url{https://doi.org/10.1287/opre.2022.0332}.

\bibitem[Berger et~al.(2022)Berger, Cohen, Feldman, and Fiat]{berger2022almost}
Ben Berger, Avi Cohen, Michal Feldman, and Amos Fiat.
\newblock Almost full {EFX} exists for four agents.
\newblock In \emph{Proceedings of the AAAI Conference on Artificial Intelligence}, volume~36, pages 4826--4833, 2022.

\bibitem[Budish(2010)]{DBLP:conf/bqgt/Budish10}
Eric Budish.
\newblock The combinatorial assignment problem: approximate competitive equilibrium from equal incomes.
\newblock In Moshe Dror and Greys Sosic, editors, \emph{Proceedings of the Behavioral and Quantitative Game Theory - Conference on Future Directions, {BQGT} '10, Newport Beach, California, USA, May 14-16, 2010}, page 74:1. {ACM}, 2010.
\newblock \doi{10.1145/1807406.1807480}.
\newblock URL \url{https://doi.org/10.1145/1807406.1807480}.

\bibitem[Budish(2011)]{budish2011combinatorial}
Eric Budish.
\newblock The combinatorial assignment problem: Approximate competitive equilibrium from equal incomes.
\newblock \emph{Journal of Political Economy}, 119\penalty0 (6):\penalty0 1061--1103, 2011.

\bibitem[Caragiannis et~al.(2019)Caragiannis, Kurokawa, Moulin, Procaccia, Shah, and Wang]{caragiannis2019unreasonable}
Ioannis Caragiannis, David Kurokawa, Herv{\'e} Moulin, Ariel~D Procaccia, Nisarg Shah, and Junxing Wang.
\newblock The unreasonable fairness of maximum nash welfare.
\newblock \emph{ACM Transactions on Economics and Computation (TEAC)}, 7\penalty0 (3):\penalty0 1--32, 2019.

\bibitem[Chaudhury et~al.(2020)Chaudhury, Garg, and Mehlhorn]{chaudhury2020efx}
Bhaskar~Ray Chaudhury, Jugal Garg, and Kurt Mehlhorn.
\newblock {EFX} exists for three agents.
\newblock In \emph{Proceedings of the ACM Conference on Economics and Computation (EC)}, pages 1--19, 2020.

\bibitem[Conitzer et~al.(2017)Conitzer, Freeman, and Shah]{DBLP:conf/sigecom/ConitzerF017}
Vincent Conitzer, Rupert Freeman, and Nisarg Shah.
\newblock Fair public decision making.
\newblock In Constantinos Daskalakis, Moshe Babaioff, and Herv{\'{e}} Moulin, editors, \emph{Proceedings of the 2017 {ACM} Conference on Economics and Computation, {EC} '17, Cambridge, MA, USA, June 26-30, 2017}, pages 629--646. {ACM}, 2017.
\newblock \doi{10.1145/3033274.3085125}.
\newblock URL \url{https://doi.org/10.1145/3033274.3085125}.

\bibitem[Faigle et~al.(1989)Faigle, Kern, and Tur{\'{a}}n]{DBLP:journals/actaC/FaigleKT89}
Ulrich Faigle, Walter Kern, and Gy{\"{o}}rgy Tur{\'{a}}n.
\newblock On the performance of on-line algorithms for partition problems.
\newblock \emph{Acta Cybern.}, 9\penalty0 (2):\penalty0 107--119, 1989.
\newblock URL \url{https://cyber.bibl.u-szeged.hu/index.php/actcybern/article/view/3359}.

\bibitem[Feige et~al.(2021)Feige, Sapir, and Tauber]{DBLP:conf/wine/FeigeST21}
Uriel Feige, Ariel Sapir, and Laliv Tauber.
\newblock A tight negative example for {MMS} fair allocations.
\newblock In Michal Feldman, Hu~Fu, and Inbal Talgam{-}Cohen, editors, \emph{Web and Internet Economics - 17th International Conference, {WINE} 2021, Potsdam, Germany, December 14-17, 2021, Proceedings}, volume 13112 of \emph{Lecture Notes in Computer Science}, pages 355--372. Springer, 2021.
\newblock \doi{10.1007/978-3-030-94676-0\_20}.
\newblock URL \url{https://doi.org/10.1007/978-3-030-94676-0\_20}.

\bibitem[Fleischer and Wahl(2000)]{DBLP:conf/esa/FleischerW00}
Rudolf Fleischer and Michaela Wahl.
\newblock Online scheduling revisited.
\newblock In Mike Paterson, editor, \emph{Algorithms - {ESA} 2000, 8th Annual European Symposium, Saarbr{\"{u}}cken, Germany, September 5-8, 2000, Proceedings}, volume 1879 of \emph{Lecture Notes in Computer Science}, pages 202--210. Springer, 2000.
\newblock \doi{10.1007/3-540-45253-2\_19}.
\newblock URL \url{https://doi.org/10.1007/3-540-45253-2\_19}.

\bibitem[Foley(1967)]{Foley67}
Duncan~Karl Foley.
\newblock Resource allocation and the public sector.
\newblock \emph{Yale Economics Essays}, 7\penalty0 (1):\penalty0 45--98, 1967.

\bibitem[Gamow and Stern(1958)]{gamow1958puzzle}
G.~Gamow and M.~Stern.
\newblock \emph{Puzzle-math}.
\newblock Viking Press, 1958.
\newblock ISBN 9780670583355.
\newblock URL \url{https://books.google.com.hk/books?id=_vdytgAACAAJ}.

\bibitem[Garg et~al.(2025)Garg, Huang, and Segal{-}Halevi]{DBLP:conf/aaai/GargHS25}
Jugal Garg, Xin Huang, and Erel Segal{-}Halevi.
\newblock Improved maximin share approximations for chores by bin packing.
\newblock In Toby Walsh, Julie Shah, and Zico Kolter, editors, \emph{AAAI-25, Sponsored by the Association for the Advancement of Artificial Intelligence, February 25 - March 4, 2025, Philadelphia, PA, {USA}}, pages 13881--13888. {AAAI} Press, 2025.
\newblock \doi{10.1609/AAAI.V39I13.33518}.
\newblock URL \url{https://doi.org/10.1609/aaai.v39i13.33518}.

\bibitem[Gormley et~al.(2000)Gormley, Reingold, Torng, and Westbrook]{DBLP:conf/soda/GormleyRTW00}
Todd Gormley, Nick Reingold, Eric Torng, and Jeffery~R. Westbrook.
\newblock Generating adversaries for request-answer games.
\newblock In David~B. Shmoys, editor, \emph{Proceedings of the Eleventh Annual {ACM-SIAM} Symposium on Discrete Algorithms, January 9-11, 2000, San Francisco, CA, {USA}}, pages 564--565. {ACM/SIAM}, 2000.
\newblock URL \url{http://dl.acm.org/citation.cfm?id=338219.338608}.

\bibitem[Graham(1969)]{DBLP:journals/siamam/Graham69}
Ronald~L. Graham.
\newblock Bounds on multiprocessing timing anomalies.
\newblock \emph{{SIAM} Journal of Applied Mathematics}, 17\penalty0 (2):\penalty0 416--429, 1969.

\bibitem[He et~al.(2019)He, Procaccia, Psomas, and Zeng]{he2019achieving}
Jiafan He, Ariel~D Procaccia, Alexandros Psomas, and David Zeng.
\newblock Achieving a fairer future by changing the past.
\newblock In \emph{Proceedings of the 28th International Joint Conference on Artificial Intelligence}, pages 343--349, 2019.

\bibitem[Huang and Lu(2021)]{DBLP:conf/sigecom/HuangL21}
Xin Huang and Pinyan Lu.
\newblock An algorithmic framework for approximating maximin share allocation of chores.
\newblock In P{\'{e}}ter Bir{\'{o}}, Shuchi Chawla, and Federico Echenique, editors, \emph{{EC} '21: The 22nd {ACM} Conference on Economics and Computation, Budapest, Hungary, July 18-23, 2021}, pages 630--631. {ACM}, 2021.
\newblock \doi{10.1145/3465456.3467555}.
\newblock URL \url{https://doi.org/10.1145/3465456.3467555}.

\bibitem[Karger et~al.(1996)Karger, Phillips, and Torng]{DBLP:journals/jal/KargerPT96}
David~R. Karger, Steven~J. Phillips, and Eric Torng.
\newblock A better algorithm for an ancient scheduling problem.
\newblock \emph{J. Algorithms}, 20\penalty0 (2):\penalty0 400--430, 1996.
\newblock \doi{10.1006/JAGM.1996.0019}.
\newblock URL \url{https://doi.org/10.1006/jagm.1996.0019}.

\bibitem[Kulkarni et~al.(2025)Kulkarni, Mehta, and Shahkar]{kulkarni2025online}
Pooja Kulkarni, Ruta Mehta, and Parnian Shahkar.
\newblock Online fair division: Towards ex-post constant mms guarantees.
\newblock In \emph{Proceedings of the 26th ACM Conference on Economics and Computation}, pages 638--638, 2025.
\newblock \doi{10.48550/arXiv.2503.02088}.
\newblock URL \url{https://arxiv.org/abs/2503.02088}.

\bibitem[Lipton et~al.(2004)Lipton, Markakis, Mossel, and Saberi]{lipton2004approximately}
Richard~J Lipton, Evangelos Markakis, Elchanan Mossel, and Amin Saberi.
\newblock On approximately fair allocations of indivisible goods.
\newblock In \emph{Proceedings of the 5th ACM Conference on Electronic Commerce}, pages 125--131, 2004.

\bibitem[Neoh et~al.(2025)Neoh, Peters, and Teh]{neoh2025online}
Tzeh~Yuan Neoh, Jannik Peters, and Nicholas Teh.
\newblock Online fair division with additional information.
\newblock \emph{arXiv preprint arXiv:2505.24503}, 2025.

\bibitem[Plaut and Roughgarden(2020)]{plaut2020almost}
Benjamin Plaut and Tim Roughgarden.
\newblock Almost envy-freeness with general valuations.
\newblock \emph{SIAM Journal on Discrete Mathematics}, 34\penalty0 (2):\penalty0 1039--1068, 2020.

\bibitem[Procaccia et~al.(2024)Procaccia, Schiffer, and Zhang]{DBLP:conf/nips/ProcacciaS024}
Ariel~D. Procaccia, Ben Schiffer, and Shirley Zhang.
\newblock Honor among bandits: No-regret learning for online fair division.
\newblock In Amir Globersons, Lester Mackey, Danielle Belgrave, Angela Fan, Ulrich Paquet, Jakub~M. Tomczak, and Cheng Zhang, editors, \emph{Advances in Neural Information Processing Systems 38: Annual Conference on Neural Information Processing Systems 2024, NeurIPS 2024, Vancouver, BC, Canada, December 10 - 15, 2024}, 2024.
\newblock URL \url{http://papers.nips.cc/paper\_files/paper/2024/hash/17bae482066287ce54814514cf12f248-Abstract-Conference.html}.

\bibitem[Rudin~III and Chandrasekaran(2003)]{DBLP:journals/siamcomp/RudinC03}
John~F. Rudin~III and R.~Chandrasekaran.
\newblock Improved bounds for the online scheduling problem.
\newblock \emph{{SIAM} J. Comput.}, 32\penalty0 (3):\penalty0 717--735, 2003.
\newblock \doi{10.1137/S0097539702403438}.
\newblock URL \url{https://doi.org/10.1137/S0097539702403438}.

\bibitem[Steinhaus(1948)]{Steinhaus48}
Hugo Steinhaus.
\newblock The problem of fair division.
\newblock \emph{Econometrica}, 16\penalty0 (1):\penalty0 101--104, 1948.

\bibitem[Varian(1973)]{varian1973equity}
Hal~R Varian.
\newblock Equity, envy, and efficiency.
\newblock 1973.

\bibitem[Wang et~al.(2024)Wang, Li, and Lu]{wang2024improved}
Fangxiao Wang, Bo~Li, and Pinyan Lu.
\newblock Improved approximation of weighted mms fairness for indivisible chores.
\newblock In \emph{Proceedings of the Thirty-Third International Joint Conference on Artificial Intelligence}, pages 3014--3022, 2024.

\bibitem[Wang and Wei(2025)]{DBLP:journals/corr/abs-2505-24321}
Yuanyuan Wang and Tianze Wei.
\newblock Online fair allocations with binary valuations and beyond.
\newblock \emph{CoRR}, abs/2505.24321, 2025.
\newblock \doi{10.48550/ARXIV.2505.24321}.
\newblock URL \url{https://doi.org/10.48550/arXiv.2505.24321}.

\bibitem[Yamada et~al.(2024)Yamada, Komiyama, Abe, and Iwasaki]{DBLP:conf/aistats/YamadaKAI24}
Hakuei Yamada, Junpei Komiyama, Kenshi Abe, and Atsushi Iwasaki.
\newblock Learning fair division from bandit feedback.
\newblock In Sanjoy Dasgupta, Stephan Mandt, and Yingzhen Li, editors, \emph{International Conference on Artificial Intelligence and Statistics, 2-4 May 2024, Palau de Congressos, Valencia, Spain}, volume 238 of \emph{Proceedings of Machine Learning Research}, pages 3106--3114. {PMLR}, 2024.
\newblock URL \url{https://proceedings.mlr.press/v238/yamada24a.html}.

\bibitem[Zeng and Psomas(2020)]{zeng2020fairness}
David Zeng and Alexandros Psomas.
\newblock Fairness-efficiency tradeoffs in dynamic fair division.
\newblock In \emph{Proceedings of the 21st ACM Conference on Economics and Computation}, pages 911--912, 2020.

\bibitem[Zhou et~al.(2023)Zhou, Bai, and Wu]{zhou2023multi}
Shengwei Zhou, Rufan Bai, and Xiaowei Wu.
\newblock Multi-agent online scheduling: Mms allocations for indivisible items.
\newblock In \emph{International Conference on Machine Learning}, pages 42506--42516. PMLR, 2023.

\end{thebibliography}

\end{document}